\newtheorem{theorem}{Theorem}[section]
\newtheorem{lemma}[theorem]{Lemma}
\newtheorem{proposition}[theorem]{Proposition}
\newtheorem{corollary}[theorem]{Corollary}
\newtheorem{definition}{Definition}
\theoremstyle{definition}
\numberwithin{equation}{section}
\theoremstyle{definition}
\newtheorem{ciao}[theorem]{Remark}
\newtheorem{problem}[theorem]{Problem}
\newcommand{\N}{\mathbb{N}}
\newcommand{\Z}{\mathbb{Z}}
\newcommand{\T}{\mathbb{T}}
\newcommand{\R}{\mathbb{R}}
\newcommand{\C}{\mathbb{C}}
\newcommand{\1}{\mathbf{1}}  
\newcommand{\Id}{\mathbb{I}} 
\newcommand{\I}{\mathrm{i}}
\newcommand{\Hi}{{\mathcal{H}}}
\newcommand{\Ht}{{\mathcal{H}_{\tau}}}
\newcommand{\Hf}{\mathcal{H}_{\mathrm{f}}}
\newcommand{\B}{\mathcal{B}}
\newcommand{\Un}{\mathcal{U}}
\newcommand{\UB}{\mathcal{U}_{\rm BF}}
\newcommand{\UZ}{\tilde{\mathcal{U}}_{\rm BF}}
\newcommand{\J}{\mathcal{J}}
\newcommand{\Do}{\mathcal{D}}
\newcommand{\Sch}{\mathcal{S}}
\newcommand{\W}{\mathcal{W}}
\newcommand{\E}{{\mathcal E}}
\newcommand{\epsi}{\varepsilon}
\newcommand{\ph}{\varphi}
\newcommand{\la}{\lambda}
\newcommand{\ta}{\tau}
\newcommand{\Base}{\T_d^*}
\newcommand{\SU}{\mathcal{SU}(m)}
\newcommand{\su}{\mathfrak{su}(m)}
\newcommand{\indexd}{\in\{ 1,\ldots,d \}}
\newcommand{\virg}[1]{\lq\lq #1 \rq\rq}                
\newcommand{\ie}{{\sl i.e.\ }}
\newcommand{\eg}{{\sl e.g.\ }}
\newcommand{\expo}[1]{\mathrm{e}^{#1}}     
\newcommand{\inner}[2]{\left \langle  #1 \, , \,  #2 \right \rangle}
\newcommand{\norm}[1]{\| #1 \|}
\newcommand{\normHS}[1]{\| #1 \|}   
\newcommand{\transpose}[1]{ {#1}^{\intercal}}
\newcommand{\de}{{\partial}}
\newcommand{\set}[1]{ \left\{  #1 \right\}}
\newcommand{\inset}[1]{ \in \left \{  #1 \right \}}
\newcommand{\Vol}{ \mathrm{Vol}}
\def\({\left(}
\def\){\right)}
\def\<{\left\langle}
\def\>{\right\rangle}
\newcommand{\half}{\frac{1}{2}}
\DeclareMathOperator{\Tr}{Tr}                            
\DeclareMathOperator{\tr}{tr}                            
\DeclareMathOperator{\re}{Re} 
\DeclareMathOperator{\Ran}{Ran} \DeclareMathOperator{\ran}{Ran}
\DeclareMathOperator{\Span}{Span} \DeclareMathOperator{\diag}{diag}
\DeclareMathOperator{\Lap}{\Delta} \DeclareMathOperator{\dive}{div}
 \DeclareMathOperator{\Var}{Var}
\DeclareMathOperator{\curl}{curl}
\DeclareMathOperator{\supp}{supp}
\newcommand{\avint}{\fint}
\title
[Bloch bundles, Marzari-Vanderbilt functional and Wannier functions]
{Bloch bundles, Marzari-Vanderbilt functional and maximally
localized Wannier functions}
\author[G. Panati]
{Gianluca Panati}
\author[A. Pisante]
{Adriano Pisante}
\address[G. Panati]{Dipartimento di Matematica "G. Castelnuovo",  "La Sapienza" Università di Roma,
Piazzale A. Moro 2, 00185 Roma, Italy} \email{panati@mat.uniroma1.it}
\address[A. Pisante]{Dipartimento di Matematica "G. Castelnuovo", "La Sapienza" Università di Roma,
Piazzale A. Moro 2, 00185 Roma, Italy}
\email{pisante@mat.uniroma1.it}
\begin{document}
\vskip .2truecm

\begin{abstract}
We consider a periodic Schr\"{o}dinger operator and the composite
Wannier functions corresponding to a relevant family of its Bloch
bands, separated by a gap from the rest of the spectrum. We study
the associated localization functional introduced in \cite{MaVa} and
we prove some results about the existence and exponential
localization of its minimizers, in dimension $d \leq 3$. The proof
exploits ideas and methods from the theory of harmonic maps between
Riemannian manifolds.

\end{abstract}
\maketitle

\vskip -.5truecm
{\small \tableofcontents}

\section{Introduction}

Many transport properties of electrons in crystalline solids are
understood by the analysis of Schr\"{o}dinger operators in the form
\begin{equation}\label{Hamiltonian}
    H = -   \Delta +  V_{\Gamma}(x) \qquad \mbox{acting in }
    L^2(\R^d)
\end{equation}
where the function $V_{\Gamma}: \R^d \rightarrow \R$  is periodic
with respect to a Bravais lattice $\Gamma \simeq \Z^d$. The function
$V_{\Gamma}$ represents (in Rydberg units) the electrostatic
potential experienced by a test electron and generated by the ionic
cores of the crystalline solid and, in a mean-field approximation,
by the remaining electrons. We refer to \cite{CattoLeBrisLions,
CancesDelaurenceLewin} for the mathematical justification of such a
model in the reduced Hartree-Fock approximation.

A crucial problem in solid state physics is the construction of an
orthonormal basis, canonically associated to the operator $H$, consisting of
functions which are exponentially localized in space. Indeed, such a basis allows to
develop computational methods which scale linearly with the system
size \cite{Goedecker}, makes possible the description of the
dynamics by \emph{tight-binding} effective Hamiltonians \cite{NaMa,
HeSj_Scroedinger}, and plays a prominent role in the modern theories
of macroscopic polarization \cite{KSV,Re92} and of orbital
magnetization in crystalline solids \cite{ThoCerVanRes2005}.

A convenient basis has been proposed by Wannier \cite{Wannier},
and \emph{Wannier functions}  (see Definition \ref{Def Wannier}) are nowadays a fundamental tool in
solid state physics. The problem of proving the existence of
exponentially localized Wannier functions was raised in 1959 by the
Nobel Prize winner W. Kohn \cite{Kohn64}, who solved it in dimension $d=1$
in the case of a single isolated  Bloch band for a centrosymmetric
potential. The latter condition has been later removed by J. des
Cloizeaux \cite{Cl2}. In higher dimension, the problem has been
solved, always in the case of a single isolated Bloch band, by J.
des Cloizeaux \cite{Cl1, Cl2} for centrosymmetric potentials and
finally by G. Nenciu under general hypotheses \cite{Ne83}, see also
\cite{HeSj_Scroedinger} for a different proof and the review paper
\cite{Ne91}. As for dimension $d=1$, an alternative approach based
on the band position operator has been developed, yielding
exponential localization of Wannier functions even for non periodic
Schr\"odinger operators, see \cite{NeNe97, CoNe} and references
therein.

However,  in dimension $d > 1$ the Bloch bands of crystalline solids
are not, in general, isolated. Thus the interesting problem, in view
of real applications, concerns the case of \emph{multiband systems},
and in this context the more general notion of \emph{composite
Wannier functions} is relevant \cite{Bl,Cl1}. The existence of
exponentially localized composite Wannier functions has been proved
in \cite{Ne91} in dimension $d=1$.  As for $d>1$, this problem
remained unsolved until recently \cite{Pa2008, Wannier_letter_BPCM},
when an existence result was obtained by geometric methods.

To circumvent such conceptual difficulty, and in view of the
application to numerical simulations, the solid-state physics
community preferred to introduce the alternative notion of
\emph{maximally localized Wannier functions} \cite{MaVa}. The latter
are defined as the minimizers of a suitable localization functional,
known as the Marzari-Vanderbilt (MV) functional. In \cite{MaVa} it
is also conjectured that the minimizers, whenever they exist, are
exponentially localized.  While such an approach provided excellent
results from the numerical viewpoint, being nowadays an every-day
tool in computational physics \cite{Wannier review, Lyon
conference}, a mathematical analysis of the MV functional is still
missing. Our goal is to fill this gap.

In this paper, we prove preliminarily that minimizers of the MV functional do
exist for $d \leq 3$ in a suitable function space. More relevantly,
we prove that the minimizers of the MV functional are exponentially
localized. More precisely, let $m \geq 1$ be the number of Bloch
bands corresponding to the composite Wannier functions; then the
result is proved in three cases: if $m=1$ for any $d \geq 1$, if $ 1
\leq d \leq 2$ for any $m \geq 1$, and if $d=3$ under the constraint
$2 \leq m \leq 3$ (Theorem \ref{maintheorem}). The proofs are
dimension-dependent. In the first two cases, exponential
localization holds true for any stationary point of the MV
functional, \ie for any solution of the corresponding Euler Lagrange
equations, while in the three dimensional case the minimality
property is crucial. In view of that, when $d=3$ the constraint on
$m$ arises from the methods we exploit in the proof; we believe that
it is purely technical, since we do not see any physical reason
which prevents the result from being true for any $m \geq 1$. So
far, even if one is merely interested in almost-exponential
localization, \ie to prove that the maximally localized composite
Wannier functions decrease faster than the inverse of any
polynomial, the constraint on $m$ is needed in our approach.

The existence of a basis of  Wannier functions which are
exponentially localized in space is also relevant for the periodic
Pauli and Dirac operators. The $\Z^d$-symmetry of the latter
operators allows a direct integral decomposition analogous to the
Bloch-Floquet transform, yielding a structure very similar to the
one appearing for periodic Schr\"odinger operators. With few
modifications, our methods can also be used to discuss the
corresponding minimization problem in this context.


Mathematically, the MV functional can be rewritten, after
Bloch-Floquet transform and by exploiting the time-reversal symmetry
of the operator (\ref{Hamiltonian}) (\ie the fact that $H$ commutes with
the complex-conjugation operator),  as a perturbation
$\tilde{F}_{\rm MV}$ of the Dirichlet energy for maps from $\Base$
to $\Un(m)$, see (\ref{Dirichletenergy}), where $\Base \simeq
\R^d/(2 \pi \Z)^d$ is a $d$-dimensional flat torus and $\Un(m)
\subset \mathrm{M}_m(\C)$ is the unitary group. The exponential
localization of a minimizer of the MV functional is related to the
analyticity of the corresponding minimizer of $\tilde{F}_{\rm MV}$.
The existence of a minimizer of the latter functional follows
essentially  from the direct method of calculus of variations.

\noindent We prove the analyticity of the minimizers of
$\tilde{F}_{\rm MV}$ by adapting ideas and methods from the
regularity theory for harmonic maps, see \cite{CWY, LW, S} and
references therein. The crucial step is to prove that any minimizer
of $\tilde{F}_{\rm MV}$ is continuous. In the two dimensional case,
this fact is a consequence of the hidden structure of the nonlinear
terms in the Euler Lagrange equation for the $\tilde{F}_{\rm MV}$
functional. In the three dimensional case, the continuity follows
instead from the deeper fact that minimizers at smaller and smaller
scales look like minimizing harmonic maps from $\Base$ to $\Un(m)$.
We are able to prove that, for $m \leq 3$, the latter are actually
real-analytic (Theorem \ref{RegUnitons}), by showing constancy of
the tangent maps as in the important paper \cite{SU2}. As a
consequence, we obtain the continuity of the minimizers of
$\tilde{F}_{\rm MV}$ and, in turn, analytic regularity.


We hope that our result will contribute to a fruitful exchange of
ideas, problems and methods between the solid-state physics and the
mathematics community, in the study of Schr\"{o}dinger operators and
related physical phenomena.

\noindent \textbf{Acknowledgments.} We are grateful to Ch.\ Brouder,
G.\ Dell'Antonio and S.\ Teufel for a careful reading of the
manuscript and for useful comments. We are indebted to A. D'Andrea,
D. Fiorenza, G. Mondello and P. Papi for stimulating  discussions
about the geometric and Lie-theoretic aspects of the problem.


\section{Wannier functions and Bloch bundles}


\subsection{Periodic Schr\"{o}dinger operators and Bloch-Floquet transform}

In this section we recall, following \cite{PST2003}, that, in view of their invariance
under $\Gamma$-translations, Schr\"{o}dinger operators in the form \eqref{Hamiltonian} can be
decomposed as  a direct integral of simpler operators by the
(modified) Bloch-Floquet transform. A comparison with the formalism
of the classical Bloch-Floquet transform, appearing in most of the
physics literature, is summarized in Remark\,\ref{Rem physical
literature}.

The lattice $\Gamma$, corresponding to the Bravais lattice in
physics, is described as
$$
\Gamma =\Big\{ \gamma \in\R^d: \gamma = \textstyle{\sum_{j=1}^d}
n_j\,\gamma_j \,\,\,\mbox{for some }n_j \in \mathbb{Z} \Big\} \,,
$$ where $\{\gamma_1,\ldots,\gamma_d \}$ are fixed linearly independent vectors in $\R^d$. The dual
lattice, with respect to the ordinary inner product, is $ \Gamma^*
:= \set{ k \in \R^d: \ k \cdot \gamma \in 2 \pi \Z \mbox{ for all }
\gamma \in \Gamma}.$ To fix the notation, we denote by $Y$ the
centered fundamental domain of $\Gamma$, namely
$$
 Y = \Big\{ x\in\R^d: x=
\textstyle{\sum_{j=1}^d}\alpha_j\,\gamma_j
\,\,\,\mbox{for}\,\,\alpha_j \in
[-\textstyle{\frac{1}{2},\frac{1}{2}}]
 \Big\}\,.
$$
Analogously, we define the centered fundamental domain $Y^{*}$ of
$\Gamma^*$  by setting
$$
 Y^* = \Big\{ k \in\R^d:  k =
\textstyle{\sum_{j=1}^d} {k}^{\prime}_j \,\gamma^*_j
\,\,\,\mbox{for}\,\, {k}^{\prime}_j \in
[-\textstyle{\frac{1}{2},\frac{1}{2}}]
 \Big\}\,,
$$
where $\{ \gamma^*_j \}$ is the dual basis to $\set{\gamma_j}$, \ie
$\gamma^*_j \cdot \gamma_i = 2 \pi \delta_{j,i}$. When the opposite
faces of $Y^*$ are identified, one obtains the torus $\Base := \R^d
/ \Gamma^*$, equipped with the flat Riemannian metric induced by
$\R^d$.

For $\psi\in\Sch(\R^d)$, one defines the modified Bloch-Floquet
transform  as
\begin{equation} \label{Zak transform}
( \UZ \psi)(k,y):=
 \frac{1}{|Y^*|^{\half}} \,\,
\sum_{\gamma\in\Gamma} \expo{-\I k \cdot (y +\gamma)} \, \psi( y +
\gamma), \qquad y \in \R^d, \, k \in\R^{d}.
\end{equation}
One immediately reads the periodicity properties
\begin{equation}\label{Zak properties}
\begin{array}{lcll}
      \big( \UZ \psi\big) (k, y+\gamma) &=&
      \big( \UZ\psi\big)
      (k,y) &\quad \mbox{for all } \gamma \in\Gamma\,,\\
      &&&\\
      \big(\UZ\psi\big) (k + \lambda, y) &=& \expo{-\I \la \cdot y}\,\big(
      \UZ\psi\big) (k,y) &\quad \mbox{for all }\lambda\in\Gamma^*\,.
\end{array}
\end{equation}

\noindent For any fixed $k\in{\R^d}$, $\big( \UZ \psi
\big)(k,\cdot)$ is a $\Gamma$-periodic function and can thus be
regarded as an element of $\Hf := L^2(\T^d)$, $\T^d$ being the flat
torus $\R^d/\Gamma$.  On the other hand, the second equation in
(\ref{Zak properties}) can be read as a pseudoperiodicity property,
involving a unitary representation of the group $\Gamma^*$, given by
\begin{equation} \label{tau definition}
\tau:\Gamma^{*} \rightarrow \Un(\Hf)\,,\quad\lambda \mapsto
\tau(\lambda)\,, \quad \big(\tau(\lambda)\ph \big)(y) = \expo{\I \la
\cdot \,y} \ph(y).
\end{equation}
Following \cite{PST2003}, it is convenient to introduce the Hilbert
space
\begin{equation}   \label {H tau}
\Hi_\tau :=\Big\{ \ph \in L^2_{\rm loc}(\R^d, \Hf ):\,\, \ph(k -
\la) = \tau(\la)\,\ph(k) \,\,\, \forall \la \in \Gamma^{*}, \
\mbox{for a.e. } k \in \R^d \Big\}\,,
\end{equation} equipped with the inner product
\[
\langle \ph,\,\psi \rangle_{\Hi_\tau} = \int_{Y^{*}} dk\, \langle
\ph(k),\,\psi(k)\rangle_{\Hf}\,.
\]
Obviously, there is a natural isomorphism  between $\Hi_\tau$ and
$L^2(Y^{*}, \Hf)$ given by restriction from $\R^d$ to $Y^{*}$. The
map defined by (\ref{Zak transform}) extends to a unitary operator
$$
\UZ : L^2(\R^d) \longrightarrow \Hi_{\tau} \simeq
\int_{Y^*}^{\oplus} \Hf \, dk,
$$
with inverse given by
\[
\( \UZ^{-1} \ph \)(x) =  \frac{1}{|Y^*|^{\half}} \int_{Y^*} dk \,\,
\expo{ \I k \cdot x} \ph(k, [x]),
\]
where $[\, \cdot \, ]$ refers to  the a.e. unique decomposition $x =
\gamma_x + [x]$, with $\gamma_x \in \Gamma$ and $[x] \in Y$.

\noindent Finally, from the definition (\ref{Zak transform}) one
easily checks that
\begin{equation}
\label{Zak equivalence}
\begin{array}{rcl}
\psi \in W^{m,2}(\R^d), \ m \in \N & \Longleftrightarrow &   \UZ \,
\psi \in
L^2(Y^*, W^{m,2}(\T^d)), \\
&& \\
\< x \>^m \psi \in L^2(\R^d), \ m \in \N   & \Longleftrightarrow &
\UZ \, \psi \in \Ht \cap W^{m,2}_{\rm loc}(\R^d, L^2(\T^d)),\\
\end{array}
\end{equation}
\noindent where, as usual, $\< x \> = (1 + |x^2|)^{1/2}$.


\medskip

The advantage of this construction is that the transformed
Hamiltonian is a fibered operator over $Y^*$. To assure that $H =
-\Lap + V_{\Gamma}$  is self-adjoint in $L^2(\R^d)$ on the domain
$W^{2,2}(\R^d)$, we make the following Kato-type assumption on the
$\Gamma$-periodic potential \cite[Theorem XIII.96]{RS4}:
\begin{equation}\label{V assumptions}
V_{\Gamma} \in L_{\rm loc}^2(\R^d) \mbox{ for } d \leq 3, \qquad
V_{\Gamma} \in L_{\rm loc}^p(\R^d) \mbox{ with } p > d/2 \mbox{ for
} d \geq 4.
\end{equation}
One checks that
\[
\UZ \, H  \, \UZ^{-1} = \int_{Y^{*}}^\oplus dk \,H(k)
\]
with fiber operator
\begin{equation} \label{H(k)}
H(k) = \big( -\I \nabla_y + k\big)^2 + V_\Gamma(y)\,, \qquad k \in
\R^d,
\end{equation} acting  on  the
$k$-independent domain $\Do_0 = W^{2,2}(\T^d) \subset L^2(\T^d)$. Each
fiber operator $H(k)$ is self-adjoint, has compact resolvent and
thus pure point spectrum accumulating at infinity. The eigenvalues
are labeled increasingly, \ie $E_0(k) \leq E_1(k) \leq E_2(k) \leq
\ldots$, and repeated according to their multiplicity. Since the fiber Hamiltonians
are $\tau$-covariant, see
\cite{PST2003}, in the sense that
\begin{equation}\label{H covariant}
H(k + \la) = \ta(\la)^{-1} \, H(k)  \, \ta(\la), \qquad \forall \la
\in \Gamma^{*},
\end{equation}
the eigenvalues are $\Gamma^*$-periodic, \ie $E_n(k + \la) = E_n(k)$
for all $\la \in \Gamma^*$.

We denote by $\sigma_{\ast}(k)$  the set $\{ E_{i}(k): n \leq i \leq
n+m-1 \}$, $k \in Y^*$, corresponding to a physically relevant family of $m$
Bloch bands, and we assume the following \emph{gap condition}:
\begin{equation}\label{Gap condition}
    \inf_{k \in \T_d^{*}} \mathrm{dist}\big( \sigma_*(k), \sigma(H(k)) \setminus \sigma_*(k)
    \big) > 0.
\end{equation}

The following result borrows ideas in \cite{Cl1,Cl2} and
\cite[Theorem 3.3]{Ne91}, where however a different definition of
the Bloch-Floquet transform is used.


\begin{proposition}\label{Prop P properties}
Let $ P_*(k) \in \B(\Hf)$ be the spectral projector of $H(k)$
corresponding to the set $\sigma_*(k) \subset \R$. Assume that
$\sigma_{*}$ satisfies (\ref{Gap condition}). Then the family $\{
P_*(k) \}_{k \in \R^d}$ has the following properties:
\begin{enumerate}
    \item[$\mathrm{(P_1)}$] the map $k \mapsto P_*(k)$ is smooth from $\R^d$ to
    $\B(\Hf)$ (equipped with the operator norm);
    \item[$\mathrm{(\tilde{P}_{1})}$] the map $k \mapsto P_*(k)$
    extends to a $\B(\Hf)$-valued analytic function on the domain
    \begin{equation}\label{Omega}
        \Omega_{\alpha} = \left\{ \kappa \in \C^d: \ |\mathrm{Im}(\kappa_j)| <
        \alpha \quad \forall j \in \set{1, \ldots, d} \right\}
    \end{equation}
    for some $\alpha >0$;
    \item[$\mathrm{(P_2)}$] the map $k \mapsto P_*(k)$ is
    $\tau$-covariant, \ie
    \[
    P_*(k + \la) = \ta(\la)^{-1} \, P_*(k) \, \ta(\la)  \qquad \forall k
    \in \R^d, \quad \forall \la \in \Gamma^{*};
    \]
    \item[$\mathrm{(\tilde{P}_2)}$] the map $\kappa \mapsto P_*(\kappa)$ is
    $\tau$-covariant, \ie
    $$
    P_*(\kappa + \la) = \ta(\la)^{-1} \, P_*(\kappa) \, \ta(\la)  \qquad \forall
    \kappa \in \Omega_{\alpha}, \quad \forall \la \in \Gamma^{*};
    $$
    \item[$\mathrm{(P_3)}$] there exists an antiunitary
    operator\footnote{\ By \emph{antiunitary} operator we mean a surjective antilinear operator $C: \Hi \rightarrow \Hi$,
    such that
    $\< C\ph, C \psi \>_{\Hi} = \< \psi, \ph \>_{\Hi}$ for any $\ph, \psi \in \Hi$.} 
     $C$ acting on $\Hf$ such that
    \[
    P_*(-k) =  C \, P_*(k) \, C^{-1}  \qquad \mbox{ and  } \qquad C^{2} =1.
    \]
\end{enumerate}
\end{proposition}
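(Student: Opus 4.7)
Express $P_*(k)$ via the Riesz projector formula,
$$
P_*(k) = \frac{1}{2\pi \I}\oint_\Lambda \big(H(k)-z\big)^{-1}\,dz,
$$
where $\Lambda \subset \C$ is a positively oriented simple contour enclosing $\sigma_*(k)$ while avoiding the rest of $\sigma(H(k))$. Thanks to the gap condition \eqref{Gap condition}, combined with the $\Gamma^*$-periodicity of the eigenvalues inherited from \eqref{H covariant} and their continuity in $k$, the contour $\Lambda$ can be chosen \emph{once and for all}, independent of $k\in\R^d$, keeping a positive distance from $\sigma(H(k))$ uniformly in $k$.

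For $\mathrm{(P_1)}$ we observe that $H(k) = H(0) + 2k\cdot(-\I\nabla_y) + |k|^2$ is quadratic in $k$ with coefficients relatively bounded by $H(0)$; under \eqref{V assumptions} the potential $V_\Gamma$ is infinitesimally $(-\Delta)$-bounded, so $\{H(k)\}_{k\in\R^d}$ is a Kato-analytic family of type (A) on the common core $\Do_0$. Hence $(H(k)-z)^{-1}$ is jointly analytic in $(k,z)$ on the open set $\{z\notin\sigma(H(k))\}$, and (real-)analyticity of $k\mapsto P_*(k)$ follows by differentiating under the contour integral, giving in particular smoothness. For $\mathrm{(P_2)}$, the covariance \eqref{H covariant} transfers directly to resolvents and is preserved by integration along $\Lambda$, yielding $P_*(k+\la) = \ta(\la)^{-1}P_*(k)\ta(\la)$.

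The main technical step is $\mathrm{(\tilde{P}_1)}$. For $\kappa = k + \I\eta \in \Omega_\alpha$, decompose $H(\kappa) = H(k) + W(k,\eta)$ with $W(k,\eta) = 2\I\,\eta\cdot(-\I\nabla_y+k) - |\eta|^2$. The identity $|-\I\nabla_y+k|^2 = H(k)-V_\Gamma$ combined with the infinitesimal $(-\Delta)$-boundedness of $V_\Gamma$ yields a uniform bound $\|(-\I\nabla_y+k)(H(k)-z)^{-1}\|\le C$ for $z\in\Lambda$ and $k\in\R^d$, hence $\|W(k,\eta)(H(k)-z)^{-1}\|\le c\,|\eta|$ uniformly. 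Choosing $\alpha>0$ with $c\alpha<1$, a Neumann series based on the second resolvent identity gives $\Lambda\cap\sigma(H(\kappa))=\emptyset$ and joint analyticity of $(H(\kappa)-z)^{-1}$ on $\Omega_\alpha\times\Lambda$; the Riesz integral then defines the analytic extension $P_*(\kappa)$ on $\Omega_\alpha$. Property $\mathrm{(\tilde{P}_2)}$ follows from $\mathrm{(P_2)}$ and the identity principle for analytic functions, since both sides coincide on the totally real $\R^d\subset\Omega_\alpha$. The control of the non-self-adjoint perturbation $W(k,\eta)$ \emph{uniformly in $k$}, which is the crux of the argument, is where assumption \eqref{V assumptions} plays its role and is the step I expect to require the most care.

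For $\mathrm{(P_3)}$ take $C$ to be pointwise complex conjugation on $\Hf$, $(C\ph)(y) := \overline{\ph(y)}$, which is antiunitary with $C^2=\Id$. Since $V_\Gamma$ is real and $C(-\I\nabla_y)C^{-1} = \I\nabla_y$, one finds $C(-\I\nabla_y+k)^2 C^{-1} = (-\I\nabla_y - k)^2$, hence $CH(k)C^{-1} = H(-k)$. This identity passes to resolvents via $C(H(k)-z)^{-1}C^{-1} = (H(-k)-\bar z)^{-1}$ and then to the Riesz integral: the orientation reversal of $\Lambda$ induced by the map $z\mapsto\bar z$ is compensated by the antilinearity of $C$, and one concludes $CP_*(k)C^{-1} = P_*(-k)$.
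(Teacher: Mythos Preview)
Your proof is correct and follows essentially the same approach as the paper: both arguments hinge on the Riesz formula, the fact that $\{H(\kappa)\}_{\kappa\in\C^d}$ is an analytic family of type (A), and complex conjugation for $\mathrm{(P_3)}$. The only noteworthy difference is in how the uniform strip width $\alpha$ for $\mathrm{(\tilde P_1)}$ is produced: the paper uses local contours $\Lambda(k_0)$ and local neighborhoods $N_{k_0}$, then invokes $\tau$-covariance and compactness of $Y^*$ to patch these into a uniform $\Omega_\alpha$, whereas you fix a single global contour $\Lambda$ (legitimate, since the gap condition together with periodicity and continuity of the band functions confines $\sigma_*(k)$ to a fixed compact interval separated from the rest of the spectrum) and then run an explicit Neumann-series bound on $W(k,\eta)(H(k)-z)^{-1}$. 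Your route is slightly more quantitative and avoids the patching step; the paper's is shorter once the abstract Kato theory is quoted. One small remark: the uniformity in $k\in\R^d$ of your bound $\|(-\I\nabla_y+k)(H(k)-z)^{-1}\|\le C$ is most cleanly justified by first observing that this norm is $\Gamma^*$-periodic in $k$ (via the $\tau$-covariance of both $H(k)$ and $-\I\nabla_y+k$) and then invoking compactness of $Y^*$, which is implicitly the same compactness step the paper uses.
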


While properties $\mathrm{(P_1)}$ and $\mathrm{(P_2)}$ are a
consequence of the fact that the operator $H$ commutes with the
lattice translations, jointly with the gap condition (\ref{Gap
condition}), property $\mathrm{(P_3)}$ follows from the fact that
the operator (\ref{Hamiltonian}) is real, and corresponds to the
time-reversal symmetry of the physical system. Notice that property
$\mathrm{(P_3)}$ generically does not hold true for periodic
\emph{magnetic} Schr\"odinger operators.

\begin{proof}
One uses (\ref{H(k)}) to define $H(\kappa)$ for every $\kappa \in
\C^d$. Since for any $\kappa_0 \in \C^d$ one has
$$
H(\kappa) = H(\kappa_0) + 2 (\kappa - \kappa_0) \cdot (-i \nabla_y)
+ (\kappa^2 - \kappa_0^2) \1
$$
and $(-i \nabla_y)$ is $H(\kappa_0)$-bounded, the family
$\{H(\kappa)\}_{\kappa \in \C^d}$ is an entire analytic family of
type (A) \cite[Chapter XII]{RS4}. Hence $\{H(\kappa)\}_{\kappa \in
\C^d}$ is an entire analytic family in the sense of Kato
\cite[Theorem XII.9]{RS4}. As a consequence, the set
$$
R = \set{(\kappa, \lambda)\in \C^d \times \C : \ \lambda \in
\rho(H(\kappa))}
$$
is an open set, and $(\kappa, \lambda) \mapsto (H(\kappa) - \lambda
\1)^{-1}$ is an analytic function on $R$. Since $R$ is open and
(\ref{Gap condition}) holds, for every every $k_0 \in \R^d$ there
exist a positively oriented circle $\Lambda(k_0) \subset \C $ and a
neighborhood $N_{k_0} \subset \C^d$ of $k_0$, such that
$\Lambda(k_0)$ separates $\sigma_{*}(\kappa)$ from the rest of the
spectrum and $\Lambda(k_0) \subset \rho(H(\kappa))$ for every
$\kappa \in N_{k_0}$. Then the Riesz's formula
\begin{equation}\label{Riesz formula}
P_*(\kappa) = {\frac{i}{2 \pi}} \oint_{\Lambda(k_0)} (H(\kappa) -
z\1)^{-1} dz
\end{equation}
defines a map $\kappa \mapsto P_*(\kappa)$ which is analytic in
$N_{k_0}$. For $\kappa = k \in N_{k_0} \cap \R^d$, it agrees with
the spectral projection $P_*(k)$, so $\mathrm{(P_1)}$ holds true.

Since $\sigma_*$ is $\Gamma^*$-periodic, the circle in (\ref{Riesz
formula}) can be chosen so that $\Lambda(k_0) = \Lambda(k_0 + \la)$
for every $\la \in \Gamma^*$. Thus, property $\mathrm{({P}_2)}$
follows from (\ref{H covariant}) and (\ref{Riesz formula}).

Formula (\ref{Riesz formula}) yields a $\B(\Hf)$-valued analytic
function on $\cup_{k_0} N_{k_0} \supset \R^d$. In view of
$\mathrm{(P_2)}$ and the unique continuation principle, one may
assume $N_{k_0 + \lambda} = N_{k_0}$ for every $\lambda \in
\Gamma^*$, so by the compactness of $Y^*$ there exist $\alpha > 0$
such that $\cup_{k_0} N_{k_0} \supset \Omega_{\alpha}$, \ie
$\mathrm{(\tilde{P}_1)}$ holds true. Moreover, in view of
$\mathrm{(P_2)}$ and the unique continuation principle, one gets
$\mathrm{(\tilde{P}_2)}$.

%


Property $\mathrm{(P_3)}$ corresponds to time-reversal symmetry.
This symmetry  is realized in $L^2(\R^d)$ by the complex conjugation
operator, acting as $( T \psi )(x) = \overline{\psi(x)}$ for $\psi
\in L^2(\R^d).$ One directly checks that $\tilde{T}= \UZ \, T \,
\UZ^{-1}$ acts on $\Ht \simeq L^2(Y^*, \Hf)$ as
\[
( \tilde{T} \ph )(k) = C \, \ph(-k), \qquad  \qquad \ph \in
L^2(Y^{*},\Hf),
\]
where $C$ is the complex conjugation operator in $\Hf$. From the
fact that $H$ commutes with $T$, taking into account (\ref{Gap
condition}), property $\mathrm{(P_3)}$ follows, see \cite{Pa2008}
for details.
\end{proof}



\medskip

\subsection{The Wannier functions and their localization properties}$ $\\

\noindent \textbf{Case I. Simple Bloch band.} We initially consider
the case of a single isolated Bloch band, namely $\sigma_*(k) =
\set{E_n(k)}$ and (\ref{Gap condition}) is satisfied, such that
$E_n(k)$ is an eigenvalue of multiplicity one for every $k$. This
case corresponds to the \virg{simple direct isolated band} in
\cite{Ne91}.

A (normalized) \textbf{Bloch function} corresponding to the $n$-th
Bloch band is, by definition, any  $\ph \in \Ht$ satisfying
\begin{equation}\label{Eigen equation H}
\ph(k, \cdot) \in \Do_0  \qquad H(k)\ph(k, \cdot) = E_n(k) \ph(k, \cdot) \quad \mbox{ and }\quad
\norm{\ph(k, \cdot)}_{\Hf} =1        \qquad \forall k \in Y^*.
\end{equation}

\noindent Clearly, if $\ph$ is a Bloch function then $\tilde{\ph}$,
defined by $\tilde{\ph}(k,y)= \expo{\I \vartheta(k)} \ph(k,y)$ for
any function $\vartheta: \Base \to \R$, is also a Bloch function.
The latter invariance is often called \emph{Bloch gauge invariance}.
Notice that (\ref{Eigen equation H}) is equivalent to
\begin{equation}\label{Eigen equation P1}
P_{*}(k) \ph(k, \cdot) = \ph(k, \cdot) \quad \mbox{ and }\quad
\norm{\ph(k, \cdot)}_{\Hf} =1        \qquad \forall k \in Y^*,
\end{equation}
where $P_{*}(k)$ is the rank-one projection on the eigenspace
corresponding to the eigenvalue $E_n(k)$.

\begin{definition}\label{Def Wannier}
The \textbf{Wannier function} $w_n \in L^2(\R^d)$ corresponding to a
Bloch function $\ph_n \in \Ht$ for the Bloch band $E_n$ is the
preimage of $\ph_n$ with respect to the Bloch-Floquet transform,
namely
$$
w_n(x) := \( \UZ^{-1} \ph_n \)(x) = \frac{1}{|Y^*|^{\half}} \int_{Y^*} dk \  \expo{\I k
\cdot x} \ph_n(k, [x]).
$$
\end{definition}

The translated Wannier functions are
      $$
      w_{n,\gamma}(x) := w_n(x - \gamma) =  \frac{1}{|Y^*|^{\half}} \int_{Y^*} dk \ \expo{- \I k \cdot \gamma} \ \expo{ \I k \cdot
x} \ph_n(k, [x]),   \qquad \gamma \in \Gamma.
      $$
Thus, in view of the orthogonality of the trigonometric polynomials
and the fact that $\UZ$ is an isometry, the functions $\{ w_{n,
\gamma} \}_{\gamma \in \Gamma}$ are mutually orthogonal in
$L^2(\R^d)$. Moreover, the family $\{ w_{n, \gamma}\}_{\gamma \in
\Gamma}$ is a complete orthonormal basis of $\UZ^{-1} \, \(\ran
P_{*}\)$, where $P_{*} = \int^{\oplus}_{Y^*} P_{*}(k) \, dk$ is the
total projector corresponding to the Bloch band $E_n$. Notice that
the previous definition and elementary properties do not rely on the
gap condition (\ref{Gap condition}).

The (weak) localization of $w_n$ expressed by the fact that $w_n$ is
in $L^2(\R^d)$ is physically unsatisfactory, since it does not imply
that the position operator and its powers have finite expectation
value. Therefore, it is natural to search for conditions on $\ph_n$
which guarantee a stronger decay of $w_n$. In a nutshell, the
localization properties of the Wannier function are determined by
the regularity (continuity, smoothness, analyticity, \ldots) of the
corresponding Bloch function. By a simple integration by parts, and
an exact cancelation of the opposite boundary terms, one obtains the
following lemma. Here $X_j$, for $j \indexd$, denotes the the $j^{\,
\rm th}$ component of the position operator, \ie $(X_j \psi)(x) =
x_j \, \psi(x)$ for $\psi$ in a suitable dense subspace of
$L^2(\R^d)$

\begin{lemma}\label{Lemma derivative}
Let $\ph \in \Ht$, $\ph \in C^1(\R^d, \Hf)$ and let $w = \UZ^{-1}
\ph$. Then $X_j w $  is in $L^2(\R^d)$ and
$$
\UZ (-\I X_j w) = \partial_{k_j} \ph.
$$
\end{lemma}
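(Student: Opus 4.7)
The plan is to start from the explicit inversion formula $w(x) = |Y^*|^{-\half}\int_{Y^*}\expo{\I k \cdot x}\ph(k, [x])\, dk$, multiply both sides by $x_j$, and rewrite the factor $x_j\expo{\I k \cdot x}$ as $-\I\, \partial_{k_j}\expo{\I k \cdot x}$ so that an integration by parts in $k_j$ transfers the derivative onto $\ph$.

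The key observation, which is precisely the \emph{exact cancellation of opposite boundary terms} alluded to in the statement, is that for each fixed $x \in \R^d$ the integrand
\begin{equation*}
k \;\longmapsto\; \expo{\I k \cdot x}\,\ph(k, [x])
\end{equation*}
is actually $\Gamma^*$-periodic. Indeed, rewriting the defining pseudoperiodicity of $\Ht$ as $\ph(k+\la, y) = \expo{-\I \la \cdot y}\ph(k, y)$ and using the decomposition $x = \gamma_x + [x]$, one picks up the factor $\expo{\I\la \cdot x}\expo{-\I\la \cdot [x]} = \expo{\I\la\cdot\gamma_x}$, which equals $1$ since $\la \in \Gamma^*$ and $\gamma_x \in \Gamma$. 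Consequently, the integral over the fundamental domain $Y^*$ is genuinely an integral over the torus $\Base$, and the divergence theorem produces no boundary contribution.

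Executing the integration by parts therefore yields
\begin{equation*}
\I\, x_j\, w(x) \;=\; -\,\frac{1}{|Y^*|^{\half}}\int_{Y^*}\expo{\I k \cdot x}\,\partial_{k_j}\ph(k, [x])\, dk \;=\; -\bigl(\UZ^{-1}\partial_{k_j}\ph\bigr)(x),
\end{equation*}
so that $-\I\, X_j w = \UZ^{-1}(\partial_{k_j}\ph)$, and applying $\UZ$ gives the desired identity. The membership $X_j w \in L^2(\R^d)$ is then automatic, once one has verified that $\partial_{k_j}\ph \in \Ht$: this is immediate from $\ph \in C^1(\R^d, \Hf)$ together with the fact that $\tau$-covariance passes through $\partial_{k_j}$, because $\tau(\la)$ is independent of $k$ and one may simply differentiate the identity $\ph(k-\la)=\tau(\la)\ph(k)$ in $k_j$.

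I do not foresee any serious obstacle: the only technical points are the justification of differentiation under the integral sign and the interchange of $\partial_{k_j}$ with the defining lattice sum of $\UZ$ when rephrasing the argument in terms of $\UZ$ rather than $\UZ^{-1}$. Both are standard consequences of the $C^1$ hypothesis on $\ph$ and the compactness of $Y^*$; the conceptual content of the lemma is entirely concentrated in the $\Gamma^*$-periodicity observation above.
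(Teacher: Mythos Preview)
Your proposal is correct and follows exactly the approach the paper indicates: integration by parts in $k_j$ together with the observation that the integrand $k \mapsto \expo{\I k \cdot x}\ph(k,[x])$ is $\Gamma^*$-periodic, so the boundary contributions on opposite faces of $Y^*$ cancel. This is precisely what the paper means by ``a simple integration by parts, and an exact cancelation of the opposite boundary terms.''
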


\noindent By iterating the previous lemma, and taking into account
(\ref{Zak equivalence}), one concludes that if $\ph \in \Ht$ is in
$C^{\infty}(\R^d, \Hf)$, then the corresponding Wannier function
decreases faster than the inverse of any polynomial, \ie $P(X_1,
\ldots, X_d) w$ is in $ L^2(\R^d)$ for any polynomial $P$. As for
the exponential localization, by mimicking the proof of the usual
Paley-Wiener theorem one gets the following result, see also
\cite{Kuchment_book} for a slightly different formulation.

\begin{proposition}\label{Lemma Paley-Wiener}
Let $\Omega_{\beta}$ be defined by (\ref{Omega}) and
$$
\Hi_{\ta, \, \beta}^{\C} = \set {\Phi \in L^2_{\rm
loc}(\Omega_{\beta}, \Hf): \Phi(z - \la) = \ta(\la) \Phi(z) \mbox{
for all } \la \in \Gamma^*, z \in \Omega_{\beta}}.
$$
Let $\phi$ be the restriction to $\R^d$ of a function $\Phi \in
\Hi_{\ta, \, \beta}^{\C}$ analytic in the strip $\Omega_{\rm
\beta}$. Assume that
$$
\int_{Y^*} \norm{\Phi( k + \I h)}_{\Hf}^2 \, dk \leq C \qquad
\forall h \mbox{ with } |h_j| < \beta
$$
with a constant $C$ uniform in $h$. Then, the function $w :=
\UZ^{-1} \, \phi$ satisfies
$$  
\int_{\R^d} \expo{2 \beta |x|} \, |w(x)|^2 dx <
+ \infty. $$  
\end{proposition}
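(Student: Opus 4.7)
The plan is to establish this Paley--Wiener-type statement by deforming the contour in the inverse Bloch--Floquet integral
\[
w(x) = |Y^*|^{-1/2} \int_{Y^*} e^{\I k\cdot x}\, \Phi(k, [x])\, dk
\]
from $Y^*$ into the complex slab $Y^* + \I h$ inside $\Omega_\beta$, in the classical spirit of the Paley--Wiener theorem.

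The first step is to verify that the contour shift is legitimate. Writing $x = \gamma + y$ with $\gamma \in \Gamma$ and $y \in Y$, the pseudoperiodicity $\Phi(k+\lambda, y) = e^{-\I \lambda \cdot y}\Phi(k, y)$ (which is the scalar reformulation of the condition defining $\Hi_\tau$) together with $e^{\I \lambda \cdot \gamma} = 1$ for every $\lambda \in \Gamma^*$, $\gamma \in \Gamma$, implies that the integrand $k \mapsto e^{\I k \cdot (\gamma + y)} \Phi(k, y)$ is $\Gamma^*$-periodic in $k$. Since $\Phi$ is analytic on $\Omega_\beta$, Cauchy's theorem (applied pointwise in $y$, with the contributions from opposite vertical faces of the shifted box canceling by periodicity) allows me to translate the integration region to $Y^* + \I h$ for any real $h$ with $|h_j| < \beta$, yielding
\[
w(\gamma + y) = |Y^*|^{-1/2} e^{-h \cdot (\gamma+y)} \int_{Y^*} e^{\I k \cdot (\gamma+y)} \Phi(k + \I h, y)\, dk.
\]

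Next, for fixed $y$ and $h$, the function $k \mapsto e^{\I k \cdot y}\Phi(k + \I h, y)$ is itself $\Gamma^*$-periodic, and the last integral is (up to $|Y^*|^{1/2}$) its Fourier coefficient indexed by $-\gamma$ on $\R^d/\Gamma^*$. Parseval's identity combined with integration in $y$ and Fubini then gives
\[
\int_{\R^d} e^{2 h \cdot x} |w(x)|^2 \, dx = \int_{Y^*} \|\Phi(k + \I h)\|_{\Hf}^2\, dk \leq C
\]
for every admissible $h$, by the hypothesis.

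To convert this into exponential decay in $|x|$, I would partition $\R^d$ into the $2^d$ orthants $O_\sigma$ indexed by $\sigma \in \{-1, +1\}^d$ and, on each $O_\sigma$, choose $h = \beta' \sigma$ with $\beta' < \beta$. On that orthant $h \cdot x = \beta' \sum_j |x_j| \geq \beta' |x|$, so summing over $\sigma$ yields $\int_{\R^d} e^{2 \beta' |x|} |w(x)|^2 \, dx \leq 2^d C$, and monotone convergence as $\beta' \uparrow \beta$ gives the claim. The main technical subtlety will be the rigorous justification of the $\Hf$-valued Cauchy theorem: the cleanest route is to perform the deformation pointwise in $y \in Y$, where $\Phi(\cdot, y)$ is a scalar holomorphic function for a.e.\ $y$ (by Fubini), and then reassemble via Parseval and Fubini; any residual measurability concern can be bypassed by a density argument on smooth sections or by the equivalence of weak and strong holomorphy for $\Hf$-valued maps.
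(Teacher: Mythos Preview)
Your proposal is correct and follows essentially the same approach as the paper's (sketched) proof: both shift the integration contour to $Y^*+\I h$ using analyticity and the $\tau$-equivariance to cancel the boundary contributions, identify the result as $e^{h\cdot x}w(x)$, and then invoke Parseval/unitarity of $\UZ$ to obtain $\int_{\R^d} e^{2h\cdot x}|w(x)|^2\,dx\le C$. Your final orthant-plus-monotone-convergence step to pass from $e^{2h\cdot x}$ to $e^{2\beta|x|}$ is in fact more carefully spelled out than the paper's one-line closing remark.
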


\begin{proof}[Sketch of the proof] Let $k + \I h$ be in
$\Omega_{\beta}$, and pose $\phi^h(k):= \Phi(k + \I h)$, so that
$\phi_h \in \Ht$. By shifting the integration contour by the method
of residues, and by the $\tau$-equivariance of $\phi^h$, one gets
$$
w^h(x) := \( \UZ^{-1} \phi^h\)(x) = \int_{Y^*} dk \, \expo{\I k
\cdot x} \, \Phi(k + \I h, [x])  = \expo{h \cdot x} w(x).
$$
Then, by the unitarity of $\UZ$,
$$
\norm{\expo{h \cdot x} w}^2_{L^2(\R^d)} = \int_{Y^*} \norm{\Phi(k + \I
h)}^2_{\Hf} \, dk \leq C.
$$
Since the latter constant does not depend on $h$ for $|h_j| \leq
\beta$, one has $\norm{\expo{\beta |x|} w}_{L^2(\R^d)} < + \infty$.
\end{proof}

\begin{corollary}\label{Cor Exp local}
Let $\phi$ be the restriction to $\R^d$ of a function $\Phi \in
\Hi_{\ta, \, \alpha}^{\C}$ analytic in the strip $\Omega_{\rm
\alpha}$. Then, for every $\beta < \alpha$, the function $w :=
\UZ^{-1} \, \phi$ satisfies
\begin{equation}\label{Exp localization beta}
\int_{\R^d} \expo{2 \beta |x|} \, |w(x)|^2 dx < + \infty.
\end{equation}
\end{corollary}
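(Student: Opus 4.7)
The plan is to reduce the statement to Proposition \ref{Lemma Paley-Wiener} by choosing a strict threshold. Fix $\beta < \alpha$ and restrict $\Phi$ to the smaller strip $\Omega_\beta$; this restriction is analytic, lies in $L^2_{\rm loc}(\Omega_\beta, \Hf)$, and inherits the pseudoperiodicity $\Phi(z - \la) = \ta(\la) \Phi(z)$, so it belongs to $\Hi_{\ta, \beta}^{\C}$. The only nontrivial task is then to verify the uniform bound $\int_{Y^*} \|\Phi(k + \I h)\|_{\Hf}^2 \, dk \leq C$ for $|h_j| < \beta$, after which (\ref{Exp localization beta}) drops out of Proposition \ref{Lemma Paley-Wiener} applied with this $\beta$.

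The key observation is that the scalar map $\kappa \mapsto \|\Phi(\kappa)\|_{\Hf}$ is well controlled on a compact fundamental region. Since $\ta(\la)$ is unitary on $\Hf$, the $\ta$-equivariance yields
\[
\|\Phi(\kappa - \la)\|_{\Hf} = \|\ta(\la) \Phi(\kappa)\|_{\Hf} = \|\Phi(\kappa)\|_{\Hf}, \qquad \la \in \Gamma^*,
\]
so this norm is $\Gamma^*$-periodic in $\re \kappa$. Being analytic with values in a Banach space, $\Phi$ is continuous in norm on $\Omega_\alpha$, and because $\beta < \alpha$ the slab
\[
K_\beta := \bigl\{ \kappa \in \C^d : \re \kappa \in \overline{Y^*}, \ |\im \kappa_j| \leq \beta \text{ for all } j \bigr\}
\]
is a compact subset of $\Omega_\alpha$. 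Hence $\|\Phi(\cdot)\|_{\Hf}$ attains a finite maximum $M$ on $K_\beta$, which gives the desired uniform estimate
\[
\int_{Y^*} \|\Phi(k + \I h)\|_{\Hf}^2 \, dk \leq M^2 |Y^*| \qquad \text{for every } h \text{ with } |h_j| \leq \beta.
\]

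With the hypothesis of Proposition \ref{Lemma Paley-Wiener} in place, the conclusion (\ref{Exp localization beta}) is immediate. The only point that deserves a brief justification rather than a one-line quote is the norm continuity of $\Phi$ used above: this is automatic once analyticity is understood in the standard strong sense for Banach-space valued maps. Should one prefer to start from a weaker notion (\eg weak-analyticity, or merely the $L^2_{\rm loc}$ hypothesis together with complex differentiability of scalar matrix coefficients), the same pointwise bound on $K_\beta$ can alternatively be obtained from Cauchy's integral formula on small polydiscs centered at points of $K_\beta$, estimating the result by the $L^2$-norm of $\Phi$ on a slightly larger compact set still contained in $\Omega_\alpha$. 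Either way, the whole argument reduces to a compactness-plus-equivariance observation, and I do not anticipate any deeper obstacle than this minor continuity point.
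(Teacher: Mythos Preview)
Your argument is correct and is exactly the natural way to derive the corollary from Proposition~\ref{Lemma Paley-Wiener}; the paper in fact gives no separate proof for the corollary, treating it as an immediate consequence. Your compactness-plus-equivariance step (using that $\|\Phi(\cdot)\|_{\Hf}$ is $\Gamma^*$-periodic in $\re\kappa$ and continuous on the compact slab $K_\beta\subset\Omega_\alpha$) is precisely what is needed to verify the uniform bound in the hypothesis of the Proposition.
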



\noindent A function $w \in L^2(\R^d)$ satisfying (\ref{Exp
localization beta}) for some $\beta >0$ is said to be
\textbf{\emph{exponentially localized}}, while a function $w \in
L^2(\R^d)$  such that $P(X_1, \ldots, X_d) w$ is in $ L^2(\R^d)$ for
any polynomial $P$ is said \textbf{\emph{almost-exponentially
localized}}.


In view of the previous proposition, the existence of an
exponentially (resp.\ almost-expo\-nentially) localized Wannier
function for the Bloch band $E_n$ is equivalent to the existence of
an analytic (resp.\ smooth) Bloch function. Property
$\mathrm{(\tilde{P}_1)}$ (resp. $\mathrm{(P_1)}$) assures that there
is a choice of the Bloch gauge such that the Bloch function is
analytic (resp. smooth) around a given point. However, as several
authors noticed \cite{Cl1, Ne91}, there might be a topological
obstruction to obtaining a global analytic (resp.\ smooth) Bloch
function, in view of the competition between the regularity and the
$\tau$-equivariance (remember that the Bloch function must be in
$\Ht$ by definition). This topological obstruction will be encoded
in the concept of Bloch bundle, which we will introduce in the next
subsection. Preliminarily, we describe the more realistic case of
the composite Bloch bands.

\medskip

\noindent \textbf{Case II. Composite Bloch bands.} We consider now
the generic case of a family $\sigma_{*}$ of $m$ Bloch bands
satisfying (\ref{Gap condition}). Since the eigenvalues in the
family $\sigma_{*}$ generically intersect each other, and the
eigenprojectors of $H(k)$ corresponding to single eigenvalues are
not smooth at the intersection point, generically
it is not even possible to find a system of locally smooth Bloch
functions spanning $\Ran P_{*}(k)$ at any $k$ close to a given
$k_0$. Thus, the notion of Bloch function is relaxed and replaced by
the following one \cite{Bl, Cl1}.

\begin{definition}\label{Def quasiBloch} Let $\{ P_{*}(k) \}_{k \in \R^d}
\subset \B(\Hf)$ be a family of orthogonal projectors satisfying
$\mathrm{(P_1)}$ and $\mathrm{(P_2)}$, with $\dim P_{*}(k) \equiv m
< + \infty$. A function $\chi \in \Ht$ is called a
\textbf{quasi-Bloch function} {\upshape(}for the family $\{ P_{*}(k)
\}${\upshape)}  if
\begin{equation}\label{P*condition}
P_{*}(k) \chi(k, \cdot) =  \chi(k, \cdot) \quad \mbox{ and } \quad
\chi(k, \cdot) \neq 0  \qquad \forall k \in Y^*.
\end{equation} A \textbf{Bloch frame} {\upshape(}for the family $\{ P_{*}(k)
\}${\upshape)} is a set $\{ \chi_a \}_{a = 1, \ldots, m}$ of
quasi-Bloch functions such that $\{\chi_1(k), \ldots, \chi_m(k)\}$
is an orthonormal basis of\ $\ran P_{*}(k)$ at (almost-)every $k \in
Y^*$.
\end{definition}

A Bloch frame is fixed only up to a $k$-dependent unitary matrix
$U(k) \in \Un(m)$, \ie if $\{ \chi_a \}_{a = 1, \ldots, m}$ is a
Bloch frame then the functions $
  \widetilde \chi_a (k)  =  \sum_{b =1}^{m}  \chi_b(k) U_{b,a}(k)$
also define a Bloch frame.

\begin{definition}\label{Composite Wannier function}
The \textbf{composite Wannier functions} corresponding to a Bloch
frame $\{ \chi_a \}_{a = 1}^{m}$ are the functions
$$
w_a(x) := \( \UZ^{-1} \chi_a\)(x), \qquad \qquad a \in \set{1,
\ldots, m}.
$$
\end{definition}

As in the case of a single Bloch band, the existence of a system of
exponentially localized (resp.\ almost-exponentially localized)
composite Wannier function is equivalent to the existence of an
analytic (resp.\ smooth) Bloch frame. The topological obstruction to
the existence of a regular (analytic, smooth or continuous) Bloch
frame, already observed in \cite{Cl1, Ne91} is described in the next
subsection.

\begin{ciao}[\textbf{Regularity of composite Wannier functions}] 
\label{Rem regularity Wannier} We emphasize that, for $V_{\Gamma}$
satisfying (\ref{V assumptions}), the composite Wannier functions
are actually in $W^{2,2}(\R^d)$ and, in general, one cannot expect
better regularity properties. For example, if $V_{\Gamma}$ has a
Coulomb singularity, the Wannier functions are not smooth, as it
happens for the eigenfunctions of the hydrogen atom.

To study the regularity properties of  $w_a$,  one notices that the
corresponding quasi-Bloch function $\psi_a(k)$ is in
$W^{2,2}(\T^d)$ and $\|\psi_a(k)\|_{W^{2,2}(\T^d)} \leq  C_m \,
\norm{\psi_a (k)}_{L^2(\T^d)}$. Indeed, one can check that the
previous inequality is a consequence of the fact that $V_{\Gamma}$
corresponds to a multiplication operator in $L^2(\T^d)$ which is
$\Lap$-bounded with relative bound zero, together with the fact that
the union of the ranges of the functions $\{ E_i \, : \,n \leq i
\leq n+m-1 \}$ is contained in a fixed compact set.   As a
consequence, in view of (\ref{Zak equivalence}), one concludes that
$w_a \in W^{2,2}(\R^d)$, namely $w_a$ is in the domain of $H$.
\end{ciao}

\begin{ciao}
[\textbf{Comparison with the physics literature}]\label{Rem physical literature}
When comparing our definitions with the physics literature, one has
to take into account that we are using a \emph{modified}
Bloch-Floquet transform, so that the fiber Hamiltonian (\ref{H(k)}) has a $k$-independent
domain. This fact is convenient from the mathematical viewpoint. Alternatively, the classical
Bloch-Floquet transform
\begin{equation}\label{BF transform}
    (\UB \psi)(k,y):= \frac{1}{|Y^*|^{\half}}
\sum_{\gamma\in\Gamma} \expo{- \I k \cdot \gamma } \,
\psi(y+\gamma), \qquad k \in \R^d, \quad y\in\R^d,
\end{equation}
yields a decomposition $\UB \, \(- \Lap + V_{\Gamma}\) \, \UB^{-1} =
\int^{\oplus}_{\Base} H_{\rm cl}(k) \, dk$ where $H_{\rm cl}(k)= -
\Lap + V_{\Gamma}$ acts in
$$
\Hi_{k} := \set{ \psi \in L^2_{\rm loc}(\R^d): \psi(y + \gamma)=
\expo{\I k \cdot \gamma} \psi(y) \quad \forall \gamma \in \Gamma
\quad \mbox{for a.e. }y \in\R^d}
$$
on the domain  $\Do_{k} := W^{2,2}_{\rm loc}(\R^d) \cap \Hi_k$. The
unitary operator $\mathcal{J} = \UB \, \UZ^{-1}$ acts as $ \(\J \ph
\)(k,y) = \expo{\I k \cdot y} \ph(k,y)$, is a fibered operator and its
fiber, denoted by $J(k)$, maps unitarily the space $\Hf$ into the
space $\Hi_k$. Since $J(k) H_{\rm cl}(k) J(k)^{-1} = H(k)$, the
operators $H(k)$ and $H_{\rm cl}(k)$ have the same spectrum $\{
E_n(k) \}_{n \in \N}$.

In the physics literature, usually a Bloch function is defined as an
eigenfunction $\psi_n(k, \cdot) \in \Do_k$ of $H_{\rm cl}(k)= - \Lap
+ V_{\Gamma}$ for the eigenvalue $E_n(k)$. By the unitary
equivalence above, the so-called \emph{Bloch theorem}, one has that
$\psi_n(k, y) = \expo{\I k \cdot y} \ph_n(k, y)$, where $\ph_n(k,
\cdot)$ is $\Gamma$-periodic and $\ph_n \in \Ht$. The latter is our
Bloch function, as defined by (\ref{Eigen equation H}). Consequently,
in the physics literature the Wannier function is defined as $w_n =
\UB^{-1} \psi_n = \UZ^{-1} \ph_n$, which coincides exactly with our
definition.
\end{ciao}


\goodbreak

\subsection{The Bloch bundle} To describe the geometric obstruction
to the existence of a continuous, smooth or analytic Bloch frame it
is convenient to introduce the concept of Bloch bundle, following
\cite{Pa2008}. In this subsection we assume as given a family of
orthogonal projectors $\{ P_*(k) \}_{k \in \R^d}$  satisfying
properties $\mathrm{({P}_1)}$, $\mathrm{(P_2)}$ and
$\mathrm{(P_3)}$, or their complex counterparts. Notice that we
could also abstract from the specific case of the operator
(\ref{Hamiltonian}) and consider such properties as convenient
starting assumptions. Within this viewpoint, our approach can be
applied to the periodic Pauli or Dirac operator, with obvious
modifications.

\begin{proposition}\label{Prop Bloch bundle}
To a family of orthogonal projectors $\{ P_*(k) \}_{k \in \R^d}$
satisfying $\mathrm{(P_1)}$ and $\mathrm{(P_2)}$ is canonically
associated a Hermitian smooth vector bundle $\E_* $ over $\T_d^*$,
called the \emph{Bloch bundle}. If $\mathrm{(\tilde{P}_1)}$ and
$\mathrm{(\tilde{P}_2)}$ are also satisfied, then $\E_*$ is the
restriction to\  $\Base= \R^d/{\Gamma^*}$ of a holomorphic Hermitian
vector bundle $\tilde{\E}_*$ over $\Omega_{\alpha}/{\Gamma^*}$.
\end{proposition}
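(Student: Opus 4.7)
The plan is to realize $\E_*$ as a quotient of a smooth subbundle of a trivial Hilbert bundle over $\R^d$. First, I would set
$$
E := \{(k, v) \in \R^d \times \Hf : P_*(k) v = v\}
$$
and endow $E$ with the rank-$m$ smooth complex vector bundle structure inherited from $\R^d \times \Hf$. The existence of smooth local trivializations around any $k_0 \in \R^d$ follows from $\mathrm{(P_1)}$ via the Kato intertwining unitary
$$
W(k) = \bigl( \1 - (P_*(k) - P_*(k_0))^2 \bigr)^{-1/2}\bigl[ P_*(k) P_*(k_0) + (\1 - P_*(k))(\1 - P_*(k_0)) \bigr],
$$
which is smooth in $k$ on the neighborhood $\{k : \norm{P_*(k) - P_*(k_0)} < 1\}$ and satisfies $W(k) P_*(k_0) W(k)^{-1} = P_*(k)$. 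Transporting any orthonormal basis of $\ran P_*(k_0)$ by $W(k)$ produces a smooth orthonormal frame of $\ran P_*(k)$ on this neighborhood, giving both the local trivialization and the Hermitian structure inherited from the inner product of $\Hf$.

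Second, I would make $\Gamma^*$ act on $E$ by the lift
$$
\lambda \cdot (k, v) := (k + \lambda, \tau(\lambda)^{-1} v).
$$
Property $\mathrm{(P_2)}$ ensures this action preserves $E$, since $\tau(\lambda)^{-1}$ maps $\ran P_*(k)$ onto $\ran P_*(k+\lambda)$; it is moreover free, properly discontinuous, fibrewise linear, and covers the translation action of $\Gamma^*$ on $\R^d$. The quotient
$$
\E_* := E/\Gamma^*
$$
then inherits the structure of a smooth rank-$m$ complex vector bundle over $\Base = \R^d/\Gamma^*$. Because $\tau(\lambda)$ is unitary, the fibrewise Hermitian product on $\ran P_*(k)$ descends to $\E_*$, yielding a Hermitian metric; every ingredient depends only on $\{P_*(k)\}_{k \in \R^d}$ and the fixed representation $\tau$, so the construction is canonical.

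For the holomorphic statement I would repeat the construction over the complex strip. By $\mathrm{(\tilde P_1)}$, $\kappa \mapsto P_*(\kappa)$ is holomorphic on $\Omega_\alpha$ (as shown in the proof of the previous proposition via the Riesz integral), so the Kato intertwiner above, built from $P_*(\kappa_0)$ and $P_*(\kappa)$ via the same algebraic/functional-calculus formula, is a holomorphic family of unitaries on a complex neighborhood of each $\kappa_0 \in \Omega_\alpha$. This makes
$$
\tilde E := \{(\kappa, v) \in \Omega_\alpha \times \Hf : P_*(\kappa) v = v\}
$$
a holomorphic rank-$m$ subbundle of $\Omega_\alpha \times \Hf$. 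Property $\mathrm{(\tilde P_2)}$ lets $\Gamma^*$ act on $\tilde E$ by the same formula, now holomorphically over the $\Gamma^*$-translation action on $\Omega_\alpha$, so the quotient $\tilde \E_* := \tilde E/\Gamma^*$ is a holomorphic Hermitian vector bundle over $\Omega_\alpha/\Gamma^*$; its restriction to the real slice $\R^d/\Gamma^* = \Base$ agrees tautologically with $\E_*$. The only genuine technical point is the smooth (respectively holomorphic) dependence of the Kato frames on the base variable, which is standard once $\mathrm{(P_1)}$ (respectively $\mathrm{(\tilde P_1)}$) is in hand; I do not anticipate any serious obstacle.
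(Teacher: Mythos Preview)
Your proposal is correct and follows essentially the same construction as the paper: the subbundle $E \subset \R^d \times \Hf$, the Kato--Nagy intertwiner for local frames, the $\Gamma^*$-action via $\tau$, and the quotient are all exactly what the paper does. One small slip: in the holomorphic case the Kato intertwiner $W(\kappa)$ is a holomorphic family of \emph{invertible} operators but is \emph{not} unitary off the real slice (since $P_*(\kappa)^* = P_*(\overline{\kappa})$); this does not affect the holomorphic bundle structure, and the Hermitian metric is still well-defined fibrewise via $\langle\cdot,\cdot\rangle_{\Hf}$, but you should not call $W(\kappa)$ unitary there.
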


\begin{proof}
The idea of the construction is to firstly consider  $\sqcup_{k \in \R^d} \Ran P_*(k)$ as
a subbundle of the trivial bundle  $\R^d \times \Hf$ over $\R^d$. Then
we use the $\tau$-equivariance of the projectors to obtain a (quotient)
vector bundle $\E_*$ over the quotient space $\T_d^* = \R^d /\Gamma^*$.

To construct $\E_*$, one firstly introduces on the set $\R^d \times
\Hf$ the equivalence relation $\sim_{\ta}$, where
\[
(k,\ph) \sim_{\ta} (k', \ph')  \qquad  \Leftrightarrow  \qquad (k',
\ph')= (k + \la \,, \, \ta(\la)^{-1} \ph) \quad \mbox{for some } \la
\in \Gamma^*.
\]
The equivalence class with representative $(k, \ph)$ is denoted by
$[k,\ph]$. Then the total space $\E_*$ of the vector bundle is
defined by
\[
\E_* := \left \{ [k, \ph] \in (\R^d \times \Hf )/{\sim_{\ta}} :
\quad \ph \in \Ran P_*(k) \right \}.
\]
This definition does not depend on the representative in view of the
covariance property $\mathrm{(P_2)}$. The projection to the base
space $\pi: \E_* \rightarrow \T_d^*$ is $\pi[k, \ph] = \mu(k)$,
where $\mu$ is the projection modulo $\Gamma^*$. One checks that $
\E_* \stackrel{\pi}{\rightarrow} \Base$ is a smooth complex vector
bundle with typical fiber $\C^m$. In particular, the local
triviality follows from $\mathrm{(P_1)}$ and the use of the
Kato-Nagy formula. Indeed, for any $k_0 \in \R^d$ there exists a
neighborhood $O_{k_0} \subset \R^d$ of $k_0$ such that $\| P_*(k) -
P_*(k_0) \| < 1$ for any $k \in O_{k_0}$. Then by setting
(Kato-Nagy's formula \cite[Sec. I.6.8]{Kato})
\begin{equation}\label{W definition}
W(k) := \( 1 - (P_*(k) - P_*(k_0))^2 \)^{-1/2} \big( P_*(k)P_*(k_0)
+ (1 - P_*(k))(1 - P_*(k_0))\big)
\end{equation}
one gets a smooth map $W: O_{k_0} \rightarrow \Un(\Hf)$ such that
$W(k) \, P_*(k_0) \, W(k)^{-1} = P_*(k)$. If $\{ \chi_{a}
\}_{a=1,\dots,m}$ is any orthonormal basis spanning $\mathrm
{Ran}P_*(k_0)$, then $\ph_a(k) = W(k) \chi_a$ is a smooth local
orthonormal frame for $\E_*$, yielding the local triviality of the
fibration $\E_* \stackrel{\pi}{\rightarrow} \Base$.

The vector bundle $\E_*$ carries a natural Hermitian structure.
Indeed, if $v_1, v_2 \in \E_*$ are elements of the fiber over $\mu(k) \in
\Base$ then, up to a choice of the representatives, one has $v_1 =
[k, \ph_1]$ and $v_2 = [k, \ph_2]$, and one poses $ \< v_1, v_2
\>_{{\E_{*}}} := \< \ph_1, \ph_2 \>_{\Hf}. $

As for the analytic case, one introduces an equivalence relation
$\sim_{\ta}$ over $\Omega_{\alpha} \times \Hf$ as above. Then the
total space of $\tilde{\E}_*$ is defined by
$$
\tilde{\E}_* = \left\{ [\kappa, \ph] \in (\Omega_{\alpha} \times \Hf
)/{\sim_{\ta}} :\ \quad \ph \in \Ran P_*(\kappa) \right\}
$$
which is independent of the representative in view of
$\mathrm{(\tilde{P}_2)}$. Local triviality follows again from
formula (\ref{W definition}). Indeed, for every $\kappa_0 \in
\Omega_{\alpha}$ there exists a neighborhood $N_{\kappa_0} \subset
\Omega_{\alpha}$ such that $\| P_*(\kappa) - P_*(\kappa_0) \| < 1$
for every $\kappa \in N_{\kappa_0}$; then, formula (\ref{W
definition}) yields a holomorphic map $W: N_{\kappa_0} \to \Un(\Hf)$
such that $W(\kappa) P_*(\kappa_0)W(\kappa)^{-1} = P_*(\kappa)$.
Notice that, since $P_*(\kappa)^* = P_*(\overline{\kappa})$, the
operator $W(\kappa)$ is not unitary when $\kappa \notin \R^d \cap
N_{\kappa_0}$.

Since $\kappa \mapsto P_*(\kappa)$ extends $k \mapsto P_*(k)$ by
$\mathrm{(\tilde{P}_1)}$, and both the maps are $\tau$-covariant,
${\E}_* \rightarrow \Base$ is clearly a restriction of $\tilde{\E}_*
\rightarrow \Omega_{\alpha}/\Gamma^*$. The Hermitian structure over
the vector bundle $\tilde{\E}_*$ is defined as in the smooth case.
\end{proof}

The vector bundle $\E_*$  is equipped with a natural
$\mathfrak{u}(m)$-connection (\emph{Berry connection}), induced by
the trivial connection on the trivial vector bundle $(\R^d \times
\Hf )/{\sim_{\ta}} \to \Base$. The triviality of the latter is a
consequence of the fact that $\Un(\Hf)$ is contractible whenever
$\Hf$ is infinite-dimensional \cite{Kui}.

The Bloch bundle encodes the geometrical obstruction to the
existence of a global smooth (resp. analytic) Bloch frame. Indeed,
the following result is implicit in \cite{Pa2008}.

\begin{theorem}\label{Th Bloch bundle}
Let $\{ P_*(k) \}_{k \in \R^d} \subset \B(\Hf)$ be a family of
orthogonal projectors satisfying properties $\mathrm{({P}_1)}$
$($resp.\ $\mathrm{(\tilde{P}_1)}$$)$ and $\mathrm{(P_2)}$ $($resp.\
$\mathrm{(\tilde{P}_2)}$$)$, with $\dim P_*(k) \equiv m < + \infty$.
Then the following statements are equivalent:
\renewcommand{\labelenumi}{{\rm(\Alph{enumi})}}
\begin{enumerate}
    \item[$\mathrm(A)$]\textbf{existence of a regular Bloch frame:} there exists
    a Bloch frame $\set{\chi_a}_{a=1, \ldots, m}$ such that each $\chi_a$ is in
    $C^{\infty}(\R^d,\Hf)$ $($\,resp.\ each $\chi_a$ is the restriction to $\R^d$ of a
    function $\tilde{\chi}_a \in \Hi^{\C}_{\tau, \alpha}$ analytic on
    $\Omega_{\alpha}$\,$)$;\vspace{3mm}
    \item[$\mathrm{(B)}$]\textbf{triviality of the Bloch bundle:}
    the vector bundle  $\E_*$, associated to the family $\{ P_*(k) \}_{k \in \R^d}$ according to
    Proposition \ref{Prop Bloch bundle}, is trivial in the category of smooth Hermitian
    vector bundles over $\Base$ $($\,resp.\ the vector bundle $\tilde{\E}_*$ is trivial in the
    category of holomorphic Hermitian vector bundles over $\Omega_{\alpha}/{\Gamma^*}$$)$.
\end{enumerate}
\end{theorem}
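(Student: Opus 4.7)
The plan is to establish this as the standard equivalence between triviality of a vector bundle and the existence of a global frame, adapted to the particular way in which $\E_*$ has been built as a quotient of $\R^d \times \Hf$. The key observation, which makes both implications essentially transparent, is the canonical correspondence between Bloch frames for $\{P_*(k)\}$ and global sections of $\E_* \to \Base$: if $\chi \in \Ht$ satisfies $P_*(k)\chi(k) = \chi(k)$, the formula $s_\chi(\mu(k)) := [k, \chi(k)]$ defines a section of $\E_*$, well-posed thanks to $\tau$-equivariance and the defining equivalence relation $\sim_\tau$, and conversely every section admits such a representative by lifting along the covering map $\R^d \to \Base$.

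For (A) $\Rightarrow$ (B), I would proceed as follows. Given a smooth (resp.\ analytic) Bloch frame $\{\chi_a\}_{a=1,\dots,m}$, define sections $s_a : \Base \to \E_*$ by $s_a(\mu(k)) := [k, \chi_a(k,\cdot)]$ (resp.\ $\tilde s_a : \Omega_\alpha/\Gamma^* \to \tilde\E_*$ by $\tilde s_a(\mu(\kappa)) := [\kappa, \tilde\chi_a(\kappa,\cdot)]$). By construction these are pointwise orthonormal bases of the fibers on the real base. The map
\[
\Phi : \Base \times \C^m \longrightarrow \E_*, \qquad
\Phi\bigl(\mu(k), (z_1,\dots,z_m)\bigr) := \Bigl[k, \textstyle\sum_{a=1}^m z_a\,\chi_a(k,\cdot)\Bigr]
\]
is then a smooth (resp.\ holomorphic) bundle isomorphism which is fiberwise an isometry, i.e.\ a trivialization of $\E_*$ (resp.\ of $\tilde\E_*$) as a Hermitian vector bundle.

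For (B) $\Rightarrow$ (A), conversely, let $\Psi : \Base \times \C^m \to \E_*$ be a trivialization and set $s_a(\mu(k)) := \Psi(\mu(k), e_a)$ for the canonical basis $\{e_a\}$ of $\C^m$: this is a global smooth orthonormal frame of $\E_*$. Lifting $s_a$ along the projection $\R^d \to \Base$, every local trivialization around a point $\mu(k_0)$ is of the form $[k,\ph] \mapsto (\mu(k),\ph)$ on its fiber, so $s_a$ corresponds locally to a smooth $\Hf$-valued function $\chi_a(k,\cdot) \in \Ran P_*(k)$, and the equivariance built into $\sim_\tau$ forces $\chi_a(k+\lambda,\cdot) = \tau(\lambda)^{-1}\chi_a(k,\cdot)$, so that $\chi_a \in \Ht$. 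Smoothness of $s_a$ is equivalent to smoothness of $\chi_a$ as a map $\R^d \to \Hf$, because a smooth local trivialization of $\E_*$ near $k_0$ is precisely provided by the Kato-Nagy intertwiner $W(k)$ of formula (\ref{W definition}). The analytic case is identical, replacing the smooth Kato-Nagy frame on an open neighborhood in $\R^d$ with its holomorphic extension on $N_{\kappa_0} \subset \Omega_\alpha$ constructed in the proof of Proposition \ref{Prop Bloch bundle}. The only point requiring care is the holomorphic case: here the local trivializations $W(\kappa)$ are no longer unitary off $\R^d$, but this is irrelevant because the orthonormality of the Bloch frame is only required on the real slice $Y^*$, where $W(k)$ is unitary and the Hermitian structure on $\tilde\E_*$ coincides, by construction, with that of $\E_*$; I expect this to be the only nontrivial point of the argument, and it is resolved by noting that the trivialization is holomorphic in the vector bundle sense and only needs to be compatible with the Hermitian metric on the real slice.
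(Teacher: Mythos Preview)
Your argument is correct and is precisely the standard equivalence between triviality of a vector bundle and the existence of a global frame of sections, adapted to the quotient construction of $\E_*$ (resp.\ $\tilde\E_*$). Note, however, that the paper does not actually supply a proof of this theorem: it states the result as ``implicit in \cite{Pa2008}'' and proceeds directly to Theorem~\ref{Theorem triviality}. So there is no proof in the paper to compare against; your write-up fills in what the authors leave to the reference, and does so by the expected route---the correspondence $\chi \leftrightarrow s_\chi$ between $\tau$-equivariant maps $\R^d \to \Hf$ landing in $\Ran P_*(k)$ and sections of $\E_*$, with regularity checked via the Kato--Nagy local trivializations from Proposition~\ref{Prop Bloch bundle}.

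Your caveat in the holomorphic case is well placed: since $P_*(\kappa)^* = P_*(\bar\kappa)$, the Hermitian structure on $\tilde\E_*$ is only the ordinary one on the real slice, and a holomorphic frame cannot be pointwise orthonormal off $\R^d$. What the theorem is really asserting (and what your argument delivers) is a holomorphic trivialization of $\tilde\E_*$ whose restriction to $\Base$ is unitary; this is exactly what is needed for the application via Corollary~\ref{Cor Exp local}.
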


The Bloch bundle can be trivial for reasons unrelated to
time-reversal symmetry, as in some phases of the Haldane model \cite{Haldane88,
Prodan11}. On the other hand, as a consequence of \cite{Pa2008}, under the assumption of
time-reversal symmetry the Bloch bundle is always trivial in low dimension, as
stated in the following result.

\begin{theorem}\label{Theorem triviality}
Let $\{ P_*(k) \}_{k \in \R^d} \subset \B(\Hf)$ be a family of
orthogonal projectors satisfying properties
$\mathrm{(\tilde{P}_1)}$, $\mathrm{(\tilde{P}_2)}$ and
$\mathrm{(P_3)}$. Assume $d \leq 3$ and $m \geq 1$, or $d \geq 1$
and $m=1$. Then there exists a Bloch frame  $\set{\chi_a}_{a=1,
\ldots, m}$ such that each $\chi_a$ is real-analytic, \ie   $\chi_a
\in C^{\omega}(\R^d,\Hf) \cap \Ht$.
\end{theorem}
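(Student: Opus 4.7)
By Theorem \ref{Th Bloch bundle}, the existence of a real-analytic Bloch frame is equivalent to the holomorphic triviality of $\tilde{\E}_*$ over $\Omega_{\alpha}/\Gamma^*$, which by deformation retraction is the same as smooth triviality of $\E_*$ over $\Base$ together with an Oka-type upgrade. I plan to proceed in three steps.

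\textbf{Step 1: Vanishing of $c_1(\E_*)$ from time reversal.} The antiunitary operator $C$ in $\mathrm{(P_3)}$ induces a bundle map $\Theta:\E_*\to\E_*$, antilinear on fibers, which covers the involution $\iota:\Base\to\Base$, $k\mapsto -k$. Equivalently, $\Theta$ furnishes a smooth complex-linear isomorphism $\iota^*\E_*\cong\overline{\E_*}$, where $\overline{\E_*}$ denotes the complex-conjugate bundle. Taking first Chern classes gives
\begin{equation*}
\iota^* c_1(\E_*) \,=\, c_1(\iota^*\E_*) \,=\, c_1(\overline{\E_*}) \,=\, -\,c_1(\E_*).
\end{equation*}
Since $\iota$ acts as $-\mathrm{Id}$ on $H^1(\Base;\Z)\cong\Z^d$, it acts as $+\mathrm{Id}$ on $H^2(\Base;\Z)\cong\Lambda^2 H^1(\Base;\Z)$; therefore $c_1(\E_*)=0$.

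\textbf{Step 2: Smooth triviality of $\E_*$.} For $m=1$ and arbitrary $d$, Hermitian line bundles on $\Base$ are classified by $c_1\in H^2(\Base;\Z)$, so $\E_*$ is smoothly trivial. For $m\geq 2$ and $d\leq 3$, the higher Chern classes $c_j\in H^{2j}(\Base;\Z)$ vanish automatically since $H^{2j}(\Base;\Z)=0$ for $2j>3$; moreover, $d\leq 3<2m$ puts us in the stable range, where Bott periodicity gives a bijection $[\Base,BU(m)]\to[\Base,BU]\cong K^0(\Base)$. Since $K^0(\Base)$ is torsion-free and detected by the Chern character, vanishing of all Chern classes forces smooth triviality of $\E_*$.

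\textbf{Step 3: Real-analytic Bloch frame via the Oka principle.} The quotient $\Omega_{\alpha}/\Gamma^*$ is biholomorphic to the product of $d$ open annuli in $\C^{*}$, hence a Stein manifold. By the Oka--Grauert principle, holomorphic and topological classifications of complex vector bundles on a Stein manifold coincide, so $\tilde{\E}_*$ admits a global holomorphic frame $\tilde{\phi}_1,\ldots,\tilde{\phi}_m$; under the equivalence relation defining $\tilde{\E}_*$, these correspond to $\tau$-equivariant holomorphic maps $\Omega_{\alpha}\to\Hf$ with values in $\ran P_*(\kappa)$. Restricting to $\R^d$ and applying Gram--Schmidt with respect to $\langle\cdot,\cdot\rangle_{\Hf}$ produces an orthonormal real-analytic Bloch frame $\{\chi_a\}$ on $\R^d$; $\tau$-equivariance is preserved since the orthonormalization matrix depends only on the $\tau$-invariant Gram matrix $\langle\tilde{\phi}_a(k),\tilde{\phi}_b(k)\rangle_{\Hf}$.

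\textbf{Main obstacle.} The delicate point is Step 2 for $m\geq 2$, $d=3$: the leap from vanishing of Chern classes to smooth triviality relies on being in the stable range and on the torsion-freeness of $K^0(\T^d)$. Equivalently, one can argue by obstruction theory skeleton by skeleton, exploiting $\pi_j(U(m))=0$ for $j\leq 2m-2$ together with vanishing of the relevant cohomology. In dimension $d\geq 4$, nontrivial higher obstructions would appear, which is precisely why the hypothesis must be confined to $d\leq 3$.
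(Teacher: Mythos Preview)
Your proof is correct and essentially reproduces the argument that the paper delegates to \cite{Pa2008}: vanishing of $c_1$ from time-reversal symmetry, smooth triviality via the stable range, and the Oka--Grauert principle followed by Gram--Schmidt. Two minor points worth tightening: in Step~1 your displayed equation only yields $2c_1(\E_*)=0$, and you should invoke torsion-freeness of $H^2(\Base;\Z)$ to conclude $c_1(\E_*)=0$; in Step~3, $\Omega_\alpha/\Gamma^*$ is literally a product of annuli only when $\Gamma^*$ happens to be aligned with the coordinate axes used to define $\Omega_\alpha$, but in general it is a logarithmically convex Reinhardt domain in $(\C^*)^d$, which is still Stein, so the Oka--Grauert step goes through unchanged.
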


\begin{proof}[Sketch of the proof]
Since the family $\{ P_{*}(k)\}_{k \in \R^d}$ satisfies properties
$\mathrm{(P_1)}$, $\mathrm{(P_2)}$ and $\mathrm{(P_3)}$, in view of
\cite[Theorem 1]{Pa2008} there exists a smooth Bloch frame $\chi =
\set{\chi_1, \ldots, \chi_m} \subset \Ht \cap C^{\infty}(\R^d,
\Hf)$, \ie the Bloch bundle is trivial as a smooth Hermitian vector
bundle. Moreover, since $\mathrm{(\tilde P_1)}$ is also satisfied,
$\{ P_{*}(k)\}_{k \in \R^d}$ admits a holomorphic extension to
$\Omega_{\alpha}$, see (\ref{Omega}). Theorem 2 in \cite{Pa2008}
implies the existence of a family of holomorphic functions
$\chi_a^{\C}: \Omega_{\alpha} \to \Hf$, $\chi_a \in \Hi_{\ta, \,
\alpha}^{\C}$, such that $\set{\chi^{\C}_1(\kappa), \ldots,
\chi^{\C}_m(\kappa)}$ is a (possibly non-orthonormal) basis of $\Ran
P_*(\kappa)$ for every $\kappa \in \Omega_{\alpha}$. By restriction
from $\Omega_{\alpha}$ to $\R^d$, and by a Gram-Schmidt
orthonormalization procedure, one gets a real-analytic Bloch frame $
\{\chi_1, \ldots, \chi_m \} \subset C^{\omega}(\R^d,\Hf) \cap \Ht$.
\end{proof}


\section{The Marzari-Vanderbilt localization functional}

The long-lasting uncertainty about the existence of exponentially
localized composite Wannier functions in three dimensions, settled
only recently \cite{Wannier_letter_BPCM, Pa2008}, and the need of an
approach suitable for numerical simulations, lead the solid state
physics community to explore new paths. In an important paper
\cite{MaVa}, Marzari and Vanderbilt introduced the following
concept.

For a single-band normalized Wannier function $w\in
L^2(\mathbb{R}^d)$, one defines the localization functional by
\begin{equation}
\label{mvw1banda} F_{MV}(w)=\sum_{j=1}^d \Var \left(X_j; |w(x)|^2dx
\right) = \int_{\mathbb{R}^d} |x|^2 |w(x)|^2dx - \sum_{j=1}^d \left(
\int_{\mathbb{R}^d} x_j |w(x)|^2dx \right)^2,
\end{equation}
which is well-defined at least whenever $\int_{\mathbb{R}^d} |x|^2
|w(x)|^2dx < + \infty$.

More generally, for a system of $L^2$-normalized composite Wannier
functions $w= \{ w_1, \ldots, w_m \} \subset L^2(\R^d)$ the
\textbf{Marzari-Vanderbilt localization functional} is
\begin{equation}
\label{mvwmbande} F_{MV}(w)=\sum_{a=1}^m F_{MV}(w_a) = {\sum_{a=1}^m
} \int_{\mathbb{R}^d} |x|^2 |w_a(x)|^2dx - \sum_{a=1}^m \sum_{j=1}^d
\left( \int_{\mathbb{R}^d} x_j |w_a(x)|^2dx \right)^2.
\end{equation}
We emphasize that the above definition of $F_{\rm MV}(w)$ includes the crucial
constraint that the corresponding Bloch functions $\varphi_a(k,
\cdot)=( \UZ\, w_a ) (k, \cdot)$, for $a \in \{ 1, \ldots,m \}$, are
a Bloch frame, \ie $\{ \varphi_1(k, \cdot), \ldots, \varphi_m(k,
\cdot)\}$ is an orthonormal set in $\mathcal{H}_f$ for each $k \in
Y^*$ and
\begin{equation}
\label{constraint} \Span_{\mathbb{C}}\left\{\varphi_1(k, \cdot),
\ldots, \varphi_m(k, \cdot)\right\}=P_*(k)(\mathcal{H}_f) \, , \quad
\forall k \in Y^* \,.
\end{equation}
According to Remark \ref{Rem regularity Wannier}, the latter
condition actually implies $w_a \in W^{2,2}(\R^d) = \Do(H)$.

\begin{definition}\label{Def Optimally loc Wannier functions} Let
$\{ P_*(k) \}_{k \in \R^d} \subset \B(\Hf)$ be a family of
projectors satisfying properties $\mathrm{(\tilde{P}_1)}$ and
$\mathrm{(P_2)}$, with $\dim P_*(k) \equiv m < + \infty$. A system
of \textbf{maximally localized composite Wannier functions} is a
global minimizer $\{w_1, \ldots, w_m \} $ of the Marzari-Vanderbilt
localization functional $F_{\rm MV}$ in the set $\W^m := \(\Do(H)
\cap \Do(X)\)^m$, under the constraint that $\{\ph_1, \ldots, \ph_m
\}$, for $\ph_a = \UZ w_a$, is a Bloch frame.
\end{definition}

A natural problem, raised in \cite{MaVa}, is the following.
\begin{problem}\label{Problem MV} Let
$\{ P_*(k) \}_{k \in \R^d} \subset \B(\Hf)$ be a family of
projectors satisfying properties $\mathrm{(\tilde{P}_1)}$ and
$\mathrm{(P_2)}$, with $\dim P_*(k) \equiv m < + \infty$.
\begin{enumerate}
    \item[$\mathrm{(MV_1)}$] \textbf{(Existence)} prove that there exists a system of
    maximally localized composite Wannier functions;
    \item[$\mathrm{(MV_2)}$]\textbf{(Localization)} prove that any maximally localized composite
    Wannier function is exponentially localized, in the sense that there exists $\beta >0$ such
    that (\ref{Exp localization beta}) holds.
\end{enumerate}
\end{problem}
Since the (modified) Bloch-Floquet transform $ \UZ:
L^2(\mathbb{R}^d) \to L^2(Y^*; \Hf)$ is an isometry and it satisfies
$(\tilde{\mathcal{U}}_{BF} {X_j g})(k,y)= \I
\frac{\partial}{\partial k_j} (\tilde{\mathcal{U}}_{BF} g)(k,y)$,
the functional \eqref{mvwmbande} can be rewritten in terms of the
Bloch frame $\varphi=\{\varphi_1,\ldots, \varphi_m\} $ as
\begin{equation}
\label{mvfimbande} \tilde{F}_{MV}(\varphi)=\sum_{a=1}^m\sum_{j=1}^d
\left\{ \int_{Y^*} dk \int_{\T^d} \left| \frac{\partial
\varphi_a}{\partial k_j}(k,y) \right|^2dy - \left( \int_{Y^*} dk
\int_{\T^d}  \overline{\varphi_a(k,y)} \, i\frac{\partial
\varphi_a}{\partial k_j}(k,y)\, dy \right)^2  \right\}\, .
\end{equation}
Correspondingly, in view of (\ref{Zak equivalence}), the space $\W =
\Do(H) \cap \Do(X) = \Do(H) \cap \Do(\<X\>)$ is mapped by the
Bloch-Floquet transform into
$$
\Ht \cap L^2_{\rm loc}(\R^d, W^{2,2}(\T^d)) \cap W^{1,2}_{\rm
loc}(\R^d, L^2(\T^d)) =: \widetilde{\W}.
$$

Hereafter, to solve problem $\mathrm{(MV_1)}$ for any $d$,  we will make the following

\noindent \textbf{Assumption 1:} there exists a Bloch frame $\chi=\{
\chi_1,\ldots, \chi_m\} \subset \Ht$ such that $\chi_a \in
\widetilde{\W}$.

\noindent In the case most relevant to us, \ie Schr\"odinger operators
for $d \leq 3$, the previous assumption is automatically satisfied, since Proposition \ref{Prop P
properties} and Theorem \ref{Theorem triviality} provide the
existence of a Bloch frame which is even real-analytic. While this
extra regularity is unessential  for problem $\mathrm{(MV_1)}$, it
will be crucial when dealing with $\mathrm{(MV_2)}$. Notice that, under
the previous  assumption, the set of  admissible
Bloch frames in Definition \ref{Def Optimally loc Wannier functions}
is non-void, so problem $\mathrm{(MV_1)}$ makes sense.

By using the previous Bloch frame, we can lift the functional
\eqref{mvfimbande} to $W^{1,2}$-maps from $\Base$ to the unitary
group $\mathcal{U}(m)$, \ie to $\Gamma^*$-periodic maps from $\R^d$
to $\Un(m)$. Indeed, given any map $U \in W^{1,2}(\Base, \Un(m))$
one defines a Bloch frame $\ph = \{\ph_1, \ldots, \ph_m \} \subset
\tilde{\W}$ by setting $\ph = \chi \cdot U$, \ie $\ph_a(k,\cdot) =
\sum_b \chi_b(k, \cdot) \, U_{b,a}(k)$. Vice versa, if $\ph =
\{\ph_1, \ldots, \ph_m \} \subset \tilde{\W}$ is a Bloch frame, then
pointwise $\ph_a(k,\cdot) = \sum_b \chi_b(k, \cdot) \, U_{b,a}(k)$
with $U_{b,a}(k) = \inner{\chi_b(k)}{\ph_a(k)}$, hence $U \in
W^{1,2}(\Base, \Un(m))$.

\noindent For the given reference frame $\chi$, the functional
\eqref{mvfimbande} in terms of the gauge $U$ becomes
\begin{eqnarray} \label{MVU}
 \tilde{F}_{MV}(U;\chi) &=& \sum_{j=1}^d
\int_{\mathbb{T}^*_d} \left[ \tr \left( \frac{\partial
U^{*}}{\partial k_j}(k) \frac{\partial U}{\partial k_j}(k) \right)
+m  \sum_{a=1}^m \left\| \frac{\partial \chi_a(k, \cdot)}{\partial
k_j}\right\|^2_{\mathcal{H}_f} \right] dk \,\, + \\
\nonumber 
&+&  \sum_{j=1}^d  \int_{\mathbb{T}_d^*} \tr \left[ \left(
U(k)\frac{\partial U^*}{\partial k_j}(k)- \frac{\partial U}{\partial
k_j}(k) U^*(k) \right) A_j(k) \right] dk \, + \\
\nonumber 
&+&  \sum_{a=1}^m \sum_{j=1}^d   \left(   \int_{\mathbb{T}^*_d}
\left[ U^*(k) \left( \frac{\partial U}{\partial k_j}(k) + A_j(k)U(k)
\right) \right]_{aa} dk \right)^2 \,.
\end{eqnarray}
Here the matrix coefficients $A_j \in L^2(\mathbb{T}^*_d;
\mathfrak{u}(m))$ are given by the formula
\begin{equation}
\label{berry} \big[A_j(k)\big]_{cb}= \inner{
\chi_c(k,\cdot)}{\frac{\partial \chi_b(k,\cdot)}{\partial
k_j}}_{\mathcal{H}_f}
\end{equation}
When $\chi$ is real-analytic, the functions $A_j \in
C^{\omega}(\mathbb{T}^*_d; \mathfrak{u}(m))$  represent the
antihermitian connection $1$-form induced on the (sub)bundle $\E_*$
by the trivial connection on the bundle $\mathbb{R}^d \times
\mathcal{H}_f$.

\noindent Moreover,
\begin{equation}\label{Inf equality}
\inf \set{F_{MV}(w) \,: \, \begin{array}{c}
                            \set{w_1, \ldots, w_m} \subset \W \\[2mm]
                            \UZ \, w \mbox{ is a Bloch frame}
                         \end{array}} =
\inf \set{\tilde{F}_{MV}(U;\chi) \,: \, U \in W^{1,2}(\Base;
\Un(m))}.
\end{equation}
Therefore, problem $\mathrm{(MV_1)}$ is equivalent to showing that the
r.h.s.\ of (\ref{Inf equality}) is attained. Analogously, in view of
Corollary \ref{Cor Exp local},  problem $\mathrm{(MV_2)}$
corresponds to show that any minimizer of
$\tilde{F}_{MV}(\cdot;\chi)$ is real-analytic, provided that $\chi$
is also real-analytic.

\begin{ciao}[\textbf{Rough reference frames, as in numerical simulations}]\label{Rem rough references}
When reformulating problem $\mathrm{(MV_1)}$ in terms of the
functional $\tilde{F}_{MV}( U ;\chi)$ the regularity of the
reference frame $\chi$ plays no essential role, provided $\chi$ is
in $\tilde\W$. Indeed, in view of (\ref{Inf equality}), the infimum
on the r.h.s does not depend on the choice of $\chi$. Moreover, if
for a particular choice of $\chi$ the infimum is attained at $U$,
then for another choice $\tilde \chi \subset \tilde \W$ the infimum
is attained at $\tilde U = V \, U$, where $\chi = \tilde \chi \cdot
V$. The matrix $V$, defined by $V_{b,a}(k) =
\inner{\tilde\chi_b(k)}{\chi_a(k)}$, is in $W^{1,2}(\Base; \Un(m))$,
hence $\tilde U $ is also in $W^{1,2}(\Base; \Un(m))$.
This observation justifies the fact that a
minimizer (whenever it exists) can be evaluated starting from any reference Bloch frame $\chi \subset \tilde \W$,
even a discontinuous one, as it happens in numerical simulations.
\end{ciao}

\begin{ciao}[\textbf{Minimizing over smooth gauges}]
\label{Rem smooth gauges} To compute the infimum of
$\tilde{F}_{MV}(U; \chi)$ it is sufficient to consider smooth change
of gauges. More precisely, for $d \leq 3$ and for any fixed Bloch
frame $\chi \subset \tilde\W$ one has
$$
\inf \set{\tilde{F}_{MV}(U; \chi)\,:\  U \in W^{1,2}(\Base; \Un(m))}
= \inf \set{\tilde{F}_{MV}(U; \chi)\,:\  U \in C^{\infty}(\Base;
\Un(m))}.
$$
As a consequence, in numerical implementations, to compute the above
infimum one can let $U$ vary in any set $S$ such that
$C^{\infty}(\Base; \Un(m)) \subset S \subset W^{1,2}(\Base;
\Un(m))$.

\noindent This result follows from  the strong density of smooth
maps $C^{\infty} (\mathbb{T}^*_d;\Un(m)) \subset W^{1,2}
(\mathbb{T}^*_d; \Un(m))$ for $d \leq 3$. If $d=2$ the latter claim
is essentially the approximation by convolution followed by the
nearest-point projection onto the unitary group; if $d=3$, the claim
follows from the fact that the homotopy group $\pi_2(\mathcal{U}(m))$ is
trivial and from the fact that $\mathbb{T}^*_3$ has the 1-extension
property with respect to $\mathcal{U}(m)$, see \cite[Theorem 1.3
and Section 5]{HL}.
\end{ciao}


\goodbreak

\subsection{Existence of minimizers}

The following results shows that the right hand side in \eqref{Inf
equality} is attained. The proof is a simple modification of the
direct method in the calculus of variations,  in order to handle a
natural invariance of the functional \eqref{mvwmbande}. Indeed, if
$ \{w_1, \ldots, w_m \}$ are composite Wannier functions satisfying
\eqref{constraint} and $\{\gamma_1, \ldots, \gamma_m\}\subset \Gamma
$, then
\begin{equation}
\label{wannierinvariance}
F_{MV}(w_1,\ldots,w_m)=F_{MV}(\tilde{w}_1,\ldots,\tilde{w}_m) \, ,
\quad \tilde{w}_a(x)=w_a(x+\gamma_a) \, , \quad 1\leq a\leq m \, .
\end{equation}
Moving to Bloch functions, we have
$\displaystyle{\tilde{\varphi}_a(k, \cdot)\equiv \left(
\tilde{\mathcal{U}}_{BF} \, \tilde{w}_a \right)(k, \cdot)=
e^{ik\gamma_a}\left( \tilde{\mathcal{U}}_{BF} \, w_a \right)(k,
\cdot) }$, so that  $\{ \tilde{\varphi}_1(k, \cdot),\ldots,
\tilde{\varphi}_m(k, \cdot)\}$ is still orthonormal in
$\mathcal{H}_f$ and \eqref{constraint} holds.  Correspondingly, the
functional \eqref{MVU} has the invariance
\begin{equation}
\label{Uinvariance}
\tilde{F}_{MV}(U;\chi)=\tilde{F}_{MV}(\tilde{U};\chi) \, , \quad
\tilde{U}(k)= \diag \left( e^{ik\gamma_a}\right) U(k) \, .
\end{equation}

\begin{theorem}
\label{existence} Let $\{ P_*(k) \}_{k \in \R^d} \subset \B(\Hf)$ be
a family of orthogonal projectors satisfying properties
$\mathrm{(\tilde{P}_1)}$ and $\mathrm{(P_2)}$, with $\dim P_*(k)
\equiv m$. Assume that there exists a Bloch frame $\chi \subset
\tilde \W$. Then there exists $U \in
W^{1,2}(\mathbb{T}^*_d;\mathcal{U}(m))$ which is a minimizer on
$W^{1,2}(\mathbb{T}^*_d;\mathcal{U}(m))$ of the localization
functional $\tilde{F}_{MV}(\cdot, \chi)$ defined by (\ref{MVU}).
\end{theorem}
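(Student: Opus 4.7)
The plan is to apply the direct method of the calculus of variations. The only subtlety is that the functional \eqref{MVU} contains subtracted center-of-mass terms that are \emph{a priori} of the same order as the coercive Dirichlet-like term, so coercivity must be restored by exploiting the invariance \eqref{Uinvariance}.

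\textbf{Normalization and coercivity.} Let $(U_n) \subset W^{1,2}(\Base;\Un(m))$ be a minimizing sequence, and write $\ph_n := \chi \cdot U_n$ and $w_{a,n} := \UZ^{-1} \ph_{a,n}$. Since $\tilde F_{MV}(U_n;\chi) = F_{MV}(w_n) = \sum_{a,j}\Var(X_j;\,|w_{a,n}(x)|^2 dx)$ is a sum of nonnegative variances, we have $\sum_a \int_{\R^d} |x|^2 |w_{a,n}|^2\, dx \leq \tilde F_{MV}(U_n;\chi) + \sum_{a,j}\big(\int_{\R^d} x_j |w_{a,n}|^2\, dx\big)^2$. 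Using the invariance \eqref{Uinvariance} with appropriate lattice vectors $\gamma_{a,n} \in \Gamma$---which by \eqref{wannierinvariance} amounts to shifting the $a$-th Wannier center by $-\gamma_{a,n}$---we may replace each $U_n$ by an equivalent element of its orbit so that every Wannier center lies in the bounded fundamental domain $Y$, yielding $\sum_a \int |x|^2 |w_{a,n}|^2\, dx \leq C$ uniformly in $n$. By \eqref{Zak equivalence}, this gives a uniform bound on $\ph_n$ in $\Ht \cap W^{1,2}_{\rm loc}(\R^d, L^2(\T^d))$. Since $U_{ca}^n(k) = \inner{\chi_c(k)}{\ph_{a,n}(k)}_{\Hf}$ and $\chi \in \tilde\W$, differentiation and Cauchy-Schwarz yield a uniform $L^2(Y^*)$-bound for $\partial_{k_j} U_n$. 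Combined with the pointwise bound coming from unitarity, $(U_n)$ is bounded in $W^{1,2}(\Base;M_m(\C))$.

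\textbf{Extraction and lower semicontinuity.} By Rellich's theorem, a subsequence satisfies $U_n \rightharpoonup U$ weakly in $W^{1,2}$, strongly in $L^2$, and pointwise a.e.\ along a further subsequence. Pointwise convergence gives $U(k)\in\Un(m)$ for a.e.\ $k$. The first term of \eqref{MVU} is convex in $\partial U$, hence weakly lower semicontinuous. The remaining terms---the cross term linear in $\partial U$ with $L^2$ coefficient $A_j$, and the squared linear functionals in the third line---are continuous under weak-$L^2$ convergence of $\partial U_n$ combined with strong-$L^2$ convergence of $U_n$, since each is a bilinear expression in $(U,\partial U)$ integrated against fixed $L^2$ data. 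Hence $\tilde F_{MV}(U;\chi) \leq \liminf_n \tilde F_{MV}(U_n;\chi) = \inf \tilde F_{MV}(\cdot;\chi)$, so $U$ is a minimizer.

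\textbf{Main obstacle.} The only nontrivial point is coercivity. Without the normalization in the first step, a minimizing sequence could display arbitrarily large Wannier centers, so the subtracted center-of-mass terms could compensate for unbounded Dirichlet energy and prevent $W^{1,2}$-bounds. The invariance \eqref{Uinvariance}, encoding the $\Gamma$-translation freedom of each Wannier function, breaks this pathology precisely because the admissible shifts lie in the lattice $\Gamma$, so the Wannier centers can always be brought into the bounded fundamental cell $Y$.
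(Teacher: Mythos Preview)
Your proof is correct and follows essentially the same approach as the paper: use the translation invariance \eqref{Uinvariance} to normalize the Wannier centers into a bounded set, thereby restoring coercivity, and then apply the direct method via weak lower semicontinuity of the Dirichlet term and weak continuity of the remaining terms. The only cosmetic difference is that you obtain the $W^{1,2}$-bound on $U_n$ by passing through the Wannier side (bounding $\int |x|^2|w_{a,n}|^2$ and pulling back via $U_n=\chi^*\ph_n$), whereas the paper argues directly from the structure of \eqref{MVU}; both routes are equally valid.
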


\begin{proof}
Note that $\Tilde{F}(\cdot;\chi)$ is nonnegative, because of
\eqref{Inf equality} and \eqref{mvwmbande}. Note also that
$E(U)=\sum_{j=1}^d \int_{\mathbb{T}^*_d} \frac12   \tr \left(
\frac{\partial U^{*}}{\partial k_j}(k) \frac{\partial U}{\partial
k_j}(k) \right)$ is the standard Dirichlet integral so it is
(sequentially) weakly lower semicontinuous on $W^{1,2}$. Thus
$\Tilde{F}(\cdot;\chi)$ is weakly lower semicontinuous in $W^{1,2}$,
because the first term in \eqref{MVU} is, up to a constant factor,
$E(\cdot)$ and the other terms are clearly weakly continuous
(because of the compact embedding $W^{1,2}(\mathbb{T}^*_d)
\hookrightarrow L^2(\mathbb{T}^*_d)$). In order to apply the direct
method it remains to show that the functional is coercive, so that
there exists a minimizing sequence which is bounded in
$W^{1,2}(\mathbb{T}^*_d;\mathcal{U}(m))$. However, coercivity
clearly fails because of the natural invariance of
$\Tilde{F}(\cdot;\chi)$ given by \eqref{Uinvariance} (the
$W^{1,2}$-norm of $\tilde{U}$ can be made arbitrarily large as
$|\gamma_a|\to \infty$).

In order to fix the argument it is enough to take advantage of this
invariance in the form \eqref{wannierinvariance}. Indeed, for any
admissible $U$ we can choose $\gamma_1, \ldots,\gamma_m$ so that the
corresponding $\tilde{U}$, defined as in \eqref{Uinvariance}, gives
the same value of $\tilde{F}_{MV}(\cdot;\chi)$ but in such a way
that the corresponding Wannier functions $\{ w_a\}$ satisfy $
\sum_{j=1}^d \left|\int_{\mathbb{R}^d} x_j |w_a(x)|^2dx\right| \leq
C$ for some absolute constant $C>0$ independent of $U$. Thus, if
along a sequence $\{U_n \}$ the functional $\tilde{F}_{MV}$ is
bounded, up to a suitable choice of the translation parameters $\{
\gamma_a^n\}$ we have a uniform bound for the modified sequence
$\{\tilde{U}_n\}$ in $W^{1,2}$ (the third line in \eqref{MVU} is now
bounded by $C^2 m d$ and the first two are easily seen to be
equivalent to the Dirichlet energy up to additive and multiplicative
constants depending only on $A$, $C$ and $m$).

Then, up to subsequences, $\tilde{U}_n$ weakly converges to some $U
\in W^{1,2}(\mathbb{T}^*_d;\mathcal{U}(m))$ which is a minimizer.
\end{proof}

\goodbreak

\subsection{The Euler-Lagrange equations}
In this subsection we derive the Euler-Lagrange equations
corresponding to the functional \eqref{MVU}. To our knowledge, these equations
are new in the literature. We consider the unitary
group $\mathcal{U}(m)$ as isometrically embedded into the space $M_m
(\mathbb{C})$ with the standard real Euclidean product
$\inner{A}{B}= \re \tr (A^* B)$. Recall that $\Un(m)$ is a compact
real Lie group, the induced metric is biinvariant, its Lie algebra
is given by the real vector space of complex antihermitian matrices
and it has dimension $m^2$.

Let $\varphi \in C^\infty (\mathbb{T}_d^*; M_m(\mathbb{C}))$ and for
$\varepsilon\neq 0$ fixed let $U(k)+\varepsilon \varphi(k)$ be a
free variation of $U$ in the direction $\varphi$. In a sufficiently
small tubular neiborhood $\mathcal{O}$ of $\mathcal{U}(m)$ in
$M_m(\mathbb{C})$ there is a well defined smooth nearest point projection
map $\Pi:\mathcal{O} \to \mathcal{U}(m)$, so we can consider the
induced variations

\begin{equation}
\label{variations} U_\varepsilon(k) := \Pi(U(k)+\varepsilon
\varphi(k))=U(k)\left( \mathbb{I} + \varepsilon \frac12 \left[
U^{-1}(k)\varphi(k)- (U^{-1}(k)\varphi(k))^* \right] \right)
+o(\varepsilon) \, .
\end{equation}
Simple calculations on each term in \eqref{MVU} yield

\begin{equation}
\label{derivata1} \left. \frac{d}{d \varepsilon}
\right\vert_{\varepsilon=0} \int_{\mathbb{T}^*_d}   \tr \left(
\frac{\partial U_{\varepsilon}^*}{\partial k_j}(k) \frac{\partial
U_{\varepsilon}}{\partial k_j}(k) \right) dk  = 2
\int_{\mathbb{T}_d^*} \, \tr \left[  \frac{\partial
\varphi^*}{\partial k_j}(k) \frac{\partial U}{\partial k_j}(k)+
\varphi^*(k)  \frac{\partial U}{\partial k_j}(k) U^{-1}(k)
\frac{\partial U}{\partial k_j}(k) \right] dk   \, ,
\end{equation}

\begin{equation}
\label{derivata2} \left. \frac{d}{d \varepsilon}
\right\vert_{\varepsilon=0} \int_{\mathbb{T}_d^*} \tr \left[ \left(
U_\varepsilon (k)\frac{\partial U_\varepsilon^*}{\partial k_j}(k)-
\frac{\partial U_\varepsilon}{\partial k_j}(k) U_\varepsilon^*(k)
\right) A_j(k) \right] dk =
\end{equation}
\[ = 2  \int_{\mathbb{T}_d^*} \, \tr \left[  \varphi^*(k)
\left\{  \frac{\partial U}{\partial k_j}(k) U^{-1}(k) A_j(k) U(k) -
\frac{\partial A_j}{\partial k_j}(k)U(k)- A_j(k) \frac{\partial
U}{\partial k_j}(k) \right\} \right] \, dk                   \, ,
\]
\noindent and
\begin{equation}
\label{derivata3}
\left. \frac{d}{d \varepsilon}  \right\vert_{\varepsilon=0} \sum_{a=1}^m    \left(   \int_{\mathbb{T}^*_d}  \left[ U_\varepsilon^*(k) \left( \frac{\partial U_\varepsilon}{\partial k_j}(k) +  A_j(k)U_\varepsilon(k) \right) \right]_{aa} dk \right)^2=
\end{equation}
\[
= 2  \int_{\mathbb{T}_d^*} \, \tr \left[  \varphi^*(k) \left\{
-\left( \frac{\partial U}{\partial k_j}(k) + A_j(k) U(k) \right)
G^j+ U(k) G^j U^{-1}(k) \left(\frac{\partial U}{\partial k_j}(k) +
A_j(k) U(k) \right) \right\} \right] dk                   \, .
\]

\noindent Here the constant (purely imaginary) diagonal matrices $\{
G^j \} \subset M_m(\mathbb{C})$ are defined as $ G^j= \diag \, \(
\int_{\mathbb{T}_d^*} U^*(k) \left[ \frac{\partial U}{\partial
k_j}(k) + A_j(k) U(k) \right] dk \)$, where $\left[ \diag M \right]_{ab} = M_{aa} \delta_{ab}$.

Thus a map $U \in W^{1,2}(\mathbb{T}_d^*;\mathcal{U}(m))$ satisfies
$\displaystyle{\frac{d}{d\varepsilon}
\tilde{F}_{MV}(U_{\varepsilon}; \chi) {\vert}_{\varepsilon=0} =0}$
if and only if $U$ is a  weak solution to the Euler-Lagrange
equations

\begin{equation}
\label{EuleroLagrange}
-\Delta U+ \sum_{j=1}^d \frac{\partial U}{\partial k_j} U^{-1} \frac{\partial U}{\partial k_j} + \sum_{j=1}^d \left[  \frac{\partial U}{\partial k_j} U^{-1} A_j U - \frac{\partial A_j}{\partial k_j}U-A_j \frac{\partial U}{\partial k_j} \right]+
\end{equation}
\[ + \sum_{j=1}^d \left[ -\left( \frac{\partial U}{\partial k_j} + A_j U \right) G^j+ U G^j U^{-1}
\left(\frac{\partial U}{\partial k_j} + A_j U \right) \right]=0 \,.
\]

\section{Continuity of minimizers}
The goal of this section is to show that a change of gauge $U\in
W^{1,2}(\mathbb{T}_d^*;\mathcal{U}(m))$ which minimizes the functional
\eqref{MVU} is continuous, whenever $\chi$ is real-analytic. This
result will be true if $m=1$ for any $d$, if $1\leq d \leq 2$ for
any $m$ and for $d=3$ under the restriction $2\leq m\leq3$. The
proofs in the three cases are different. The first case, treated in
the next proposition,  is the simplest and it gives even
real-analiticity. In order to deal with the  other two cases,  we
will need a dimension dependent argument which will occupy the rest
of the section. Note that in the first two cases continuity holds
for any solution of the Euler Lagrange equations. In contrast,
in the three dimensional case continuity relies on energy minimality
in an essential way.

To deal with the regularity of the minimizers, we make the following
assumption which, as already noticed after stating the weaker Assumption 1, is
automatically satisfied for $d \leq 3$ or $m=1$ (compare Theorem \ref{Theorem triviality}).

\noindent \textbf{Assumption 2:} there exists a Bloch frame $\chi=\{
\chi_1,\ldots, \chi_m\} \subset \Ht$ such that $\chi_a \in
C^{\omega}(\R^d, \Hf)$ for every $a \in \set{1, \ldots,m}$.

\begin{proposition}
\label{regsingleband} Let $d=1$ and $m \geq 1$, or $d\geq 1$ and
$m=1$. Let $U\in W^{1,2}(\mathbb{T}_d^*;\mathcal{U}(m))$ be a weak
solution to equation \eqref{EuleroLagrange}. Then $U$ is
real-analytic.
\end{proposition}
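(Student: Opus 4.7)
The two cases have quite different features: in the abelian case $m=1$ the Euler--Lagrange equation \eqref{EuleroLagrange} collapses to a linear scalar PDE, while for $d=1$ it becomes a semilinear ODE to which a standard bootstrap plus classical ODE regularity theory applies. I would treat them separately.

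\textbf{Case $m=1$.} My plan is to exploit that $\Un(1)=S^1$ is abelian. By Assumption~2 the connection coefficients have the form $A_j=i\alpha_j$ with $\alpha_j\in C^{\omega}(\Base;\R)$, and each $G^j$ is a purely imaginary scalar. A direct computation, carried out after writing $U=e^{i\theta}$ (locally, or on the universal cover $\R^d$), shows three cancellations in \eqref{EuleroLagrange}: (i) the third bracket vanishes identically, since in the scalar case $UG^jU^{-1}=G^j$ commutes with $\partial_j U+A_j U$; (ii) the first sum reduces to $-i(\Delta\theta)e^{i\theta}$, the quadratic term $|\nabla\theta|^2 e^{i\theta}$ arising from $-\Delta U$ being cancelled by $\sum_j\partial_j U\,U^{-1}\partial_j U=-|\nabla\theta|^2 e^{i\theta}$; (iii) the middle bracket collapses to $-i(\dive\alpha)e^{i\theta}$. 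Dividing by $e^{i\theta}$ I obtain the linear Poisson equation
\[
\Delta\theta=-\dive\,\alpha
\]
with a real-analytic right-hand side. Standard interior analytic elliptic regularity for the Laplacian then yields $\theta\in C^{\omega}$, hence $U\in C^{\omega}(\Base;\Un(1))$.

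\textbf{Case $d=1$.} Now \eqref{EuleroLagrange} is a second-order ODE on $\Base=\T^*_1$, and using $U^{-1}=U^*$ I can solve algebraically for $U''$, obtaining
\[
U''=\Phi\bigl(k,\,U,\,U^{*},\,U'\bigr),
\]
where $\Phi$ is polynomial in $(U,U^{*},U')$ with coefficients that are real-analytic in $k$ through $A_1$, $A_1'$ and the constant matrices $G^j$. Starting from $U\in W^{1,2}(\T^*_1;\Un(m))\hookrightarrow C^{0}$ I would bootstrap: $U,\,U^{*}\in C^{0}$ and $U'\in L^{2}$ force $U''\in L^{1}$, so $U\in W^{2,1}\hookrightarrow C^{1}$; iterating on the number of classical derivatives, $U\in C^k$ implies $\Phi\in C^k$ and hence $U\in C^{k+2}$, so $U\in C^{\infty}$. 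The final step from smoothness to real-analyticity follows from the classical analytic regularity theorem for ODEs whose right-hand side is real-analytic in all of its arguments (cf.\ Hartman, \emph{Ordinary Differential Equations}, Ch.~IV).

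\textbf{Main obstacle.} The only delicate points are: in Case $m=1$, carrying out the algebraic verification that the three brackets of \eqref{EuleroLagrange} combine to produce a genuinely \emph{linear} equation; and in Case $d=1$, the passage from $C^{\infty}$ to real-analyticity, which is classical but nontrivial. In particular no variational argument or critical-growth PDE technique is needed here, since the structural simplifications of the two cases reduce the problem either to a linear elliptic equation with analytic data or to an ODE with an analytic nonlinearity.
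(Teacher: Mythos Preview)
Your proposal is correct and follows essentially the same route as the paper: for $d=1$ both arguments bootstrap from $U''\in L^1$ up to $C^\infty$ and then invoke analytic ODE regularity, and for $m=1$ both reduce \eqref{EuleroLagrange} to the linear Poisson equation $\Delta\theta=-\dive\alpha$ for a local phase and appeal to analytic hypoellipticity of the Laplacian. The only point the paper makes more explicit is that the lifting $U=e^{i\theta}$ with $\theta\in W^{1,2}$ is only guaranteed on small balls (citing \cite{BZ}), so one concludes local analyticity first and then globalizes; your parenthetical ``locally, or on the universal cover'' acknowledges this, but it is worth stating the local argument cleanly.
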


\begin{proof}
Assume $d=1$. Since $U \in W^{1,2}(\Base, \Un(m))$, then
(\ref{EuleroLagrange}) yields $\frac{d^2}{dk^2}U \in L^1$. Thus $U
\in W^{2,1}$. Therefore $U$ is in $C^1$ by Sobolev embedding, and
(\ref{EuleroLagrange}) implies $U \in C^2$. Analogously, if $U \in
C^{n+1}$ solves (\ref{EuleroLagrange}), $n \geq 1$, then $U \in
C^{n+2}$, so by induction $U \in C^{\infty}$. Finally, since $A_j
\in C^{\omega}$ for all $j$, by the standard ODE regularity theory
one obtains $U \in C^{\omega}$.

Now assume $m=1$. Since $\mathcal{U}(1)$ is abelian the equation
\eqref{EuleroLagrange} reduces to
\begin{equation}
\label{ELabelian}
-\Delta U+ \sum_{j=1}^d \frac{\partial U}{\partial k_j} U^{-1} \frac{\partial U}{\partial k_j}= \left( \sum_{j=1}^d \frac{\partial A_j}{\partial k_j} \right) U \, .
\end{equation}
Recall that in any sufficiently small ball $B\subset \mathbb{T}_d^*$
we have $U(k)=e^{if(k)}$ for some $f\in W^{1,2}(B;\mathbb{R})$ (see
\cite{BZ}). Thus, equation \eqref{ELabelian} reads $\Delta
f=i \sum_{j=1}^d \frac{\partial A_j}{\partial k_j} \in
C^{\omega}(B;\mathbb{R})$, because the matrices in the Berry
connection are antihermitian and real-analytic. Since the Laplacian
is analytic-hypoelliptic we conclude $f \in C^{\omega}(B)$ and in
turn $U\in C^{\omega}(\mathbb{T}_d^*; \mathcal{U}(1))$ since the
ball $B$ can be choosen arbitrarily.
\end{proof}

\subsection{Continuity in the two dimensional case}
We are going to prove continuity of any weak solutions to \eqref{EuleroLagrange} in the case $d=2$. The argument here is just sketched, since, up to a standard localization argument, it essentially follows the proof of continuity for weakly harmonic maps from a two dimensional domain into spheres (see e.g. \cite{LW}, Chapter III, Section 3.2 pag. 57-61).

We start with the following auxiliary result which is a straightforward consequence of \cite{CLMS}.
\begin{lemma}
\label{linearregularity} Let $d\geq 2$, $m\geq 2$ and let  $U\in
W^{1,2}(\mathbb{T}_d^*;\mathcal{U}(m))$.   Assume  $\tilde{B}^j \in L^{2}(\mathbb{T}_d^*; \mathfrak{u}(m))$ for
$j \in \set{1,\ldots, d}$ and $ \dive \tilde{B}=0$ in $\mathcal{D}^\prime
(\mathbb{T}^*_d)$.  If $U$ is a weak solution to
\begin{equation}
\label{linearharmonic}
\Delta U= \sum_{j=1}^d  \frac{\partial U}{\partial k_j} \tilde{B}^j+\tilde{f} \qquad  \mbox{and $U^{-1}\tilde{f}\in L^p(\mathbb{T}^*_d; \mathfrak{u}(m))$}
\end{equation}
for some $p>1$, then $U\in W^{2,1}(\mathbb{T}_d^*;\mathcal{U}(m))$.
\end{lemma}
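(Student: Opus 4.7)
The plan is to decompose the right-hand side of \eqref{linearharmonic} into a Hardy-space piece plus an $L^p$ piece, componentwise, and then apply standard elliptic regularity entry by entry. This bypasses the fact that the bilinear term $\nabla U\cdot\tilde B$ is a priori only in $L^1$, which is borderline for Calder\'on-Zygmund theory.

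First I would read \eqref{linearharmonic} entrywise: for $a,c\in\{1,\ldots,m\}$,
\[
\Delta U_{ac}=\sum_{b=1}^{m}\sum_{j=1}^{d}\partial_{k_j}U_{ab}\,(\tilde B^j)_{bc}+\tilde f_{ac}.
\]
For each fixed pair $(a,b)$ the vector field $\nabla U_{ab}=(\partial_{k_1}U_{ab},\ldots,\partial_{k_d}U_{ab})$ belongs to $L^2(\Base;\C^d)$ and is curl-free, while for each fixed pair $(b,c)$ the vector field $\big((\tilde B^1)_{bc},\ldots,(\tilde B^d)_{bc}\big)$ belongs to $L^2(\Base;\C^d)$ and is divergence-free, by the assumption $\dive\tilde B=0$ read componentwise. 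The div-curl lemma of Coifman-Lions-Meyer-Semmes \cite{CLMS}, applied to real and imaginary parts on small coordinate balls covering $\Base$ (with a standard partition of unity), then yields
\[
\sum_{j=1}^{d}\partial_{k_j}U_{ab}\,(\tilde B^j)_{bc}\in\mathcal{H}^1(\Base),
\]
where $\mathcal{H}^1$ denotes the local real Hardy space on the torus.

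Next, since $U(k)$ is pointwise unitary one has $|\tilde f|=|U^{-1}\tilde f|$ almost everywhere, and the hypothesis gives $\tilde f\in L^p(\Base;M_m(\C))$ for some $p>1$. Summing, one obtains
\[
\Delta U_{ac}=g_{ac}+\tilde f_{ac},\qquad g_{ac}\in\mathcal{H}^1(\Base),\ \ \tilde f_{ac}\in L^p(\Base),
\]
and the right-hand side has zero mean because the left-hand side does. Inverting the Laplacian on $\Base$ and differentiating twice amounts to applying a composition of two Riesz transforms; these are bounded from $\mathcal{H}^1$ to $\mathcal{H}^1$ and from $L^p$ to $L^p$ for $p>1$. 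Since $\Base$ is compact, $\mathcal{H}^1(\Base)\subset L^1(\Base)$ and $L^p(\Base)\subset L^1(\Base)$. Therefore $D^2U_{ac}\in L^1(\Base)$ for every $a,c$, which gives $U\in W^{2,1}(\Base;\Un(m))$ as desired.

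The main obstacle is precisely the $L^1$ nature of the bilinear term: without additional structure, elliptic regularity would only produce $D^2U\in W^{-1,p}$ from the $L^p$ piece alone, and the $L^1$ piece would be lost. The saving fact is the compensated-compactness improvement from $L^1$ to $\mathcal{H}^1$, which requires simultaneously the curl-free structure of $\nabla U_{ab}$ and the divergence-free hypothesis on $\tilde B$; these are exactly the two pieces of information we are given.
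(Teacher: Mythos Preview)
Your proof is correct and follows essentially the same route as the paper: read \eqref{linearharmonic} componentwise, use that $\nabla U_{ab}$ is curl-free and $(\tilde B^j)_{bc}$ is divergence-free to invoke the div-curl lemma of \cite{CLMS}, placing the bilinear term in the local Hardy space $\mathcal{H}^1$, and then conclude $W^{2,1}$-regularity from Calder\'on--Zygmund theory on $\mathcal{H}^1$. The only cosmetic difference is that the paper absorbs $\tilde f$ into $\mathcal{H}^1_{\rm loc}$ as well (since $L^p\subset\mathcal{H}^1_{\rm loc}$ for $p>1$) and then quotes \cite[Theorem 3.2.4]{LW} for the final elliptic step, whereas you keep the $\mathcal{H}^1$ and $L^p$ pieces separate and invoke Riesz-transform bounds directly; the content is the same.
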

\begin{proof}
We have
$$
\Delta U_{a,c}=\sum_{b=1}^m \sum_{j=1}^d  \frac{\partial U_{ab}}{\partial k_j}\tilde{B}^j_{bc} +\tilde{f}_{ac} =\sum_{b=1}^m  E_{ab} \cdot \tilde{B}_{ab}+\tilde{f}_{ac} \, ,
$$
where for fixed $a,b,c$ the vector fields $\tilde{B}_{bc}$ and $E_{ab}=\nabla U_{ab}$ are in $L^2$ and satisfy $\dive  \tilde{B}_{bc}=0$ and $\curl E_{ab}=0$ in the sense of distributions. According to \cite{CLMS}, we have $  E_{ab} \cdot \tilde{B}_{bc} \in \mathcal{H}_{\rm loc} ^1(\mathbb{T}^*_d)$ and $\tilde{f}_{ac} \in \mathcal{H}_{\rm loc} ^1(\mathbb{T}^*_d)$, \ie the right hand side of \eqref{linearharmonic} is in the local Hardy space $\mathcal{H}^1_{\rm loc} (\mathbb{T}^*_d) \subset L^1(\mathbb{T}^*_d)$. Here $g \in \mathcal{H}^1_{\rm loc} (\mathbb{T}^*_d)$ means that $g \in \mathcal{H}^1_{\rm loc}(B)$ for every sufficiently small ball $B \subset \T_d^*$ (namely, for every ball with radius smaller than the injectivity radius of the exponential map). For the definition and the basic properties of $\mathcal{H}^1_{\rm loc}(\R^d)$ we refer to \cite[Chapter 3]{St}.
Thus the conclusion follows from \cite[Theorem 3.2.4]{LW}.
\end{proof}

Based on the previous lemma we have the following intermediate result.

\begin{proposition}
\label{approxhmreg} Let $d\geq 2$, $m\geq 2$ and $U\in
W^{1,2}(\mathbb{T}_d^*;\mathcal{U}(m))$.  If $U$ is a weak solution to
\begin{equation}
\label{approxharmonic}
\Delta U= \sum_{j=1}^d \frac{\partial U}{\partial k_j} U^{-1} \frac{\partial U}{\partial k_j} +f
\qquad \mbox{and $U^{-1} f \in L^2(\mathbb{T}^*_d; \mathfrak{u}(m))$,}
\end{equation}
then $U\in W^{2,1}(\mathbb{T}_d^*;\mathcal{U}(m))$. In particular, if $d=2$ then $U\in C^0(\mathbb{T}_2^*;\mathcal{U}(m))$.
\end{proposition}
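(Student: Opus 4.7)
The plan is to reduce the proposition to Lemma~\ref{linearregularity} by rewriting the right hand side of \eqref{approxharmonic} in the required ``linear'' form. Following the viewpoint of harmonic maps into a Lie group, I would introduce the pullback of the (left) Maurer--Cartan form
\[
C^j := U^{-1}\,\frac{\partial U}{\partial k_j}
\in L^2(\mathbb{T}_d^*; \mathfrak{u}(m)), \qquad j = 1,\ldots,d,
\]
so that the nonlinear term $\frac{\partial U}{\partial k_j}\,U^{-1}\,\frac{\partial U}{\partial k_j}$ equals $\frac{\partial U}{\partial k_j}\,C^j$. If $C$ were divergence-free, Lemma~\ref{linearregularity} would apply at once with $\tilde{B}^j = C^j$ and $\tilde f = f$; in general $\dive C$ is not zero, but the defect turns out to be precisely $U^{-1}f$ and can be absorbed by a Hodge correction.

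The first key step is therefore to compute $\dive C$ distributionally. Using $\partial_{k_j}(U^{-1}) = -C^j U^{-1}$ and rewriting \eqref{approxharmonic} equivalently as $\Delta U = U\sum_j (C^j)^2 + f$, a direct calculation yields
\[
\dive C = \sum_j \partial_{k_j}\!\left( U^{-1}\,\partial_{k_j}U \right)
= -\sum_j (C^j)^2 + U^{-1}\Delta U = U^{-1} f \in L^2(\mathbb{T}_d^*; \mathfrak{u}(m)),
\]
the two copies of $\sum_j (C^j)^2$ cancelling thanks to the unitarity constraint $U^* U = \mathbb{I}$. Since $\int_{\mathbb{T}_d^*} \dive C = 0$, I can then Hodge-decompose on the torus (entry-wise via Fourier series) $C^j = \tilde{B}^j + \partial_{k_j}\phi$, where $\tilde{B}^j \in L^2(\mathbb{T}_d^*; \mathfrak{u}(m))$ satisfies $\dive \tilde{B} = 0$ and $\phi \in W^{2,2}(\mathbb{T}_d^*; \mathfrak{u}(m))$ is the zero-mean solution of $\Delta\phi = U^{-1} f$.

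Plugging back into \eqref{approxharmonic} produces the equation $\Delta U = \sum_j \partial_{k_j}U \cdot \tilde{B}^j + \tilde f$, with $\tilde f := f + \sum_j \partial_{k_j}U \cdot \partial_{k_j}\phi$. To apply Lemma~\ref{linearregularity} it remains to check that $U^{-1}\tilde f$ lies in $L^p$ for some $p>1$. The piece $U^{-1}f$ is in $L^2$ by hypothesis; for $U^{-1}\sum_j \partial_{k_j}U \cdot \partial_{k_j}\phi = \sum_j C^j\,\partial_{k_j}\phi$, the Sobolev embedding $\nabla\phi \in L^{2d/(d-2)}$ when $d \geq 3$ (respectively $\nabla\phi \in L^q$ for any $q<\infty$ when $d=2$), combined with $C^j \in L^2$ and H\"older's inequality, gives a bound in $L^{d/(d-1)}$, and $d/(d-1) > 1$ for $d \geq 2$. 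Lemma~\ref{linearregularity} therefore yields $U \in W^{2,1}(\mathbb{T}_d^*; \mathcal{U}(m))$, and for $d=2$ the classical embedding $W^{2,1}(\mathbb{T}_2^*) \hookrightarrow C^0$ (which for compactly supported smooth functions follows from $|u(x)| \leq \|\partial_1\partial_2 u\|_{L^1}$ and is extended by density) delivers continuity. The step I expect to be the main technical obstacle is precisely the divergence identity $\dive C = U^{-1}f$: the cancellation of the two quadratic terms is what allows one to extract a divergence-free principal part and activate the compensated-compactness machinery of \cite{CLMS}, and it depends crucially on the unitary constraint together with the specific form of the nonlinearity in \eqref{approxharmonic}.
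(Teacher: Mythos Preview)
Your proof is correct and follows essentially the same route as the paper. Note that your $C^j = U^{-1}\partial_{k_j}U$ coincides with the paper's $B^j = \tfrac12(U^*\partial_{k_j}U - \partial_{k_j}U^*\,U)$ because $U$ is unitary (so $\partial_{k_j}U^*\,U = -U^*\partial_{k_j}U$), and your divergence identity $\dive C = U^{-1}f$ agrees with the paper's $\dive B = \tfrac12(U^*f - f^*U)$ since $U^{-1}f\in\mathfrak{u}(m)$; the Hodge correction and appeal to Lemma~\ref{linearregularity} are then identical. The only genuine difference is in the $d=2$ endgame: the paper passes through the Lorentz-space embeddings $W^{2,1}\Rightarrow \nabla U\in L^{2,1}\Rightarrow U\in C^0$ (citing \cite{LW}), whereas you invoke the direct borderline embedding $W^{2,1}(\mathbb{T}_2^*)\hookrightarrow C^0$, which is more elementary and equally valid.
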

\begin{proof}
If we set $B^j=\frac{1}{2}\left( U^* \frac{\partial U}{\partial k_j}
- \frac{\partial U^*}{\partial k_j}U \right)$ then $ B^j \in
L^{2}(\mathbb{T}_d^*; \mathfrak{u}(m))$ for $j \in \set{1, \ldots, d}$, $U$ is
a weak solution to \eqref{linearharmonic} but $\dive B=
\frac{1}{2}(U^*f- f U^*) \neq 0$ in $\mathcal{D}^\prime
(\mathbb{T}^*_d)$. Note that $\dive B \in L^2_0(\mathbb{T}_d^*;
\mathfrak{u}(m))$, the space of zero-mean $L^2$-integrable functions, hence $\Delta^{-1} \, \dive B \in
W^{2,2}(\mathbb{T}_d^*; \mathfrak{u}(m))$ by elliptic regularity and
if we set
\[ \tilde{B}= B- \nabla \Delta^{-1} \dive B \, , \qquad  \tilde{f}=f+ \nabla U \cdot \nabla \Delta^{-1} \dive B \, ,  \]
then $\tilde{B}$, $\tilde{f}$ and $U$ satisfy the assumptions of Lemma \ref{linearregularity} by Sobolev embedding (for some $p>1$ depending only on $d$), hence we obtain $U\in W^{2,1}(\mathbb{T}_d^*;\mathcal{U}(m))$. Finally, if $d=2$ the improved Sobolev embeddings into Lorentz spaces yield $\nabla U \in L^{2,1}$ and in turn $U \in C^0$ (see \cite{LW}, Theorem 3.2.7 and 3.2.8 respectively).
\end{proof}

Going back to equation \eqref{EuleroLagrange} we have the  following
important consequence
\begin{corollary}
\label{2dcontinuity}
Let $d=2$, $m\geq 2$ and $U\in W^{1,2}(\mathbb{T}_2^*;\mathcal{U}(m))$ a weak solution to equation \eqref{EuleroLagrange}. Then $U\in W^{2,1}(\mathbb{T}_2^*;\mathcal{U}(m))$ and $U\in C^0(\mathbb{T}_2^*;\mathcal{U}(m))$.
\end{corollary}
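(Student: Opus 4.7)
The plan is to derive the corollary as a direct application of Proposition \ref{approxhmreg}. Concretely, I would rewrite the Euler-Lagrange equation \eqref{EuleroLagrange} in the form \eqref{approxharmonic},
\begin{equation*}
\Delta U = \sum_{j=1}^{2} \frac{\partial U}{\partial k_j} U^{-1} \frac{\partial U}{\partial k_j} + f,
\end{equation*}
where $f$ collects the remaining terms from the second and third lines of \eqref{EuleroLagrange}, namely
\begin{equation*}
f = \sum_{j=1}^{2}\!\left[\frac{\partial U}{\partial k_j} U^{-1} A_j U - \frac{\partial A_j}{\partial k_j} U - A_j \frac{\partial U}{\partial k_j}\right] + \sum_{j=1}^{2}\!\left[-\left(\frac{\partial U}{\partial k_j}+A_j U\right)G^j + U G^j U^{-1}\!\left(\frac{\partial U}{\partial k_j}+A_j U\right)\right].
\end{equation*}

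Under Assumption 2, which is automatically granted in dimension $d=2$ by Theorem \ref{Theorem triviality}, the Berry coefficients $A_j$ defined in \eqref{berry} are real-analytic on $\mathbb{T}_2^*$, hence both $A_j$ and $\partial_{k_j}A_j$ are in $L^{\infty}$; the matrices $G^j$ are constant. Since $U,\,U^{-1}=U^{*}\in\mathcal{U}(m)$ are pointwise bounded and $\nabla U\in L^{2}$, every summand of $f$ is a product of an $L^{2}$ factor by bounded factors. Therefore $f\in L^{2}(\mathbb{T}_2^*;M_m(\mathbb{C}))$ and consequently $U^{-1}f\in L^{2}(\mathbb{T}_2^*;M_m(\mathbb{C}))$.

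The only point that is not a routine bookkeeping is to verify that $U^{-1}f$ takes values in $\mathfrak{u}(m)$, as required by the hypothesis of Proposition \ref{approxhmreg}. This is, however, forced by the unitarity of $U$, independently of the particular form of $f$. Indeed, differentiating $U^{*}U=\mathbb{I}$ shows that each $U^{*}\partial_j U$ is antihermitian; differentiating once more and summing over $j$ yields
\begin{equation*}
U^{*}\Delta U + (\Delta U)^{*} U = -2\sum_{j=1}^{2}\partial_j U^{*}\,\partial_j U,
\end{equation*}
while a direct algebraic manipulation using $U^{-1}=U^{*}$ and the antihermiticity of $U^{*}\partial_j U$ gives
\begin{equation*}
U^{*}\sum_{j=1}^{2}\frac{\partial U}{\partial k_j}\,U^{-1}\,\frac{\partial U}{\partial k_j} = -\sum_{j=1}^{2}\partial_j U^{*}\,\partial_j U.
\end{equation*}
Combining these two identities shows that $U^{-1}\!\bigl(\Delta U - \sum_{j}\partial_j U\,U^{-1}\partial_j U\bigr)$ is pointwise antihermitian, so by the equation $U^{-1}f\in\mathfrak{u}(m)$ pointwise.

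With both hypotheses of Proposition \ref{approxhmreg} verified, that proposition yields $U\in W^{2,1}(\mathbb{T}_2^*;\mathcal{U}(m))$ and, since $d=2$, also $U\in C^{0}(\mathbb{T}_2^*;\mathcal{U}(m))$, concluding the argument. The only non-mechanical step is the antihermiticity check; I expect it to be straightforward but slightly tedious, as it is a purely algebraic consequence of the unitarity constraint and does not require any PDE input beyond the form of the equation itself.
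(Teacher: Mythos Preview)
Your proposal is correct and follows essentially the same approach as the paper: rewrite \eqref{EuleroLagrange} in the form \eqref{approxharmonic}, check that $U^{-1}f\in L^{2}(\mathbb{T}_2^*;\mathfrak{u}(m))$, and apply Proposition~\ref{approxhmreg}. Your treatment is in fact more detailed than the paper's, which simply asserts that $U^{-1}f\in L^{2}(\mathbb{T}_2^*;\mathfrak{u}(m))$ ``in view of the structure of equation \eqref{EuleroLagrange}''; your verification of the antihermiticity via the identity $U^*\Delta U+(\Delta U)^*U=-2\sum_j\partial_jU^*\partial_jU$ together with $U^*\sum_j\partial_jU\,U^{-1}\partial_jU=-\sum_j\partial_jU^*\partial_jU$ is a clean way to see this without inspecting $f$ term by term.
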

\begin{proof}
If $U\in W^{1,2}(\mathbb{T}_2^*;\mathcal{U}(m))$ is a weak solution to
equation \eqref{EuleroLagrange} then $U$ satisfies also
\eqref{approxharmonic} for a suitable $f$ depending on $U$. Here
$U^{-1} f \in L^2(\mathbb{T}^*_2; \mathfrak{u}(m))$ because of the
regularity property of $U$ and the Berry connection $A_j \in C^{\omega}(\mathbb{T}_2^*,\mathfrak{u}(m))$,
in view of the structure of equation \eqref{EuleroLagrange}. Thus, the conclusion
follows from Proposition \ref{approxhmreg}.
\end{proof}

\subsection{Continuity in the three dimensional case}
The goal of this subsection is to show that if $d=3$ then minimizers
of the localization functional \eqref{MVU} are continuous, at least
if $m\leq3$. Roughly speaking the idea is to prove that at smaller
and  smaller scales minimizers look like minimizing harmonic maps
into the unitary group $\mathcal{U}(m)$ which are degree-zero
homogeneous. Since the latter are constant, at least for $m\leq3$ (see Corollary \ref{liouville} below),
then the former are continuous at a sufficiently small scale. All
the techniques in this section are inspired by the regularity theory
for minimizing harmonic maps (see \cite{SU1},  \cite{S} and
\cite{LW}).

The first condition we need to study minimizers at small scales is the stationarity condition with respect
to inner variations.

\begin{lemma}
\label{innervariations} Let $d \geq 3$ and let $U\in
W^{1,2}(\Base;\mathcal{U}(m))$ be a minimizer of \eqref{MVU}. For
each $1\leq j\leq d$ define constant diagonal matrices $G^j= \diag
\int U^*(\frac{\partial U}{\partial k_j}+A_j U) dk$. Let $\Phi\in
C^\infty(\mathbb{T}^*_d;\mathbb{R}^d)$ be a smooth vector field and
$\Psi_\varepsilon(k)=k+\varepsilon \, \Phi(k)$  be a family of
diffeomorphisms (for $\varepsilon$ small enough). Then

\[ \left. \frac{d}{d\varepsilon} \right|_{\varepsilon=0} \tilde{F}_{MV}(U\circ \Psi_\varepsilon ; \chi)=\sum_{j,c}\int \frac{\partial \Phi^c}{\partial k_c} \tr \left[ -\frac{\partial U^*}{\partial k_j}\frac{\partial U}{\partial k_j}+2 \frac{\partial U}{\partial k_j}U^*A_j-2 G^jU^* \left(\frac{\partial U}{\partial k_j}+A_j U\right) \right]+\]
\[ \sum_{j,c} \int \frac{\partial \Phi^c}{\partial k_j} \tr \left[ \frac{\partial U^*}{\partial k_c}\frac{\partial U}{\partial k_j} +\frac{\partial U^*}{\partial k_j}\frac{\partial U}{\partial k_c}-2 \frac{\partial U}{\partial k_c}U^*A_j+2 G^j U^* \frac{\partial U}{\partial k_c} \right] + \]
\begin{equation}
\label{stationary}
 \sum_{j,c} \int  \Phi^c \tr \left[ 2 \frac{\partial U}{\partial k_j} \frac{\partial A_j}{\partial k_c}-2 G^j U^* \frac{\partial A_j}{\partial k_c}U \right] \, .
 \end{equation}
\end{lemma}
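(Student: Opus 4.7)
The plan is to set $V_\varepsilon(k):=U(\Psi_\varepsilon(k))$, which lies in $W^{1,2}(\Base;\mathcal{U}(m))$ for $|\varepsilon|$ small since $\Psi_\varepsilon$ is then a diffeomorphism of $\Base$, and to compute $\tfrac{d}{d\varepsilon}\tilde{F}_{MV}(V_\varepsilon;\chi)\big|_{\varepsilon=0}$ term by term on the three lines of \eqref{MVU}. The cleanest route is the change of variables $l=\Psi_\varepsilon(k)$ inside each integral, so that all the $\varepsilon$-dependence is concentrated in three explicit elementary factors: the inverse Jacobian $J^{-1}(l)$, with $J(l):=\det(\Id+\varepsilon\,D\Phi(\Psi_\varepsilon^{-1}(l)))$; the chain-rule matrix $M_{jc}(l):=\delta_{jc}+\varepsilon\,\partial_{k_j}\Phi^c(\Psi_\varepsilon^{-1}(l))$ arising from $\partial_{k_j}V_\varepsilon(k)=\sum_c M_{jc}(l)\,\partial_{l_c}U(l)$; and the translated connection $A_j(\Psi_\varepsilon^{-1}(l))$. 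For the quadratic third line of \eqref{MVU} there is the extra chain rule on the squaring, which produces the prefactor $2\,G^j_{aa}\,(I^j_a)'(0)$ with $I^j_a(\varepsilon):=\int[V_\varepsilon^*(\partial_{k_j}V_\varepsilon+A_j V_\varepsilon)]_{aa}\,dk$ and $I^j_a(0)=G^j_{aa}$ by the very definition of $G^j$.

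Next I would expand to first order in $\varepsilon$:
\[
J^{-1}=1-\varepsilon\sum_c\partial_{k_c}\Phi^c+O(\varepsilon^2),\qquad (M^{T}M)_{cc'}=\delta_{cc'}+\varepsilon\bigl(\partial_{k_c}\Phi^{c'}+\partial_{k_{c'}}\Phi^c\bigr)+O(\varepsilon^2),
\]
together with $A_j(\Psi_\varepsilon^{-1}(l))=A_j(l)-\varepsilon\sum_c\Phi^c(l)\,\partial_{k_c}A_j(l)+O(\varepsilon^2)$, and then sort the resulting first-order contributions by the type of derivative of $\Phi$ they carry: the inverse Jacobian feeds all the $\partial_{k_c}\Phi^c$-terms of \eqref{stationary}, the matrix $M$ feeds the $\partial_{k_j}\Phi^c$-terms, and the translation of $A_j$ feeds the $\Phi^c$-terms on the last line of \eqref{stationary}. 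The key algebraic simplification to match the compact form of the statement is the identity
\[
U\,\partial_{k_j}U^{*}-\partial_{k_j}U\,U^{*}=-2\,\partial_{k_j}U\,U^{*},
\]
obtained by differentiating $UU^{*}=\Id$, together with cyclicity of the trace.

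The main obstacle is not conceptual but rather bookkeeping: one has to track signs carefully, exploit the symmetric structure of $(M^{T}M)_{cc'}$ to pair up the two gradient-type derivatives of $U$ in the Dirichlet variation, and relabel indices so that the three groups above reorganise into exactly the three lines on the right-hand side of \eqref{stationary}. Once this routine computation is executed, the identity \eqref{stationary} follows; minimality of $U$ is not used in the derivation of the formula itself, but will later translate into the stationarity condition that its left-hand side vanishes for every admissible $\Phi$.
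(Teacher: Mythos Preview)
Your proposal is correct and follows essentially the same route as the paper: change variables $l=\Psi_\varepsilon(k)$, expand the Jacobian, the chain-rule matrix for $\partial_{k_j}U_\varepsilon$, and $A_j\circ\Psi_\varepsilon^{-1}$ to first order in $\varepsilon$, handle the squared term via $2G^j_{aa}(I^j_a)'(0)$, and reorganise using $U\partial_jU^*=-\partial_jU\,U^*$ and cyclicity of the trace. The paper packages the first two ingredients into the single auxiliary identity $\left.\tfrac{d}{d\varepsilon}\right|_{0}\int f(\Psi_\varepsilon)g=-\int fg\,\dive\Phi-\sum_c\int f\,\partial_{k_c}g\,\Phi^c$, but otherwise the computation and its organisation into the three groups of terms are the same.
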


\begin{proof}
First note that $\det D\Psi_\varepsilon^{-1}(k)=1-\varepsilon \dive \Phi+o(\varepsilon)$ uniformly on $\mathbb{T}^*_d$. As a  consequence the change of variable and the chain rule give
\begin{equation}
\label{auxstationary}
\left. \frac{d}{d\varepsilon}\right|_{\varepsilon=0} \int f(\Psi_\varepsilon(k))g(k)dk=-\int f(k) g(k) \dive\Phi(k) dk - \sum_{c} \int f(k)\frac{\partial g}{\partial k_c} (k) \Phi^c(k) dk \, .
\end{equation}
for any $f \in L^1$ and for any $g\in C^1$.
Now we set $U_\varepsilon(k)=U\circ\Psi_\varepsilon (k)$, so that
\begin{equation}
\label{auxstationary2}
 \frac{\partial U_\varepsilon}{\partial k_j}(k)=\sum_c \frac{\partial U}{\partial k_c}(\Psi_\varepsilon(k))\frac{\partial \Psi_{\varepsilon}^c (k)}{\partial k_j}= \sum_c \frac{\partial U}{\partial k_c}(\Psi_\varepsilon(k))(\delta^c_j+\varepsilon \frac{\partial \Phi^c(k)}{\partial k_j}) \, .
 \end{equation}
Thus, a repeated application of \eqref{auxstationary} and \eqref{auxstationary2} gives
\begin{equation}
\label{stationary1}
\left. \frac{d}{d\varepsilon}\right|_{\varepsilon=0} \sum_j \int \tr\left[ \frac{\partial U^*_\varepsilon}{\partial k_j}\frac{\partial U_\varepsilon}{\partial k_j}\right] dk=- \sum_j \int \dive \Phi(k) \, \tr\left[ \frac{\partial U^*_\varepsilon}{\partial k_j}\frac{\partial U_\varepsilon}{\partial k_j}\right]+
\end{equation}
\[ \sum_{j,c} \int \frac{\partial \Phi^c(k)}{\partial k_j}\tr\left[ \frac{\partial U^*}{\partial k_c}\frac{\partial U}{\partial k_j} +\frac{\partial U^*}{\partial k_j}\frac{\partial U}{\partial k_c} \right]\, , \]
\begin{equation}
\label{stationary2}
\left. \frac{d}{d\varepsilon}\right|_{\varepsilon=0} \sum_j \int \tr \left[ \left( U_\varepsilon \frac{\partial U^*_\varepsilon}{\partial k_j}-\frac{\partial U_\varepsilon}{\partial k_j}U^*_\varepsilon \right) A_j(k) \right] dk=
\end{equation}
\[ =- \sum_{j,c} \int  \, \tr \left[ \left( U \frac{\partial U^*}{\partial k_j}-\frac{\partial U}{\partial k_j}U^* \right) \left( A_j(k) \frac{\partial \Phi^c}{\partial k_c}+ \frac{\partial A_j}{\partial k_c}\Phi^c \right) \right] dk +\]
\[+\sum_{j,c} \int \tr \left[ \left( U \frac{\partial U^*}{\partial k_c}-\frac{\partial U}{\partial k_c}U^* \right) A_j(k) \frac{\partial \Phi^c}{\partial k_j} \right] dk .  \]
Similarly, taking the definition of $G^j$ into account, one has
\begin{equation}
\label{stationary3}
\left. \frac{d}{d\varepsilon}\right|_{\varepsilon=0} \sum_{j, a} \left( \int \left[ U^*_\varepsilon \left(\frac{\partial U_\varepsilon}{\partial k_j}+A_j U_\varepsilon \right) \right]_{aa} \right)^2=
\end{equation}
\[= \sum_{j,c} \int \tr\left[- 2 G^j  U^* \left( \frac{\partial U}{\partial k_j}+A_jU \right) \frac{\partial \Phi^c}{\partial k_c}+2 G^j U^* \frac{\partial U}{\partial k_c} \frac{\partial \Phi^c}{\partial k_j}-2G^jU^* \frac{\partial A_j}{\partial k_c} U \Phi^c \right] \, . \]
Combining \eqref{stationary1}, \eqref{stationary2} and \eqref{stationary3} with the definition of \eqref{MVU} and reorganizing the sum we easily have \eqref{stationary}.
\end{proof}

The first consequence we obtain is a sort of perturbed monotonicity formula (in the spirit of the monotonicity formula for almost harmonic maps; see \cite{Mos}, Chapter 4).

\begin{proposition}
\label{monotonicity} Let $d\geq3$ and let $U\in
W^{1,2}(\mathbb{T}_d^*;\mathcal{U}(m))$ be a minimizer of the
localization functional \eqref{MVU}. Then there exist $C>0$ and
$\bar{R}>0$ such that for each $k_0\in \mathbb{T}^*_d$ and for each
$R_1, R_2$ with $0<R_1\leq R_2 \leq \bar{R}$ we have
\begin{equation}
\label{almostmon}
 \frac{1}{R_1^{d-2}}\int_{B_{R_1}(k_0)} \!\!\!\!\!\! |\nabla U|^2+\int_{R_1<|k-k_0|<R_2} \frac{2}{|k-k_0|^{d-2}} \left|\frac{\partial U}{\partial r}\right|^2 \leq 2CR_2+(1+ 2CR_2) \frac{1}{R_2^{d-2}} \int_{B_{R_2}(k_0)} \!\!\!\!\!\! |\nabla U|^2  \, . \end{equation}

\noindent As a consequence, the function $\displaystyle{R^{2-d} \int_{B_R(k_0)}  | \nabla U|^2dk}$ has a limit and $ \int_{B_R(k_0)}  | \partial_r U|^2|k-k_0|^{2-d}dk $ is finite and vanishes as $R \to 0^+$.
\end{proposition}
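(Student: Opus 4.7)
The approach is a classical Pohozaev/inner-variation argument, treating $\tilde F_{MV}$ as a Dirichlet integral perturbed by lower-order terms involving the Berry connection $A_j$ and the constant diagonal matrices $G^j$. Fix $k_0\in\Base$, set $r=|k-k_0|$, and test the stationarity identity \eqref{stationary} against a radial dilation field $\Phi(k)=\eta(r)(k-k_0)$, where $\eta\in W^{1,\infty}([0,\infty))$ is a Lipschitz cutoff equal to $1$ on $[0,R]$, linearly decreasing on $[R,R+\varepsilon]$ and vanishing afterwards, with $\bar R+\varepsilon$ less than the injectivity radius of $\Base$ at $k_0$. (By density, \eqref{stationary} extends to such Lipschitz fields.) Using $\partial_{k_j}\Phi^c=\eta(r)\delta_{cj}+\eta'(r)(k-k_0)_c(k-k_0)_j/r$ together with the identities $\sum_j\tr(\partial_jU^*\partial_jU)=|\nabla U|^2$ and $\sum_{c,j}(k-k_0)_c(k-k_0)_j\tr(\partial_cU^*\partial_jU)/r=r|\partial_r U|^2$, the purely $\nabla U$-quadratic terms of \eqref{stationary} produce, in the limit $\varepsilon\to 0^+$, the classical Pohozaev contribution
\[
\mathcal D(R):= -(d-2)\int_{B_R(k_0)}|\nabla U|^2\,dk+R\int_{\partial B_R(k_0)}|\nabla U|^2\,d\sigma-2R\int_{\partial B_R(k_0)}|\partial_rU|^2\,d\sigma,
\]
valid for a.e. $R\in(0,\bar R)$.

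All remaining terms of \eqref{stationary} collectively form a perturbation $\mathcal P(R)$ that contains at most one factor of $\nabla U$. Since $\chi$ is real-analytic on the compact torus $\Base$ the tensors $A_j,\partial_cA_j$ are bounded, and since $U$ is a minimizer $\tilde F_{MV}(U;\chi)\leq\tilde F_{MV}(\Id;\chi)<+\infty$, whence inspection of the third line of \eqref{MVU} yields $|G^j|\leq C$. Setting $E(R):=R^{2-d}\int_{B_R(k_0)}|\nabla U|^2$, Cauchy-Schwarz and Young's inequality then produce, for any small $\epsilon>0$,
\[
|\mathcal P(R)|\leq \epsilon\,R\int_{\partial B_R(k_0)}|\nabla U|^2\,d\sigma + C_\epsilon R^{d-1}\sqrt{E(R)} + C_\epsilon R^d,
\]
the factor $R^{d-1}\sqrt{E(R)}$ coming from interior contributions estimated via Cauchy-Schwarz on $B_R$, while the boundary prefactor $R$ is inherited from $r\eta'(r)$ and from $|\Phi|\leq r\eta\leq R$ in the third line of \eqref{stationary}. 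Combining $\mathcal D(R)+\mathcal P(R)=0$ with the derivative identity $R^{d-1}E'(R)=-(d-2)\int_{B_R}|\nabla U|^2+R\int_{\partial B_R}|\nabla U|^2$ and absorbing the $\epsilon\,R\int_{\partial B_R}|\nabla U|^2$ term to the left-hand side, one arrives at a differential inequality of the form
\[
(1+\epsilon)\,E'(R)+C\,E(R)\geq 2\,R^{2-d}\int_{\partial B_R(k_0)}|\partial_r U|^2\,d\sigma - C.
\]
Multiplying by the integrating factor $e^{CR/(1+\epsilon)}$ and integrating on $(R_1,R_2)\subset(0,\bar R)$, with $\bar R$ so small that $e^{C\bar R/(1+\epsilon)}\leq 1+2CR_2$ (after possibly enlarging $C$), yields \eqref{almostmon}. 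The two consequences then follow immediately: the inequality just derived implies that $R\mapsto(E(R)+C'R)e^{CR}$ is nondecreasing on $(0,\bar R]$, so $E(R)$ admits a finite limit as $R\to 0^+$; fixing $R_2=\bar R$ and letting $R_1\to 0^+$ in \eqref{almostmon} gives $\int_{B_{\bar R}(k_0)}|k-k_0|^{2-d}|\partial_r U|^2\,dk<+\infty$, whence absolute continuity of the Lebesgue integral forces its $B_R(k_0)$-truncation to vanish as $R\to 0^+$.

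The main technical delicacy lies in the estimate of $\mathcal P(R)$: those terms of \eqref{stationary} with coefficient $\dive\Phi$ do not carry any explicit $R$-factor in the sharp-cutoff limit, yet must nonetheless be of lower order relative to $R^{d-1}E'(R)$. The key observation is that each such term contains at most one factor of $\nabla U$ against uniformly bounded tensors ($A_j$, $\partial_cA_j$, $G^j$), so Cauchy-Schwarz on $B_R$ produces precisely the scaling $R^{d/2}\cdot R^{(d-2)/2}\sqrt{E(R)}=R^{d-1}\sqrt{E(R)}$, which after Young's inequality is absorbed into a multiple of $E(R)$ plus the lower-order error that yields the $2CR_2$ correction in \eqref{almostmon}.
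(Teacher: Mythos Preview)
Your overall strategy---inner variation against a radial field, isolate the Dirichlet/Pohozaev part $\mathcal D(R)$, treat everything else as a lower-order perturbation $\mathcal P(R)$---is exactly the right one and matches the paper. The gap is in the passage from $\mathcal D(R)+\mathcal P(R)=0$ to the differential inequality
\[
(1+\epsilon)\,E'(R)+C\,E(R)\;\geq\;2\,R^{2-d}\!\int_{\partial B_R}|\partial_r U|^2-C.
\]
When you absorb the boundary term $\epsilon R\int_{\partial B_R}|\nabla U|^2$ using the identity $R\int_{\partial B_R}|\nabla U|^2=R^{d-1}E'(R)+(d-2)R^{d-2}E(R)$ and divide by $R^{d-1}$, you obtain $\epsilon E'(R)+\epsilon(d-2)R^{-1}E(R)$, not $\epsilon E'(R)+CE(R)$. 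The spurious factor $R^{-1}$ cannot be removed: the correct integrating factor would then be $R^{-\alpha}$ (for $\alpha\sim\epsilon$) rather than $e^{CR}$, and this changes the form of the resulting inequality. In particular it no longer directly yields the radial-derivative integral with the unweighted power $|k-k_0|^{2-d}$ that appears in \eqref{almostmon}, nor the clean multiplicative constant $1+2CR_2$.

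The paper avoids this trap by \emph{not} trying to close a differential inequality at each fixed $R$. Instead it divides the Pohozaev identity by $R^{d-1}$ and integrates on $[R_1,R_2]$ first; by coarea the boundary perturbation becomes an annular volume integral $\int_{R_1<|k|<R_2}|k|^{2-d}(1+|\nabla U|)$. This and the companion interior term are then estimated by a dyadic decomposition, yielding a bound $CR_2\bigl(1+\sup_{r\in[R_1,R_2]}E(r)\bigr)$. The $\sup$ on the right is then bootstrapped: taking the supremum over $R_1$ on the left of the integrated identity and using $CR_2\leq\tfrac12$ absorbs it, giving $\sup_{r\in[R_1,R_2]}E(r)\leq 1+2E(R_2)$, after which \eqref{almostmon} follows. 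Your ``Young at each $R$ then Gronwall'' shortcut tries to bypass this bootstrap, but the boundary term is genuinely too singular for that; you must either integrate first and run the dyadic/bootstrap argument, or accept the $R^{-\alpha}$ correction and do extra work to recover the precise statement.
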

\begin{proof}
Clearly by translation invariance it suffices to consider the case $k_0=0$. We take $h\in C^\infty(\mathbb{R})$ an increasing function such that $h(t)\equiv 0$ for $t<0$ and $h(t)\equiv 1$ for $t\geq 1$. If we take $\Phi(k)=k \, h(R-|k|)$ in \eqref{stationary} we have $\dive \Phi (k)=d \, h(R - |k|) -|k| h^\prime(R - |k|)$ ($d \in \N$ is the dimension) and
\begin{equation}
\label{auxstationary3}
\frac{\partial \Phi^c}{\partial k_j}=\delta^c_j h-\frac{k_ck_j}{|k|}h^\prime \, ,  \, \, \sum_c \frac{\partial U}{\partial k_c}\frac{\partial \Phi^c}{\partial k_j}= \frac{\partial U}{\partial k_j} h-k_j \frac{\partial U}{\partial r} h^\prime \, , \, \, \sum_c \frac{\partial A_j}{\partial k_c}\Phi^c=|k| \frac{\partial A_j}{\partial r} h \, .
\end{equation}
Combining \eqref{stationary} and \eqref{auxstationary3} we get
\[ 0=\sum_{j}\int_{B_R} \left(d h-|k| h^\prime\right)  \tr \left[ -\frac{\partial U^*}{\partial k_j}\frac{\partial U}{\partial k_j}+2 \frac{\partial U}{\partial k_j}U^*A_j-2 G^jU^* \left(\frac{\partial U}{\partial k_j}+A_j U\right) \right]+\]
\[ -\sum_{j} \int_{B_R} k_j \tr \left[ \frac{\partial U^*}{\partial r}\frac{\partial U}{\partial k_j} +\frac{\partial U^*}{\partial k_j}\frac{\partial U}{\partial r}-2 \frac{\partial U}{\partial r}U^*A_j+2 G^j U^* \frac{\partial U}{\partial r} \right] h^\prime + \]
\begin{equation}
\label{premonotonicity1}
+2\sum_{j} \int_{B_R}    \tr \left[ \frac{\partial U^*}{\partial k_j}\frac{\partial U}{\partial k_j} - \frac{\partial U}{\partial k_j}U^*A_j+ G^j U^* \frac{\partial U}{\partial k_j} +  |k|\frac{\partial U}{\partial k_j} \frac{\partial A_j}{\partial r} - |k| G^j U^* \frac{\partial A_j}{\partial r}  U \right] h  \, .
 \end{equation}
 Passing to the limit as $h\to \chi_{\{t>0\}}$, as in \cite[Chapter 2]{S}, yields
 \begin{equation}
 \label{premonotonicity2}
  0=\sum_{j}\int_{B_R}   \tr \left[ (2-d) \frac{\partial U^*}{\partial k_j}\frac{\partial U}{\partial k_j} \right]+R \sum_j \int_{\partial B_R} \tr \left[ \frac{\partial U^*}{\partial k_j}\frac{\partial U}{\partial k_j} \right] - 2R \int_{\partial B_R} \tr \left[ \frac{\partial U^*}{\partial r}\frac{\partial U}{\partial r} \right]
 \end{equation}

 \[ -\sum_{j}\int_{\partial B_R} |k|   \tr \left[ 2 \frac{\partial U}{\partial k_j}U^*A_j-2 G^jU^* \left(\frac{\partial U}{\partial k_j}+A_j U\right)  -2 \frac{k_j}{|k|}\frac{\partial U}{\partial r}U^*A_j+2 \frac{k_j}{|k|}G^j U^* \frac{\partial U}{\partial r} \right]+\]

 \[ 2\sum_{j} \int_{B_R}    \tr \left[    (d-1) \left(\frac{\partial U}{\partial k_j}U^*A_j- G^jU^* \frac{\partial U}{\partial k_j}\right) -d G^jU^* A_j U +|k|\frac{\partial U}{\partial k_j} \frac{\partial A_j}{\partial r} - |k| G^j U^* \frac{\partial A_j}{\partial r}  U  \right]   \, . \]
 Note that $R^{d-1}\frac{d}{dR}  \left( R^{2-d} \int_{B_R} f \right)=(2-d) \int_{B_R}f-R \int_{\partial B_R} f$ for any integrable function $f$ and for a.e. $R>0$.
 Thus, dividing in \eqref{premonotonicity2} by $R^{d-1}$ and integrating on $[R_1,R_2]$ we have
 \begin{equation}
 \label{monformula}
 \frac{1}{R_1^{d-2}}\int_{B_{R_1}}  |\nabla U|^2+\int_{R_1<|k|<R_2} \frac{2}{|k|^{d-2}} \left|\frac{\partial U}{\partial r}\right|^2=\frac{1}{R_2^{d-2}}\int_{B_{R_2}}  |\nabla U|^2
 \end{equation}
 \[+\int_{R_1<|k|<R_2} V(k,U)+\int_{R_1}^{R_2} \frac{dR}{R^{d-1}} \int_{B_R} W(k,U) \, ,\]
where
\[V(k,U)= 2|k|^{2-d}\tr \left[  -\frac{\partial U}{\partial k_j}U^*A_j+ G^jU^* \left(\frac{\partial U}{\partial k_j}+A_j U\right)  + \frac{k_j}{|k|}\frac{\partial U}{\partial r}U^*A_j- \frac{k_j}{|k|}G^j U^* \frac{\partial U}{\partial r} \right] \]
and
\[W(k,U)= 2 \tr \left[    (d-1) \left(\frac{\partial U}{\partial k_j}U^*A_j- G^jU^* \frac{\partial U}{\partial k_j}\right) -d G^jU^* A_j U +|k|\frac{\partial U}{\partial k_j} \frac{\partial A_j}{\partial r} - |k| G^j U^* \frac{\partial A_j}{\partial r}  U  \right] \]
Notice that the first line in \eqref{monformula} is the usual
monotonicity identity for harmonic maps from $\mathbb{T}^*_d$  with
values into $\mathcal{U}(m)$. To estimate the extra terms in
\eqref{monformula}, observe that for fixed matrices $G^j$ we have
$|V(k,U)|\leq C|k|^{2-d} (1+|\nabla U|)$ where the constant depends
on the $C^0$ norm of the Berry connection matrices $A_j$. Similarly
we have $|W(k,U)|\leq C(1+|\nabla U|)$ where the constant depends
only on the $C^1$ norm of the Berry connection matrices. Thus, if
$l_0\geq 1$ is an integer and $R_1\geq 2^{-l_0}R_2$, $R_2\leq 1$ and
$0<\delta=R_2\leq 1$ we have (writing $\ \avint_{B}\ $ for $\frac{1}{|B|}\int_B$)

\[ \int_{R_1<|k|<R_2} \left| V(k,U) \right|\leq C \int_{R_1<|k|<R_2} |k|^{2-d}(1+|\nabla U|)\leq \]
\[ C R_2^2+C \sum_{l=0}^{l_0-1} \int_{2^{-l-1}R_2<|k|<2^{-l}R_2} |k|^{2-d}|\nabla U|\leq
CR_2^2+ C \sum_{l=0}^{l_0-1} 2^{-2l}R^2_2 \sqrt{  \avint_{B_{2^{-l}R_2}} |\nabla U|^2}\leq \]

\[ C R_2^2+ \delta \sup_{r\in [2^{-l_0}R_2,R_2]}\frac{1}{r^{d-2}} \int_{B_r} |\nabla U|^2  +C \delta^{-1} R_2^2 \, \leq C R_2 \left(1+ \sup_{r\in [2^{-l_0}R_2,R_2]}\frac{1}{r^{d-2}} \int_{B_r} |\nabla U|^2 \right) \, .
\]
On the other hand, a similar estimate with $\delta=\delta_l=2^{-l}R_2$ gives
\[ \int_{R_1}^{R_2} \frac{dR}{R^{d-1}} \int_{B_R} \left| W(k,U)\right| \leq  C R_2^2+ \int_{R_1}^{R_2} \frac{dR}{R^{d-1}} \int_{B_R} |\nabla U| \leq   C R_2^2+\]
\[ +C \sum_{l=0}^{l_0-1} \int_{2^{-l-1}R_2}^{2^{-l}R_2} \frac{dR}{R} \frac{1}{R^{d-2}} \int_{B_R} |\nabla U|\leq C \left[ R_2^2+ \left( \sum_{l=0}^{l_0-1} \delta_l  \right)  \sup_{r\in [2^{-l_0}R_2,R_2]}\frac{1}{r^{d-2}} \int_{B_r} |\nabla U|^2 \right. \]
\[  \left. + \sum_{l=0}^{l_0-1} \frac{1}{\delta_l} R^2\vert_{2^{-l-1}R_2}^{2^{-l}R_2} \right] \leq CR_2 \left( 1+ \sup_{r\in [2^{-l_0}R_2,R_2]}\frac{1}{r^{d-2}} \int_{B_r} |\nabla U|^2 \right)  \, . \]
Combining the two estimates above we obtain
\begin{equation}
\label{remestimate}
 \int_{R_1<|k|<R_2} \left|V(k,U) \right|+ \int_{R_1}^{R_2} \frac{dR}{R^{d-1}} \int_{B_R} \left| W(k,U)\right|\leq C R_2 \left(1+ \sup_{r\in [2^{-l_0}R_2,R_2]}\frac{1}{r^{d-2}} \int_{B_r} |\nabla U|^2 \right)
\end{equation}
Going back to \eqref{monformula}, if $R_2 \leq \bar{R}:= \min\{ 1,\frac12 C^{-1}\}$ is fixed, taking the supremum over $R_1\in [2^{-l_0}R_2, R_2]$  and estimating the right hand side using \eqref{remestimate} one easily obtains
\[ \sup_{r\in [2^{-l_0}R_2,R_2]}\frac{1}{r^{d-2}} \int_{B_r} |\nabla U|^2\leq 1+2 \frac{1}{R_2^{d-2}} \int_{B_{R_2}} |\nabla U|^2 \, .\]
Combining \eqref{monformula}, \eqref{remestimate} and the previous inequality we finally obtain \eqref{almostmon}.
Letting $R_1 \to 0$ in \eqref{almostmon} we see that $R^{2-d} \int_{B_R}  | \nabla U|^2dk$ is bounded and $ \int_{B_R}  | \partial_r U|^2|k|^{2-d}dk $ is finite and therefore vanishing as $R \to 0^+$. As a consequence, letting $R_1\to 0$ and $R_2 \to 0$ in \eqref{almostmon}, it is straightforward to see that $R^{2-d} \int_{B_R}  | \nabla U|^2dk$ has a limit as $R\to 0$.
\end{proof}

\begin{ciao}
\label{uniformlysmall}
A simple consequence of \eqref{almostmon} is that if $\lim_{R\to 0} R^{2-d} \int_{B_R(k_0)}  | \nabla U|^2dk=\varepsilon$ then there exists $R_0>0$ such that $\sup_{\bar{k} \in B_{R_0}(k_0)} \sup_{0<R\leq R_0}R^{2-d} \int_{B_R(\bar{k})}  | \nabla U|^2dk \leq 2\varepsilon$, \ie at sufficiently small scales the scaled energy is locally uniformly bounded by its limit at any point.
\end{ciao}

The second ingredient is a compactness theorem for the scaled maps $U_R(k)=U(k_0+Rk)$ which is similar to the compactness theorem for minimizing harmonic maps (see \cite{SU1} and \cite{S}, Chapter 2).

\begin{proposition}
\label{compactness}
Let $d\geq3$ and $U\in W^{1,2}(\mathbb{T}_d^*;\mathcal{U}(m))$ be a minimizer of the localization functional \eqref{MVU}. If we define $U_R(k)=U(k_0+Rk)$, $R>0$, then up to subsequences $U_R \to U_0$ strongly in $W^{1,2}_{\rm loc} (\mathbb{R}^d;\mathcal{U}(m))$. In addition $U_0$ is a locally minimizing harmonic map and it is degree-zero homogeneous.
\end{proposition}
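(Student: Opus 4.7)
The plan is to perform a blow-up analysis in the spirit of Schoen--Uhlenbeck \cite{SU1, S}, using the almost-monotonicity formula of Proposition \ref{monotonicity} as the basic quantitative tool. First, the change of variables $k = k_0 + R \tilde k$ yields
$$
\int_{B_\rho} |\nabla U_R|^2 \, d\tilde k = \frac{1}{(R\rho)^{d-2}} \int_{B_{R\rho}(k_0)} |\nabla U|^2 \, dk,
$$
which by \eqref{almostmon} is uniformly bounded in $R \in (0, \bar R/\rho]$ for every fixed $\rho > 0$. Hence $\{U_R\}$ is bounded in $W^{1,2}(B_\rho; M_m(\C))$ on every ball; by a diagonal extraction and Rellich's theorem we obtain $U_R \rightharpoonup U_0$ weakly in $W^{1,2}_{\rm loc}(\R^d)$ and $U_R \to U_0$ strongly in $L^2_{\rm loc}$ (and a.e.), which in particular forces $U_0 \in \Un(m)$ a.e.\ because $\Un(m) \subset M_m(\C)$ is closed.

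Next, I would identify the functional satisfied by $U_R$ and isolate its leading part. Rewriting \eqref{MVU} on $B_\rho$ after the same change of variables, the first (Dirichlet) term is invariant, while each of the other terms in \eqref{MVU} carries at least one extra factor of $R$, coming either from the Jacobian or from the coefficients $A_j(k_0 + R\tilde k)$ and $G^j$, which are bounded because $\chi$ is real-analytic (Assumption~2). Therefore $U_R$ minimizes, among maps with its own boundary values on $\partial B_\rho$, a functional of the form
$$
F_R(V) := \int_{B_\rho} \tr\!\left( \partial_j V^{*} \partial_j V \right) d\tilde k + R \, \mathcal{R}(R, V),
$$
where $\mathcal{R}(R, V)$ stays bounded as long as $\int_{B_\rho} |\nabla V|^2$ does. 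In the limit $R \to 0$, the leading part is the usual Dirichlet energy of maps into the compact target $\Un(m)$.

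The heart of the proof is to upgrade weak to strong $W^{1,2}_{\rm loc}$ convergence and to identify $U_0$ as a locally minimizing harmonic map. Following the Luckhaus interpolation argument as in \cite[Sec.~2.4]{S} and \cite[Sec.~3.4]{LW}, for a.e.\ $\rho>0$ and every competitor $V \in W^{1,2}(B_\rho; \Un(m))$ with $V = U_0$ on $\partial B_\rho$, I would construct a comparison map $\tilde V_R \in W^{1,2}(B_\rho; \Un(m))$ which equals $V$ on $B_{\rho - \varepsilon}$ and $U_R$ on $\partial B_\rho$, by interpolating affinely in the thin annulus and then composing with the smooth nearest-point projection $\Pi$ onto $\Un(m)$ already used in \eqref{variations}. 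The $L^2$ convergence $U_R \to U_0$ on $\partial B_\rho$ (for a.e.\ $\rho$, by slicing and Fubini) ensures the interpolation is energy-cheap. Minimality of $U_R$ for $F_R$ then gives
$$
\int_{B_\rho} |\nabla U_R|^2 \le F_R(\tilde V_R) + o(1) \le \int_{B_\rho} |\nabla V|^2 + o(1) \qquad (R \to 0).
$$
Taking $V = U_0$ yields energy convergence $\int_{B_\rho} |\nabla U_R|^2 \to \int_{B_\rho} |\nabla U_0|^2$, which combined with the weak convergence is equivalent to strong $W^{1,2}(B_\rho)$ convergence; the same inequality for arbitrary $V$ shows that $U_0$ is a locally minimizing harmonic map into $\Un(m)$. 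The main obstacle is precisely this Luckhaus-type construction inside the nonlinear target $\Un(m)$: one needs uniform control of the projection $\Pi$ on the slightly non-unitary interpolants, and this is the geometric step that absorbs most of the work.

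Finally, degree-zero homogeneity of $U_0$ comes for free from the monotonicity formula. After rescaling,
$$
\int_{B_\rho} |\partial_r U_R|^2 \, |\tilde k|^{2-d} \, d\tilde k = \int_{B_{R\rho}(k_0)} |\partial_r U|^2 \, |k - k_0|^{2-d} \, dk,
$$
and the right-hand side tends to $0$ as $R \to 0^+$ by the second statement of Proposition \ref{monotonicity}. The strong $W^{1,2}_{\rm loc}$ convergence just established implies $\partial_r U_R \to \partial_r U_0$ in $L^2_{\rm loc}$, so that $\partial_r U_0 \equiv 0$ a.e.; hence $U_0(\tilde k) = U_0\!\left(\tilde k/|\tilde k|\right)$, i.e.\ $U_0$ is homogeneous of degree zero, as claimed.
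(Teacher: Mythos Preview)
Your argument is correct and matches the paper's strategy: almost-monotonicity for compactness, the Luckhaus interpolation (the paper's Lemma~\ref{luckhaus}) to upgrade weak to strong convergence and identify $U_0$ as a local Dirichlet minimizer, and vanishing of the scaled radial-derivative integral for homogeneity. The only imprecision is your claim that $U_R$ minimizes a local functional $F_R$ on $B_\rho$: the third line in \eqref{MVU} is the square of a \emph{global} integral, hence nonlocal, so the correct statement is the asymptotic comparison of the paper's Lemma~\ref{pertminimality}, where one compares $\tilde F_{MV}$ for $U$ and for the globally extended competitor and shows that the nonlocal cross-terms are $O(R^{d-1})$; also note that the paper extracts homogeneity of $U_0$ already from weak convergence (via lower semicontinuity of the radial integral), which lets it reduce the Luckhaus step to a single good radius, whereas you prove homogeneity afterwards from strong convergence---both orderings work.
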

To prove the previous result we need the following two auxiliary lemmas. The first is a simple consequence of local minimality. The second is a nonlinear  interpolation lemma due to Luckhaus which we state in our specific context (for the general case see e.g. \cite{LW}, Lemma 2.2.9; see also \cite{S}, Chapter 2 for a proof).

\begin{lemma}
\label{pertminimality}
Suppose $U$ and $U_R$ as in Proposition \ref{compactness}. Let $\rho \in (0,1)$ and let $\{ v_R \} \subset  W^{1,2}(B_\rho;\mathcal{U}(m))$ a bounded sequence such that $U_R=v_R$ on $\partial B_\rho$ for each $R>0$. Then
\[ \liminf_{R \to 0} \int_{B_\rho} |\nabla U_R|^2 \leq  \liminf_{R \to 0} \int_{B_\rho} |\nabla v_R|^2 \, . \]
\end{lemma}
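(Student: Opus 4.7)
The plan is to exploit the global minimality of $U$ by constructing, for each sufficiently small $R>0$, a competitor $\tilde U_R\in W^{1,2}(\mathbb{T}_d^*;\mathcal{U}(m))$ that agrees with $U$ outside the small ball $B_{R\rho}(k_0)$ and equals the unscaled competitor inside, namely
\[
\tilde U_R(k) := \begin{cases} v_R((k-k_0)/R) & k\in B_{R\rho}(k_0), \\ U(k) & k\in \mathbb{T}_d^*\setminus B_{R\rho}(k_0). \end{cases}
\]
The trace agreement $v_R|_{\partial B_\rho}=U_R|_{\partial B_\rho}$ ensures $\tilde U_R\in W^{1,2}(\mathbb{T}_d^*;\mathcal{U}(m))$, so the global minimality of $U$ yields $\tilde{F}_{MV}(U;\chi)\leq \tilde{F}_{MV}(\tilde U_R;\chi)$.

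I would then decompose $\tilde{F}_{MV}(\tilde U_R;\chi)-\tilde{F}_{MV}(U;\chi)$ according to the three groups of terms in \eqref{MVU}. A change of variables $k=k_0+Rk'$ turns the Dirichlet contribution restricted to $B_{R\rho}(k_0)$ into exactly $R^{d-2}\int_{B_\rho}|\nabla v_R|^2\,dk'$ for $\tilde U_R$, and the analogous expression with $U_R$ for $U$. The remaining terms, involving the real-analytic bounded Berry matrices $A_j$, the unitary values of $U$ and $\tilde U_R$, and the global diagonal matrices $G^j$ together with their perturbed analogues $\tilde G^j$ for $\tilde U_R$, are all perturbed only on $B_{R\rho}(k_0)$ and admit H\"older-type estimates. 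Here I would use the uniform bound $\sup_R\|\nabla v_R\|_{L^2(B_\rho)}<\infty$ together with the a priori estimate $\int_{B_{R\rho}(k_0)}|\nabla U|^2\,dk\leq CR^{d-2}$ coming from the almost-monotonicity formula of Proposition \ref{monotonicity}. Since each of these terms contains at most one gradient factor contracted with a bounded quantity, each such perturbation is $O(R^{d-1})$; the variation of the squared $G^j$-term is controlled by the identity $(G^j+\Delta G^j)^2-(G^j)^2=2G^j\Delta G^j+(\Delta G^j)^2$ with $\Delta G^j=O(R^{d-1})$, so it is of the same order.

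Collecting these estimates, the inequality $\tilde{F}_{MV}(U;\chi)\leq\tilde{F}_{MV}(\tilde U_R;\chi)$ reduces to
\[
0 \leq R^{d-2}\left(\int_{B_\rho}|\nabla v_R|^2-\int_{B_\rho}|\nabla U_R|^2\right) + CR^{d-1}.
\]
Dividing by $R^{d-2}$ gives $\int_{B_\rho}|\nabla U_R|^2\leq\int_{B_\rho}|\nabla v_R|^2+CR$, and passing to the $\liminf$ as $R\to 0^+$ yields the claim.

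The main technical obstacle is controlling the non-Dirichlet perturbations uniformly at the correct scale: naive absolute continuity would only deliver $o(1)$ error, which is not enough to beat the $R^{d-2}$ scaling of the Dirichlet part in dimension $d\geq 3$. It is precisely the small-scale energy bound $\int_{B_{R\rho}(k_0)}|\nabla U|^2\,dk\leq CR^{d-2}$ from the almost-monotonicity of Proposition \ref{monotonicity} that upgrades these perturbations to $O(R^{d-1})$ and makes the comparison argument work.
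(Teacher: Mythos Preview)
Your proof is correct and follows essentially the same route as the paper: define the unscaled competitor $\tilde v_R$ (your $\tilde U_R$) by gluing $v_R(\cdot/R)$ to $U$ outside $B_{R\rho}(k_0)$, invoke global minimality of $U$, and show via Cauchy--Schwarz together with the bound $\int_{B_{R\rho}(k_0)}(|\nabla U|^2+|\nabla\tilde v_R|^2)=O(R^{d-2})$ from Proposition~\ref{monotonicity} that the non-Dirichlet terms in $\tilde F_{MV}$ differ only by $O(R^{d-1})$. Your explicit treatment of the perturbation $\Delta G^j=O(R^{d-1})$ in the squared diagonal term is a detail the paper leaves implicit under ``simple calculations using Cauchy--Schwartz inequality''.
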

\begin{proof}
Define $\tilde{v}_R(k)=v_R(R^{-1}(k-k_0))$ so that $\tilde{v}_R\in W^{1,2}(B_{\rho R}(k_0); \mathcal{U}(m))$. Since $\tilde{v}_R=U$ on $\partial B_{\rho R}(k_0)$ we can extend them as $U$ to the whole $\mathbb{T}^*_d$. The assumption on $v_R$ and Proposition \ref{monotonicity} clearly imply $\displaystyle{\int_{B_{\rho R}(k_0)} |\nabla U|^2+ |\nabla \tilde{v}_R|^2=\mathcal{O}(R^{d-2})}$ as $R\to 0$, hence formula \eqref{MVU} and simple calculations using Cauchy-Schwartz inequality give
\begin{equation}
\label{energydefect}
\tilde{F}_{MV}(\tilde{v}_R;\chi)-\tilde{F}_{MV}(U;\chi)=\int_{B_{\rho R}(k_0)}|\nabla \tilde{v}_R|^2-\int_{B_{\rho R}(k_0)}|\nabla U|^2+ \mathcal{O}(R^{d-1}) \, ,
\end{equation}
because $\tilde{v}_R$ and $U$ coincide outside $B_{\rho R}(k_0)$. Since $U$ is a minimizer of $\tilde{F}_{MV}$ the right hand side of \eqref{energydefect} is nonnegative, hence scaling back and taking the definition of $U_R$ and $\tilde{v}_R$ into account the conclusion follows as $R\to 0$.
\end{proof}
\begin{lemma}[Luckhaus]
\label{luckhaus}
Let $d\geq 3$, $m\geq 2$ and let $u,v \in W^{1,2}(S^{d-1};\mathcal{U}(m))$. Then, for each $\lambda \in (0,1)$ there is $w \in W^{1,2}(S^{d-1} \times (1-\lambda, 1); M_m(\mathbb{C}))$ such that $w|_{S^{d-1} \times \{1\}}=u$, $w|_{S^{d-1}\times \{1-\lambda\}}=v$,
\begin{equation}
\label{luck1}
\int_{S^{d-1} \times (1-\lambda,1)} |\nabla w|^2\leq C\lambda \int_{S^{d-1}}\left( |\nabla_T u|^2+ |\nabla_T v|^2 \right) +C\lambda^{-1}                 \int_{S^{d-1}} |u-v|^2
\end{equation}
and
\[ \mathrm{dist}^2(w(k), \mathcal{U}(m)) \leq C \lambda^{1-d} \left( \int_{S^{d-1}}\left( |\nabla_T u|^2+ |\nabla_T v|^2 \right)  \right)^{\frac12} \left(  \int_{S^{d-1}}|u-v|^2\right)^{\frac12} +\]
\begin{equation}
\label{luck2}
+C\lambda^{-d} \int_{S^{d-1}}|u-v|^2 
\end{equation}
for a.e. $k \in S^{d-1} \times (1-\lambda, 1)$. Here $\nabla_T$ is the gradient on $S^{d-1}$.
\end{lemma}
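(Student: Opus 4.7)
The plan is to adapt the classical Luckhaus interpolation (see \cite{LW}, Lemma 2.2.9 and \cite{S}, Chapter 2) to our setting, exploiting the fact that $\mathcal{U}(m)$ is smoothly (and isometrically) embedded in the ambient Euclidean space $M_m(\mathbb{C})$. The map $w$ will be constructed with values in $M_m(\mathbb{C})$; the pointwise distance to $\mathcal{U}(m)$ must then be controlled separately, and this is the delicate point. The argument has four steps.

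First, I would reduce to a cubical (or simplicial) setting on $S^{d-1}$. Fix a finite, bi-Lipschitz decomposition of $S^{d-1}$ at scale $\lambda$ into cells $\{Q_\alpha\}$ of diameter comparable to $\lambda$ (obtained for instance by radially projecting an inscribed polyhedral decomposition), and consider the cylindrical cells $Q_\alpha\times(1-\lambda,1)$. Write $u,v$ as $M_m(\mathbb{C})$-valued maps. Second, by a standard Fubini/averaging argument on the family of translates of the cell decomposition, select a decomposition whose $(d-2)$-skeleton $\Sigma$ carries $u$ and $v$ with good traces, namely
\[
\int_{\Sigma}(|\nabla_T u|^2+|\nabla_T v|^2)\,d\mathcal{H}^{d-2}\le C\lambda^{-1}\!\int_{S^{d-1}}(|\nabla_T u|^2+|\nabla_T v|^2),\qquad
\int_{\Sigma}|u-v|^2\,d\mathcal{H}^{d-2}\le C\lambda^{-1}\!\int_{S^{d-1}}|u-v|^2.
\]

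Third, I would construct $w$ explicitly as follows: set $w=u$ on $S^{d-1}\times\{1\}$ and $w=v$ on $S^{d-1}\times\{1-\lambda\}$; on the vertical faces $\Sigma\times(1-\lambda,1)$ put the linear (in $r$) interpolation between $u(k)$ and $v(k)$; then extend to the interior of every cell $Q_\alpha\times(1-\lambda,1)$ by $0$-homogeneous extension from the boundary of the cell, which is a topological $S^{d-1}$. This yields an $M_m(\mathbb{C})$-valued piecewise-affine map $w\in W^{1,2}(S^{d-1}\times(1-\lambda,1);M_m(\mathbb{C}))$ with the prescribed boundary values. Fourth, estimate the two norms cell by cell. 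For \eqref{luck1}, on each $Q_\alpha\times(1-\lambda,1)$ the radial part of $\nabla w$ contributes at most $\lambda^{-1}|u-v|^2$, while the tangential part is controlled, after $0$-homogeneous extension, by $\lambda\,(|\nabla_T u|^2+|\nabla_T v|^2)$ on the corresponding cell of $S^{d-1}$; summing and using the good-slice bound proves \eqref{luck1}.

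The main obstacle is the pointwise distance estimate \eqref{luck2}, since generic linear interpolation in $M_m(\mathbb{C})$ leaves $\mathcal{U}(m)$. On the skeleton $\Sigma\times(1-\lambda,1)$ one uses the trivial pointwise bound $\mathrm{dist}(w(k),\mathcal{U}(m))\le \tfrac12|u(k')-v(k')|$ with $k'$ the projection on $\Sigma$; this must then be upgraded to the interior of each cell after $0$-homogeneous extension. To do so, restrict to a good cell and use Morrey-Sobolev embedding on its $(d-2)$-dimensional boundary skeleton at scale $\lambda$: the resulting $L^\infty$ control by a product of an $L^2$-norm of $u-v$ and an $L^2$-norm of $\nabla_T u,\nabla_T v$ picks up a factor $\lambda^{-(d-1)/2}\cdot\lambda^{-(d-1)/2}$ from rescaling a Sobolev inequality to scale $\lambda$, from which the factor $\lambda^{1-d}$ in front of the square-root product in \eqref{luck2} appears. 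The leftover term $\lambda^{-d}\!\int|u-v|^2$ encodes the non-improvable part of the pointwise bound coming directly from the affine interpolation and the rescaling of $L^2$ traces. Putting these estimates together on every cell produces \eqref{luck2} and concludes the proof.
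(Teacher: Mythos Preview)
The paper does not give its own proof of this lemma; it is stated ``in our specific context'' and referred to \cite[Lemma 2.2.9]{LW} and \cite[Chapter 2]{S}. Your plan is precisely the skeleton of the classical Luckhaus construction found in those references: cubical decomposition of $S^{d-1}$ at scale $\lambda$, Fubini-type selection of a good $(d-2)$-skeleton, affine interpolation on the lateral faces, and $0$-homogeneous extension into the interior of each $d$-cell $Q_\alpha\times(1-\lambda,1)$. So there is nothing to compare against, and your outline matches the standard argument the paper is quoting.

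One point deserves tightening. In Step~4 you propose to obtain the pointwise bound \eqref{luck2} by a ``Morrey--Sobolev embedding on the $(d-2)$-dimensional boundary skeleton''. For $d=3$ (the only case actually used in the paper) this is fine, since the skeleton is $1$-dimensional and the elementary inequality
\[
\|f\|_{L^\infty(e)}^2 \le C\,|e|^{-1}\!\int_e |f|^2 + C\Big(\int_e|f|^2\Big)^{1/2}\Big(\int_e|f'|^2\Big)^{1/2}
\]
on each edge $e$ does the job after rescaling. For general $d\ge 4$, however, $W^{1,2}$ on a $(d-2)$-dimensional face does not embed into $L^\infty$, so your stated embedding fails as written. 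The standard remedy (as in \cite{S} and \cite{LW}) is to \emph{iterate} the averaging/good-slice selection through all intermediate skeleta down to the $1$-skeleton, picking up a factor $\lambda^{-1}$ at each step; only on the $1$-skeleton does the Sobolev/oscillation estimate apply. This iterated selection is what produces the powers $\lambda^{1-d}$ and $\lambda^{-d}$ in \eqref{luck2}. With that correction your plan is complete and agrees with the cited proofs.
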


\begin{proof}[Proof of Proposition \ref{compactness}]
We essentially follow the proof of \cite{LW}, Lemma 2.2.13, with minor modifications.
 By Proposition \ref{monotonicity}, up to subsequences we may assume $U_R \rightharpoonup U_0$ in $W^{1,2}(B_1; \mathcal{U}(m))$ where $U_0$ is a degree-zero homogeneous map. Thus, it is enough to show strong convergence and minimality in some ball $B_\rho\subset B_1$ to get the same properties  on any $B_\rho\subset \mathbb{R}^d$ for any $\rho>0$, by scale invariance of $U_0$ and the existence of  the full limit of $R^{2-d} \int_{B_R}|\nabla U|^2$ as $R\to 0$.

Let $B_1 \subset \mathbb{R}^d$ and $\delta \in (0,1)$ a fixed number and let $\bar{w}\in W^{1,2}(B_1;\mathcal{U}(m))$  such that $\bar{w}\equiv U_0$ a.e. on $B_1\setminus B_{1-\delta}$.
By Fatou's lemma and Fubini's theorem, there is $\rho \in (1-\delta, 1)$ such that
\[ \lim_{R\to 0} \int_{\partial B_\rho} |U_R-U_0|^2 dH^{d-1}=0 \, , \,\, \int_{\partial B_\rho} \left( |\nabla U_R|^2+ |\nabla U_0|^2 \right)dH^{d-1}\leq C<\infty \, . \]
Applying Lemma \ref{luckhaus} to $\lambda=\lambda_R<\delta$, $u=U_R( \rho \, \cdot)$ and $v=\bar{w}(\rho \, \cdot) \equiv U_0(\rho \, \cdot)$ for a decreasing sequence of numbers $\lambda_R \to 0$, we conclude that there exists a sequence of maps $w_R \in W^{1,2}(B_\rho;M_m(\mathbb{C}))$ such that if we chose  e.g. $\displaystyle{\lambda_R=\left( \int_{\partial B_\rho} |U_R-U_0|^2 dH^{d-1}\right)^{1/2d}}<\delta$, then we have
\[ w_R(k)= \left\{
\begin{array}{ll}
\bar{w} \left( \frac{k}{1-\lambda_R} \right) \, , & |k| \leq \rho (1-\lambda_R) \, , \\
U_R(k) \, , & |k|=\rho \, ,
\end{array}
\right. \]
\begin{equation}
\label{noextenergy}
 \int_{B_\rho \setminus B_{\rho(1-\lambda_R)}} |\nabla w_R|^2 \leq C \left[  \lambda_R \int_{\partial B_\rho}\left( |\nabla_T U_R|^2+ |\nabla_T U_0|^2 \right) +\lambda_R^{-1}                 \int_{\partial B_\rho} |U_R-U_0|^2
\right] \overset{R\to 0}{\longrightarrow} 0 \, ,
\end{equation}
and $\mathrm{dist}(w_R, \mathcal{U}(m))\to 0$ uniformly on $B_\rho \setminus B_{\rho (1-\lambda_R)}$ as $R\to 0$. Define comparison maps $\{ v_R\} \subset W^{1,2}(B_\rho; \mathcal{U}(m))$ by
\begin{equation}
\label{defcomparison}
v_R(k)= \left\{
\begin{array}{ll}
\bar{w} \left( \frac{k}{1-\lambda_R} \right) \, , & |k| \leq \rho (1-\lambda_R) \, , \\
\Pi(w_R(k)) \, , & \rho (1-\lambda_R) \leq |k|\leq \rho \, ,
\end{array}
\right.
\end{equation}
where $\Pi: \mathcal{O} \to \mathcal{U}(m)$ is the nearest point projection.
Then, by minimality of $U_R$, Lemma \ref{pertminimality} and \eqref{noextenergy}-\eqref{defcomparison} we obtain
\[\int_{B_\rho} |\nabla U_0|^2 \leq \liminf_{R\to 0} \int_{B_\rho} |\nabla U_R|^2\leq \liminf_{R\to 0} \int_{B_\rho} |\nabla v_R|^2\]
\[ =\lim_{R\to 0} \left[ \int_{B_{\rho(1-\lambda_R)}} \left| \nabla \bar{w} \left( \frac{\cdot}{1-\lambda_R} \right) \right|^2+\int_{B_\rho\setminus B_{\rho(1-\lambda_R)}} \left| \nabla (\Pi \circ w_R) \right|^2  \right] \]
\[ \leq \lim_{R\to 0} \left[ (1-\lambda_R)^{d-2} \int_{B_\rho} \left| \nabla \bar{w}  \right|^2+C \, \mathrm{Lip}(\Pi)^2 \int_{B_\rho\setminus B_{\rho(1-\lambda_R)}} \left| \nabla  w_R \right|^2  \right] =\int_{B_\rho} |\nabla \bar{w}|^2 \, . \]
Since $\bar{w}$ is arbitrary, the previous inequality implies both minimality of $U_0$ and strong convergence $U_R\to U_0$ in $W^{1,2}(B_\rho; \mathcal{U}(m))$ as $R\to 0$ and concludes the proof.
\end{proof}
The final ingredient  is the following small-energy regularity result in the spirit of the fundamental $\varepsilon-$regularity theorem for harmonic maps proved in \cite{SU1}. The result is similar to \cite[Proposition 4.1]{Mos} but the stationarity condition as well as the argument of the proof there (the so-called "moving-frame"  trick) are different. Here we modify the elementary approach to regularity of \cite{CWY} for harmonic maps into spheres, by rewriting the right hand side of \eqref{EuleroLagrange}-\eqref{approxharmonic} in a suitable way and applying a standard estimate for the Laplace equation. Then, an iteration argument gives the decay of the BMO norm at small scales, whence continuity follows from the equivalence of Morrey-Campanato spaces and H\"{o}lder spaces in a suitable range of parameters.

\begin{proposition}
\label{epsregularity}
Let $d\geq 3$ and $U\in W^{1,2}(\mathbb{T}_d^*;\mathcal{U}(m))$ be a weak solution to the equations \eqref{EuleroLagrange}. Then there exist $\varepsilon>0$ and $\beta>0$,
 both independent of $U$ and $k_0\in \mathbb{T}^*_d$, such that if
$$
\sup_{\bar{k} \in B_{R_0}(k_0)} \sup_{0<R\leq R_0}R^{2-d} \int_{B_R(\bar{k})}  | \nabla U|^2dk \leq \varepsilon
$$ for some $R_0>0$, then $U \in C^{0,\beta}(B_{R_0/2}(k_0); \mathcal{U}(m))$.

\end{proposition}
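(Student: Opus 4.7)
My plan is to adapt the scheme of Chen-Wang-Yang for harmonic maps into spheres: rewrite \eqref{EuleroLagrange} so as to expose a div-curl structure in its nonlinearity, compare $U$ with its componentwise harmonic replacement on small balls, and iterate a Campanato-type decay estimate on the scaled Dirichlet energy
\[
\phi(\bar k,r):= r^{2-d}\int_{B_r(\bar k)}|\nabla U|^2.
\]
H\"older continuity will then follow from the Campanato-Morrey characterization of H\"older spaces.

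The first step is to massage the equation into the form
\[
\Delta U \;=\; \sum_{j=1}^{d} \partial_j U \cdot \widetilde B^j \;+\; \mathcal{R}(U),
\]
where $\widetilde B^j := U^{-1}\partial_j U + U^{-1}A_j U - G^j \in \mathfrak u(m)$ collects the Maurer-Cartan $1$-form of $U$ together with the Berry-connection and the $G^j$ contributions, and $\mathcal{R}(U)$ is an error of lower order, pointwise bounded by $C(1+|\nabla U|)$ with $C$ depending only on $\|A_j\|_{C^1}$ and on the $G^j$'s. A short algebraic manipulation exploiting $\partial_j(U^{-1}U)=0$ shows moreover that $\sum_{j}\partial_j \widetilde B^j$ is again pointwise controlled by $C(1+|\nabla U|)$, so that $\widetilde B^j$ is \emph{approximately divergence-free}.

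Fix $\bar k$ and $r\leq R_0/2$, and let $h: B_r(\bar k)\to M_m(\C)$ be the componentwise harmonic extension of $U|_{\partial B_r(\bar k)}$. Set $v:=U-h\in W^{1,2}_0(B_r(\bar k); M_m(\C))$, so that $\Delta v=\Delta U$. The standard mean-value estimate for harmonic functions gives
\[
(\theta r)^{2-d}\int_{B_{\theta r}(\bar k)}|\nabla h|^2 \;\leq\; C\theta^2\, r^{2-d}\int_{B_r(\bar k)}|\nabla h|^2 \;\leq\; C\theta^2\phi(\bar k,r), \qquad \theta\in(0,1/2].
\]
The key point will be to establish the quadratic bound
\[
\int_{B_r(\bar k)} |\nabla v|^2 \;\leq\; C\, \phi(\bar k, r) \int_{B_r(\bar k)}|\nabla U|^2 \;+\; C\, r^{d}.
\]
For this one writes $\int|\nabla v|^2 = -\int\langle v,\Delta U\rangle$, substitutes the rewriting, and integrates by parts in $\sum_j\int\langle v,\partial_j U\,\widetilde B^j\rangle$, moving the derivative onto $v$ and onto $\widetilde B^j$. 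Since each entry of $\partial_j U$ is curl-free and each entry of $\widetilde B^j$ is approximately divergence-free, the Coifman-Lions-Meyer-Semmes estimate (in the same spirit as in Lemma \ref{linearregularity}) upgrades the principal term to a local $\mathcal{H}^1$-pairing bounded by $C\phi(\bar k,r)\int|\nabla U|^2$, while the remaining contributions involving $\mathcal{R}(U)$ and $\partial_j \widetilde B^j$ are absorbed into $Cr^d$ via Poincar\'e's inequality on $v$.

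Combining the two bounds through $|\nabla U|^2\le 2(|\nabla h|^2+|\nabla v|^2)$ and dividing by $(\theta r)^{d-2}$, one obtains
\[
\phi(\bar k,\theta r) \;\leq\; C\theta^2\phi(\bar k, r) \;+\; C\theta^{2-d}\bigl[\phi(\bar k, r)^2 + r^{2}\bigr].
\]
Choosing $\theta$ so small that $4C\theta^{2}\leq 1/4$, and then $\varepsilon$ so small that $4C\theta^{2-d}\varepsilon \leq 1/4$, the hypothesis together with Remark \ref{uniformlysmall} guarantees $\phi(\bar k,r)\leq\varepsilon$ at every center and scale, yielding the recursion $\phi(\bar k,\theta r)\leq \tfrac12\phi(\bar k,r)+C' r^{2}$. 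Iterating along the geometric sequence $r_n=\theta^n R_0$ produces $\phi(\bar k, r) \leq C r^{2\beta}$ for some $\beta\in(0,1)$, i.e.\ a Morrey decay of the Dirichlet energy uniform in $\bar k\in B_{R_0/2}(k_0)$. Poincar\'e's inequality upgrades this into a Campanato decay of $U$ itself, and the Campanato-Morrey embedding gives $U\in C^{0,\beta}(B_{R_0/2}(k_0);\mathcal U(m))$. The hard part on which the whole scheme hinges is the quadratic estimate for $\nabla v$: the algebraic rewriting has to be tight enough that the div-curl argument of \cite{CLMS} applies without the Berry-connection or the non-local constants $G^j$ re-injecting quadratic-in-$\nabla U$ contributions into $\mathcal{R}(U)$ or into $\mathrm{div}\,\widetilde B^j$. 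Once this structural step is secured, the harmonic comparison and the iteration are entirely routine.
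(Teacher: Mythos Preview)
Your overall architecture --- harmonic replacement on balls, a smallness estimate for $v=U-h$, and a Campanato iteration --- matches the paper. The gap is in the ``quadratic bound'' you announce,
\[
\int_{B_r(\bar k)}|\nabla v|^2 \;\leq\; C\,\phi(\bar k,r)\int_{B_r(\bar k)}|\nabla U|^2 \;+\; Cr^d,
\]
which you propose to obtain via the CLMS div--curl lemma and an $\mathcal H^1$--BMO pairing. In $d\geq 3$ this does not close: writing $\int|\nabla v|^2=-\int\langle v,\Delta U\rangle$ and recognizing the principal term as $\langle v,\,E\cdot B\rangle$ with $E\cdot B\in\mathcal H^1_{\rm loc}$ would require a bound of the form $\|v\|_{BMO}\lesssim \phi(\bar k,r)^{1/2}$, but $v\in W^{1,2}_0(B_r)$ gives no such BMO control when $d\geq 3$ (this is precisely the critical failure of the $2$d Wente/H\'elein mechanism in higher dimension). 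Equivalently, CLMS together with elliptic regularity in Hardy spaces only yields $\nabla v\in L^{d/(d-1)}$, not an $L^2$ estimate with the extra factor $\phi(\bar k,r)$; the integration-by-parts you sketch (``moving the derivative onto $v$ and onto $\widetilde B^j$'') produces terms like $\int\langle\partial_j v,\,U\widetilde B^j\rangle$ that are bounded by $\|\nabla v\|_{L^2}\|\nabla U\|_{L^2}$ without any smallness gain.

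The paper fixes this by abandoning the $L^2$ iteration and working instead with $L^q$ for some $q\in(\tfrac{d}{d-1},2)$. The key algebraic point is to subtract a constant $T_0=\avint_{B_R(\bar k)}U$ and write
\[
\Delta(U-h)=\sum_j\partial_j\bigl((U-T_0)B^j\bigr)+g,\qquad B^j=\tfrac12\bigl(U^*\partial_jU-\partial_jU^*U\bigr),
\]
with $|g|\leq C(1+|\nabla U|)$. A standard $L^q$ Calder\'on--Zygmund estimate (Lemma~\ref{calderonzygmund}) then gives $\|\nabla(U-h)\|_{L^q}\lesssim \|(U-T_0)B\|_{L^q}+\|g\|_{L^s}$, and H\"older's inequality turns the first term into $\|U-T_0\|_{L^{2q/(2-q)}}\|\nabla U\|_{L^2}$. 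The smallness now comes from the John--Nirenberg inequality: $\|U-T_0\|_{L^p}$ is controlled by $\|U\|_{BMO(D_t)}$ for every finite $p$. This produces a decay inequality for the BMO seminorm of $U$ (not for $\phi$), which iterates to a Campanato bound and hence to $C^{0,\beta}$. So the missing ingredient in your sketch is precisely this switch from an $L^2$/Morrey iteration to an $L^q$/BMO iteration via John--Nirenberg; without it, the scheme does not close in dimension $d\geq 3$.
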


Before going into the proof we quote two auxiliary results. First, recall that by definition for an open set $\Omega \subset \mathbb{R}^d$ a function $u\in L^1_{\rm loc} (\Omega)$ is in ${{BMO}}(\Omega)$ if
\begin{equation}
\label{defbmo}
\|u\|_{BMO(\Omega)}=\sup_{D} \avint_{D}|u-u_D|<\infty \, ,
\end{equation}
where $\displaystyle{u_D=\avint_{D} u}$ is the average of $u$ over $D$ and the supremum is taken over all balls $D\subset \Omega$. The first fact we need is a classical result of John and Nirenberg (see \cite{St}, Chapter 4).

\begin{lemma}
\label{johnnirenberg}
For any $1<p<\infty$, there exists a constant $C_p>0$ (which depends only on $p$ and $d$) such that if $u \in BMO(\Omega)$ then
\begin{equation}
\label{bmoequivalence}
\|u\|_{BMO(\Omega)}\leq \sup_{D} \left( \avint_{D}|u-u_D|^p \right)^{1/p} \leq C_p \| u\|_{BMO(\Omega)}<\infty \, ,
\end{equation}
where the supremum is taken over all balls $D\subset \Omega\subset \mathbb{R}^d$.
\end{lemma}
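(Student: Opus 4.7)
The plan is to treat the two inequalities separately. The left inequality is essentially free: for any ball $D \subset \Omega$ and any $p > 1$, Jensen's inequality applied to the convex function $t \mapsto t^p$ gives
\[
\avint_D |u - u_D| \leq \left( \avint_D |u - u_D|^p \right)^{1/p},
\]
and taking the supremum over balls yields $\|u\|_{BMO(\Omega)} \leq \sup_D (\avint_D |u-u_D|^p)^{1/p}$ with constant $1$. No further work is needed here.

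For the right inequality, the strategy is to first establish the classical exponential decay estimate of John and Nirenberg: there exist dimensional constants $c_1, c_2 > 0$ such that for every ball $D \subset \Omega$ and every $\lambda > 0$,
\[
\bigl|\{ k \in D : |u(k) - u_D| > \lambda \}\bigr| \leq c_1 |D| \exp\!\left( - c_2 \, \lambda / \|u\|_{BMO(\Omega)} \right).
\]
Granted this, the right inequality follows routinely from the layer-cake identity
\[
\avint_D |u - u_D|^p = \frac{p}{|D|}\int_0^\infty \lambda^{p-1} \bigl|\{ k \in D : |u(k)-u_D| > \lambda\}\bigr| \, d\lambda
\]
combined with the elementary integral $\int_0^\infty \lambda^{p-1} e^{-a\lambda}\, d\lambda = \Gamma(p) a^{-p}$. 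This yields $\avint_D |u - u_D|^p \leq p\, c_1 \Gamma(p) c_2^{-p} \|u\|_{BMO}^p$, and taking $p$-th roots and the supremum over $D$ produces the claimed constant $C_p = (p\, c_1 \Gamma(p))^{1/p}/c_2$, which depends only on $p$ and the dimension $d$ through $c_1, c_2$.

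The core step is therefore the exponential decay, which I would prove by an iterated Calder\'on--Zygmund / stopping-time argument. Normalize $\|u\|_{BMO(\Omega)} = 1$, fix a reference ball $D_0 \subset \Omega$ with $u_{D_0} = 0$, choose a parameter $\lambda_0 > 1$, and perform a stopping-time decomposition of $D_0$ producing a pairwise disjoint family of sub-balls $\{D^{(1)}_j\}$ such that: (i) $|u - u_{D_0}| \leq \lambda_0$ almost everywhere outside $\bigcup_j D^{(1)}_j$; (ii) $|u_{D^{(1)}_j} - u_{D_0}| \leq C_d \lambda_0$ by doubling of the predecessor ball; and (iii) $\sum_j |D^{(1)}_j| \leq \lambda_0^{-1} |D_0|$, since $\avint_{D^{(1)}_j}|u - u_{D_0}| > \lambda_0$. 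The crucial invariance is that $u$ restricted to any $D^{(1)}_j$ still has BMO norm at most $1$, so the same construction can be run inside each $D^{(1)}_j$ with the new mean value. Iterating $N$ times gives
\[
\bigl|\{ k \in D_0 : |u(k) - u_{D_0}| > N C_d \lambda_0 \}\bigr| \leq \lambda_0^{-N} |D_0|,
\]
and an optimal choice of $\lambda_0$ (for instance $\lambda_0 = e$) converts this geometric decay in $N$ into the required exponential decay in $\lambda$. The main obstacle is purely technical: the stopping-time argument is most natural on cubes, and one must transfer it to balls; this is handled by running the decomposition on a Whitney cover of $D_0$ and returning to balls via doubling, incurring only a dimensional constant. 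No conceptual difficulty arises beyond this.
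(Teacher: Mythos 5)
The paper does not prove this lemma: it simply records the John--Nirenberg inequality as a ``classical result'' and refers to Stein's book (\cite{St}, Chapter 4) for the proof. Your proposal therefore cannot be compared with an argument in the text; what it does is reconstruct the standard textbook proof, and it does so correctly. The left inequality by Jensen is immediate, the layer-cake computation with the exponential distributional estimate yields exactly the claimed $L^p$ bound with $C_p = (p\,c_1\Gamma(p))^{1/p}/c_2$, and the good-$\lambda$/Calder\'on--Zygmund iteration you sketch (stopping on sub-balls where the mean oscillation exceeds $\lambda_0$, controlling the jump of the averages by a doubling constant, iterating, and converting geometric decay in $N$ into exponential decay in $\lambda$) is precisely the argument of John and Nirenberg.

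The one place where your sketch is looser than it should be is the final remark about passing from cubes to balls. Running a dyadic Calder\'on--Zygmund decomposition requires cubes, while the BMO seminorm in the lemma is defined through balls $D \subset \Omega$; since a ball $D_0 \subset \Omega$ need not sit inside a cube contained in $\Omega$, and a dyadic sub-cube near $\partial D_0$ has a circumscribed ball that may leave $D_0$, the transfer is not quite ``free'' and the phrase ``Whitney cover plus doubling'' glosses over a chaining argument needed to compare $u_{Q_i}$ with $u_{D_0}$ across the Whitney cubes. This is a genuine (if entirely standard) technical point, and to be self-contained one should either carry out the Whitney-and-chaining argument carefully or prove the stopping-time lemma directly with a Vitali/Besicovitch cover of balls. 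Since the paper applies the lemma only with $\Omega$ a ball and $D$ sub-balls, this is manageable, but as written your final paragraph is a sketch of a sketch rather than a complete reduction. None of this affects the correctness of the overall strategy, which is the right one.
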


The second auxiliary result is a standard regularity property for solutions to the Laplace equation.

\begin{lemma}
\label{calderonzygmund}
Let $d\geq 3$ and $B_{\bar{R}} \subset \mathbb{R}^d$ be an open ball of radius $\bar{R}>0$.  Let $q \in (\frac{d}{d-1},2)$, $s=\frac{qd}{q+d}$. There exist $C>0$ depending only on $q$ such that if $F\in L^2(B_{\bar{R}};\mathbb{R}^d)$, $g \in L^2(B_{\bar{R}})$ and $u\in W^{1,2}_0(B_{\bar{R}})$ is a weak solution to $\Delta u=\dive F+g$, then
\begin{equation}
\label{CZ}
\| \nabla u \|_{L^q(B_{\bar{R}})} \leq C \left( \| F \|_{L^q(B_{\bar{R}})} +\| g \|_{L^s(B_{\bar{R}})}\right) \, .
\end{equation}
\end{lemma}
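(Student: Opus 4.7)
The plan is to split the right-hand side by linearity and handle the two pieces by different standard tools; the only inputs are the classical $L^q$ Calderón-Zygmund theory for the Dirichlet Laplacian and the Sobolev embedding. First, by the rescaling $v(x):=u(\bar R x)$, $\tilde F(x):=\bar R\,F(\bar R x)$, $\tilde g(x):=\bar R^2 g(\bar R x)$ one has $\Delta v=\dive\tilde F+\tilde g$ on $B_1$, and the definition $s=qd/(q+d)$ is precisely the relation $1-d/q=2-d/s$ that makes the claimed estimate scale-invariant; hence I may and do assume $\bar R=1$. Writing $u=u_1+u_2$ with $u_1,u_2\in W^{1,2}_0(B_1)$ the unique weak solutions (by Lax-Milgram) to
\begin{equation*}
\Delta u_1=\dive F,\qquad \Delta u_2=g,
\end{equation*}
the lemma reduces to the two separate bounds $\|\nabla u_1\|_{L^q(B_1)}\le C\|F\|_{L^q(B_1)}$ and $\|\nabla u_2\|_{L^q(B_1)}\le C\|g\|_{L^s(B_1)}$.

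For $u_1$, I would invoke the classical Calderón-Zygmund $L^q$-theory for the Dirichlet Laplacian on a $C^{1,1}$ domain, applied in the divergence-form setting. Writing $u_1(x)=-\int_{B_1}\nabla_y G(x,y)\cdot F(y)\,dy$ in terms of the Green's function of $B_1$, the solution operator $F\mapsto\nabla u_1$ has a Calderón-Zygmund kernel, hence is bounded from $L^q$ to $L^q$ for every $1<q<\infty$; alternatively this follows via extension by zero and Riesz-transform bounds on $\R^d$ combined with a harmonic-correction boundary estimate. Either way the bound $\|\nabla u_1\|_{L^q(B_1)}\le C(q,d)\|F\|_{L^q(B_1)}$ is a textbook statement (see e.g.\ Gilbarg-Trudinger, Chapter~9).

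For $u_2$, I would apply the global $W^{2,s}$-regularity for the Dirichlet problem $\Delta u_2=g$ on the ball: since $q\in(d/(d-1),2)$ implies $1<s<2\le d$ (the lower bound $s>1$ coming from $q>d/(d-1)$ and the upper bound from $q<2$), standard elliptic regularity yields $u_2\in W^{2,s}(B_1)\cap W^{1,s}_0(B_1)$ with $\|u_2\|_{W^{2,s}(B_1)}\le C(s,d)\|g\|_{L^s(B_1)}$. Then the Sobolev embedding $W^{1,s}(B_1)\hookrightarrow L^{s^*}(B_1)$ with $1/s^*=1/s-1/d$ gives $\|\nabla u_2\|_{L^{s^*}(B_1)}\le C\|g\|_{L^s(B_1)}$, and the identity $s=qd/(q+d)$ is exactly $s^*=q$. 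Summing the two estimates yields the result. The only non-trivial work is the two Calderón-Zygmund inputs, which are classical and cited rather than reproved; once they are in hand, the only genuinely new observation is the algebraic coincidence $s^*=q\iff s=qd/(q+d)$ that matches the integrability gained from one derivative of Sobolev embedding with the target exponent $q$.
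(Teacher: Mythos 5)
Your proof is correct and complete, and it supplies an argument that the paper itself omits: Lemma \ref{calderonzygmund} is stated there as ``a standard regularity property for solutions to the Laplace equation'' with no proof given. Your decomposition $u=u_1+u_2$, the classical Calder\'on-Zygmund $L^q$-bound for $\nabla u_1$ in terms of $\|F\|_{L^q}$, the global $W^{2,s}$ elliptic estimate for $u_2$ followed by the Sobolev embedding $W^{1,s}\hookrightarrow L^{s^*}$, and the verification $s^*=q\iff s=qd/(q+d)$ together with the scaling check $1-d/q=2-d/s$ all hold; the constraint $q\in(d/(d-1),2)$ does guarantee $1<s<d$ so both the $W^{2,s}$ theory and the Sobolev embedding apply. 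The only small imprecision, inherited from the paper's statement, is the phrase ``depending only on $q$'': the constant of course also depends on $d$, as you correctly wrote $C(q,d)$, but $d$ is fixed throughout the paper so this is immaterial.
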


\begin{proof}[Proof of Proposition \ref{epsregularity}]
First note that if $U$ satisfies the condition

$$\displaystyle{\sup_{\bar{k} \in B_{R_0}(k_0)} \sup_{0<R\leq R_0}R^{2-d} \int_{B_R(\bar{k})}  |
\nabla U|^2dk \leq \varepsilon} \quad \hbox{ for some } R_0>0 \, ,
$$
 then on $B_{R_0}=B_{R_0}(k_0)$, by Cauchy-Schwartz and Poincaré inequality we have

\begin{equation}
\label{bmofinite}
\| U\|_{BMO(B_{R_0})} \leq \sup_{D_R \subset B_{R_0}} \left( \avint_{D_R}|U- \avint_{D_R}U|^2 \right)^{1/2} \leq C \sup_{D_R \subset B_{R_0}} \left( R^2 \avint_{D_R} | \nabla U|^2 \right)^{1/2}\leq C\varepsilon^{1/2} \, .
\end{equation}

Now, we aim to show that, for $\varepsilon$ sufficiently small, there is a quantitative decay of the BMO norm of $U$ at smaller and smaller scales.

Up to translation we may assume $k_0=0$. Let $\sigma \in (0,
\frac18]$ a fixed number to be specified later. For each
$\widehat{k} \in B_{R_0/2}$ and $t\in (0,R_0/2]$, let
$D_t=D_t(\widehat{k})\subset B_{R_0}$ be an open ball of radius $t$,
and for each $\bar{k}\in D_{\sigma t}=D_{\sigma t}(\widehat{k})$ let
$R\in (0,t)$ be such that $B_{\sigma R}(\bar{k}) \subset D_{\sigma
t}$. Clearly, $B_{R(\bar{k})} \subset D_t \subset B_{R_0}$, so that
if $\bar{R}\in (R/2,R)$ we still have the bound $\displaystyle{
\bar{R}^{2-d} \int_{B_{\bar{R}}(\bar{k})}  | \nabla U|^2dk \leq
\varepsilon}$.

On the other hand, given a constant matrix $T_0 \in M_m(\mathbb{C})$ with $|T_0|\leq \sqrt{m}$, e.g. $T_0=\avint_{B_R(\bar{k})} U$, we may choose $\bar{R} \in (R/2,R)$ so that
\begin{equation}
\label{averbound}
\int_{\partial B_{\bar{R}}(\bar{k})} |U-T_0| \leq 8 \int_{B_{R}(\bar{k})}|U-T_0|  \, .
\end{equation}
Since $U|_{\partial B_{\bar{R}}}\in W^{1/2,2}(\partial B_{\bar{R}};\mathcal{U}(m))$ there exists a harmonic extension $h \in W^{1,2}(B_{\bar{R}};M_m(\mathbb{C})$ so that h=$U|_{\partial B_{\bar{R}}}$ on $\partial B_{\bar{R}}$ and  $h \in C^\infty$ in the interior.
 Moreover, mean value formula, Jensen's inequality and \eqref{averbound} easily give
 \begin{equation}
 \label{pthbound}
 |\nabla h (k)|^p \leq C_p \bar{R}^{-p} \avint_{B_R(\bar{k})} |U-T_0|^p
 \end{equation}
for any $p \in (1,\infty)$ and any $k \in B_{\frac14 \bar{R}}(\bar{k})$.

On the other hand, as in the two dimensional case, since $U\in W^{1,2}(\mathbb{T}_d^*;\mathcal{U}(m))$ is a weak solution to \eqref{EuleroLagrange}, we have
 \begin{equation}
\label{approxharmonic2}
\Delta U= \sum_{j=1}^d \frac{\partial U}{\partial k_j} U^{-1} \frac{\partial U}{\partial k_j} +f \, ,
\end{equation}
for some $L^2$ function $f$ such that $U^{-1} f \in
L^2(\mathbb{T}^*_d; \mathfrak{u}(m))$. As in Proposition
\ref{approxhmreg}, if we set $B^j=\frac{1}{2}\left( U^*
\frac{\partial U}{\partial k_j} - \frac{\partial U^*}{\partial k_j}U
\right)$ then $ B^j \in L^{2}(\mathbb{T}_d^*; \mathfrak{u}(m))$ for
$j=1, \ldots, d$ and $U-h \in W^{1,2}_0(B_{\bar{R}}(\bar{k});
M_m(\mathbb{C}))$ is a weak solution to

\begin{equation}
\label{divform0}
\Delta (U-h)= \sum_{j=1}^d \frac{\partial U}{\partial k_j} B^j+f=\sum_{j=1}^d\frac{\partial}{\partial k_j} \left( (U-T_0) B^j \right)+f-(U-T_0) \sum_{j=1}^d \frac{\partial B_j}{\partial k_j} =
\end{equation}
\[ \hfill = \sum_{j=1}^d\frac{\partial}{\partial k_j} \left( (U-T_0) B^j \right)+f-(U-T_0) \frac{1}{2}(U^*f- f U^*) =\dive F +g  \, . \]

Since $|B(k)| \leq C |\nabla U(k)|$ and $|g(k)| \leq C |f(k)| \leq C (1+|\nabla U(k)|)$, for the right hand side of \eqref{divform0} we have the straightforward estimates

\begin{equation}
\label{auxcz1}
\| F\|_{L^q(B_{\bar{R}}(\bar{k}))} \leq \| \nabla U \|_{L^2(B_{\bar{R}}(\bar{k}))} \|  U -T_0\|_{L^{\frac{2q}{2-q}}(B_{\bar{R}}(\bar{k}))}\leq  \|  U -T_0\|_{L^{\frac{2q}{2-q}}(B_{\bar{R}}(\bar{k}))} \sqrt{\varepsilon} \bar{R}^{\frac{d-2}{2}} \, .
\end{equation}
and
\begin{equation}
\label{auxcz2}
\| g\|_{L^s(B_{\bar{R}}(\bar{k}))} \leq C  \bar{R}^{\frac{d}{s}}+ C \| \nabla U \|_{L^s(B_{\bar{R}}(\bar{k}))}\leq C  \bar{R}^{\frac{d-2}{2}} \left(\bar{R}^{2+d \left(\frac{1}{q}-\frac12 \right)} +  \sqrt{\varepsilon} \bar{R}^{d\frac{2-s}{2}} \right)\, .
\end{equation}
Applying Lemma \ref{calderonzygmund} to \eqref{divform0} and taking \eqref{auxcz1} and \eqref{auxcz2} into account we obtain
\begin{equation}
\label{gradlqest0}
\int_{B_{\bar{R}}(\bar{k})} |\nabla (U-h)|^q \leq  C_q \varepsilon^{\frac{q}{2}} \bar{R}^{q\frac{d-2}{2}} \|  U -T_0\|^q_{L^{\frac{2q}{2-q}}(B_{\bar{R}}(\bar{k}))}+C_q \bar{R}^{q\frac{d-2}{2}} \left(\bar{R}^{2q+d \frac{2-q}{2} } + \varepsilon^{\frac{q}{2}} \bar{R}^{qd\frac{2-s}{2}} \right) \, ,
\end{equation}
i.e.
\begin{equation}
\label{gradlqest}
\avint_{B_{\bar{R}}(\bar{k})} |\nabla (U-h)|^q \leq C_q  \bar{R}^{-q} \left[ \varepsilon^{\frac{q}{2}}\left( \avint_{B_{\bar{R}}(\bar{k})} |  U -T_0|^{\frac{2q}{2-q}}  \right)^{\frac{2-q}{2}}  +C_q \bar{R}^{2q}  + C_q\varepsilon^{\frac{q}{2}} \bar{R}^{qd\frac{2-s}{2}+q\frac{d}{2}-d} \right] \, .
\end{equation}

Now we choose $p=q^*=\frac{dq}{d-q}>q$  and $\underline{R}=\sigma R<\bar{R}<R$. Using Sobolev inequality, \eqref{gradlqest} and \eqref{pthbound}  we estimate
\[ \avint_{B_{\sigma R}(\bar{k})}|U-h(\bar{k})|^p \leq \frac{C}{\underline{R}^d} \int_{B_{\bar{R}}(\bar{k})} |U-h|^p+\frac{C}{\underline{R}^d}\int_{B_{\underline{R}}(\bar{k})} |h-h(\bar{k})|^p   \]
\[ \leq C\frac{\bar{R}^{d+p}}{\underline{R}^d} \left(\avint_{B_{\bar{R}}(\bar{k})} |\nabla(U-h)|^q \right)^{\frac{p}{q}}+C\underline{R}^p \sup_{B_{\bar{R}/4}} |\nabla h|^p \]
\[ \leq C \sigma^{-d} \left(    \varepsilon^{\frac{q}{2}}\left( \avint_{B_{\bar{R}}(\bar{k})} |  U -T_0|^{\frac{2q}{2-q}}  \right)^{\frac{2-q}{2}}  +C_q \bar{R}^{2q}  + C_q\varepsilon^{\frac{q}{2}} \bar{R}^{qd\frac{2-s}{2}+q\frac{d}{2}-d} \right)^{\frac{p}{q}}+C  \sigma^p \avint_{B_R(\bar{k})} |U-T_0|^p \]
\[ \leq C \sigma^{-d} \left(    \varepsilon^{\frac{p}{2}}\left( \avint_{B_R(\bar{k})} |  U -T_0|^{\frac{2q}{2-q}}  \right)^{p\frac{2-q}{2q}}  + R^{2p}  + \varepsilon^{\frac{p}{2}} R^{pd\frac{2-s}{2}+p\frac{d}{2}-p-d} \right)+C  \sigma^p \avint_{B_R(\bar{k})} |U-T_0|^p \]
Since $B_R(\bar{k}) \subset D_t$, if we choose $T_0=\avint_{B_R(\bar{k})} U$ the John-Nirenberg inequality \eqref{bmoequivalence} yields
\begin{equation}
\label{bmo1}
\left( \avint_{B_{\sigma R}(\bar{k})}|U-h(\bar{k})|^p \right)^{1/p} \leq \left(  \sigma^{-d/p} \varepsilon^{1/2} +\sigma \right) C_q \| U\|_{BMO(D_t)}+C_q \sigma^{-d/p}  \left( t^{2p}  + \varepsilon^{\frac{p}{2}} t^{pd\frac{2-s}{2}+p\frac{d}{2}-p-d} \right) \, .
\end{equation}
On the other hand, by H\"{o}lder inequality and \eqref{bmo1} we get
\begin{equation}
\label{bmo2}
 \avint_{B_{\sigma R}(\bar{k})}|U-\avint_{B_{\sigma R}(\bar{k})} U|
 \leq \left( \avint_{B_{\sigma R}(\bar{k})}|U-h(\bar{k})|^2 \right)^{1/2}
\leq \left( \avint_{B_{\sigma R}(\bar{k})}|U-h(\bar{k})|^p \right)^{1/p}
\end{equation}
\[
  \leq \left(  \sigma^{-d/p} \varepsilon^{1/2} +\sigma \right) C_q \| U\|_{BMO(D_t)}+C_q \sigma^{-d/p}  \left( t^{2p}  + \varepsilon^{\frac{p}{2}} t^{pd\frac{2-s}{2}+p\frac{d}{2}-p-d} \right) \, ,
\]
hence, taking the supremum over $B_{\sigma R}(\bar{k}) \subset D_{\sigma t}$ we  obtain
\begin{equation}
\label{bmo3}
\|U\|_{BMO(D_{\sigma t})} \leq  \left(  \sigma^{-d/p} \varepsilon^{1/2} +\sigma \right) C_q \| U\|_{BMO(D_t)}+C_q \sigma^{-d/p}  \left( t^{2p}  + \varepsilon^{\frac{p}{2}} t^{pd\frac{2-s}{2}+p\frac{d}{2}-p-d} \right) \, .
\end{equation}
Since  $p=p(q)$, $s=s(q)$, the exponent $\alpha:=pd\frac{2-s}{2}+p\frac{d}{2}-p-d \to 1 $ as $q \searrow \frac{d}{d-1}$ so we can fix $q$ small such that $\alpha \in (0,2)$. If we choose $\sigma \in (0,\frac18]$ and $\varepsilon>0$ so small that $C_q ( \sigma^{-d/p} \varepsilon^{1/2} +\sigma) <\frac12$ then
\begin{equation}
\label{bmo4}
\|U\|_{BMO(D_{\sigma t})} \leq \frac12 \| U\|_{BMO(D_t)}+C t^\alpha \, , \qquad \forall t\in (0, R_0/2 ] \, ,
\end{equation}
where $C>0$ and $\alpha\in (0,2)$ are independent of $U$,
$\widehat{k}$ and  $t$. Thus, from \eqref{bmo4} an elementary
iteration argument on $v(t)= \| U\|_{BMO(D_t)}$ and the
John-Nirenberg inequality \eqref{bmoequivalence} give
\begin{equation}
\label{holdercontinuity} \avint_{D_t(\widehat{k})} |U-
\avint_{D_t(\widehat{k})} U |^2 \leq C t^{2\beta} \, , \qquad
\forall t \in (0, R_0/2] \, , \quad \forall \widehat{k} \in
B_{R_0/2} \, ,
\end{equation}
for some $\beta=\beta(\alpha)>0$.
Finally, from \cite{Ca} we conclude that $U \in C^{0,\beta}(B_{R_0/2};\mathcal{U}(m))$ and the proof is complete.
\end{proof}

Combining the previous propositions and the Liouville type theorem proved in the Appendix we have the main result of this section.

\begin{theorem}
\label{3dcontinuity}
Let $d=3$ and $U\in W^{1,2}(\mathbb{T}_d^*;\mathcal{U}(m))$ be a minimizer of the localization functional \eqref{MVU}. If $2\leq m \leq3$ then $U\in C^0(\mathbb{T}_d^*;\mathcal{U}(m))$.
\end{theorem}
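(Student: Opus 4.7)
The plan is to close the standard $\varepsilon$-regularity program for minimizing harmonic maps that has been set up in this section: prove that the scaled Dirichlet density vanishes at every point, then conclude via the small-energy Hölder regularity. The only non-routine ingredient is the Liouville-type rigidity from the Appendix (Theorem \ref{RegUnitons}); this is precisely where the restriction $m \leq 3$ enters. All other tools already apply to every $m \geq 1$.

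\textbf{Step 1 (scaled energy density).} Fix $k_0 \in \mathbb{T}_3^*$. Since $d=3$ gives $R^{2-d}=R^{-1}$, Proposition \ref{monotonicity} guarantees that
$$
\Theta(k_0) \ := \ \lim_{R\to 0^+} R^{-1}\int_{B_R(k_0)}|\nabla U|^2\,dk
$$
exists and is finite.

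\textbf{Step 2 (vanishing of $\Theta$ via blow-up).} I claim $\Theta(k_0)=0$. Given any $R_n\to 0^+$, set $U_{R_n}(k):=U(k_0+R_n k)$. By Proposition \ref{compactness}, after passing to a subsequence, $U_{R_n}\to U_0$ strongly in $W^{1,2}_{\mathrm{loc}}(\R^3;\mathcal{U}(m))$, where $U_0$ is a degree-zero homogeneous, locally minimizing harmonic map from $\R^3$ into $\mathcal{U}(m)$. Because $2\leq m\leq 3$, Theorem \ref{RegUnitons} forces $U_0$ to be real-analytic, and hence constant (a degree-zero homogeneous real-analytic map on $\R^3$ must be constant, as analyticity at the origin rules out the scale invariance $U_0(\lambda k)=U_0(k)$ unless $\nabla U_0\equiv 0$). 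Consequently $\int_{B_1}|\nabla U_0|^2=0$, and the scaling identity together with the strong convergence give
$$
R_n^{-1}\int_{B_{R_n}(k_0)}|\nabla U|^2\,dk \ = \ \int_{B_1}|\nabla U_{R_n}|^2\,dk \ \longrightarrow \ 0.
$$
Combined with the existence of the limit in Step 1, this forces $\Theta(k_0)=0$.

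\textbf{Step 3 (uniformly small scaled energy and $\varepsilon$-regularity).} Let $\varepsilon>0$ be the threshold furnished by Proposition \ref{epsregularity}. Using $\Theta(k_0)=0$ in conjunction with the almost monotonicity inequality \eqref{almostmon} applied at any nearby center $\bar k$, together with the trivial inclusion $B_{R_0}(\bar k)\subset B_{2R_0}(k_0)$ valid when $|\bar k-k_0|\leq R_0$, I would choose $R_0=R_0(k_0)>0$ so small that the additive remainder $2CR_0$ and the term $(1+2CR_0)(2R_0)^{-1}\int_{B_{2R_0}(k_0)}|\nabla U|^2$ together remain $\leq\varepsilon$. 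This produces the uniformly-small-scaled-energy condition
$$
\sup_{\bar k\in B_{R_0}(k_0)}\;\sup_{0<R\leq R_0} R^{-1}\int_{B_R(\bar k)}|\nabla U|^2\,dk \ \leq\ \varepsilon,
$$
i.e.\ the statement of the remark following Proposition \ref{monotonicity}. Proposition \ref{epsregularity} then yields $U\in C^{0,\beta}(B_{R_0/2}(k_0);\mathcal{U}(m))$ for some $\beta>0$. Since $k_0$ was arbitrary and $\mathbb{T}_3^*$ is compact, $U\in C^0(\mathbb{T}_3^*;\mathcal{U}(m))$.

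\textbf{Main obstacle.} The only step that is not essentially automatic given the machinery already built in this section is Step 2, and the crux there is exclusively the Liouville-type rigidity result Theorem \ref{RegUnitons}. This is precisely where the restriction $2\leq m\leq 3$ is used; everything else -- monotonicity, blow-up compactness, and the $\varepsilon$-regularity estimate -- works for arbitrary $m\geq 1$ in dimension $d=3$. Any extension of Theorem \ref{3dcontinuity} to larger $m$ would follow automatically from the corresponding extension of Theorem \ref{RegUnitons}, which the authors flag as the sole remaining technical hurdle.
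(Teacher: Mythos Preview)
Your proof is correct and follows essentially the same route as the paper: blow up at a point via Proposition~\ref{compactness}, use the Appendix to force the tangent map to be constant, and conclude via Remark~\ref{uniformlysmall} and Proposition~\ref{epsregularity}. The only minor difference is that you invoke Theorem~\ref{RegUnitons} (analyticity of minimizers) and then argue that a degree-zero homogeneous analytic map is constant, whereas the paper quotes the more direct Corollary~\ref{liouville}, which already asserts constancy of degree-zero homogeneous minimizing tangent maps; since Theorem~\ref{RegUnitons} is itself deduced from Corollary~\ref{liouville}, your detour is harmless but slightly circular.
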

\begin{proof}
Fix $k_0 \in \mathbb{T}^*_3$ and for each $R=R_n\searrow 0$ we define $U_R(k)=U(k_0+Rk)$. According to Proposition \ref{monotonicity} such a sequence is bounded in $W^{1,2}_{\rm loc} (\mathbb{R}^3;\mathcal{U}(m))$ and, up to subsequences,
it converges weakly to a degree-zero homogeneous map $U_0 \in W^{1,2}_{\rm loc} (\mathbb{R}^3;\mathcal{U}(m))$. According to Proposition \ref{compactness}, such convergence is strong and the limiting map      $U_0$ is a degree-zero homogeneous local minimizer of the Dirichlet integral in $W^{1,2}_{\rm loc} (\mathbb{R}^3;\mathcal{U}(m))$. According to Corollary \ref{liouville}, when $2\leq m \leq3$ we have $U_0(k)\equiv const$, therefore $\frac{1}{R_n} \int_{B_{R_n}(k_0)} |\nabla U|^2= \int_{B_1} |\nabla U_{R_n}|^2\to 0$ as $n \to \infty$, hence  it can be made arbitrarily small at sufficiently small scale. Thus, in view of Proposition \ref{epsregularity} continuity around $k_0$ follows, and $U\in C^0(\mathbb{T}^*_3;\mathcal{U}(m))$ since $k_0$ was arbitrary.
\end{proof}

\section{Analytic regularity}
In this section we first prove analytic regularity for continuous
weak solutions to the Euler Lagrange equations
\eqref{EuleroLagrange}, whenever $\chi$ is a real-analytic Bloch
frame (\ie $A_j \in C^{\omega}(\Base, \mathfrak{u}(m))$). Then,
combining this property with the continuity results for the
minimizers of the localization functional \eqref{MVU}, we prove
analyticity for any minimizer of the functionals \eqref{mvfimbande}
and \eqref{MVU}.

We start with the following auxiliary result.

\begin{proposition}
\label{strongsolution} Assume $d \geq 2$. Let $\chi$ be a
real-analytic Bloch frame and $U\in
W^{1,2}(\mathbb{T}_d^*;\mathcal{U}(m))$ be a weak solution to
equation \eqref{EuleroLagrange}. If $U\in
C^0(\mathbb{T}_d^*;\mathcal{U}(m))$ then $U\in
W^{1,4}(\mathbb{T}_d^*;\mathcal{U}(m))$ and $U\in
W^{2,2}(\mathbb{T}_d^*;\mathcal{U}(m))$.
\end{proposition}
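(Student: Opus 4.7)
The plan is to exploit the continuity of $U$ to absorb the critical quadratic nonlinearity $\sum_j \partial_j U \cdot U^{-1} \partial_j U$ in a Caccioppoli-type inequality, then to upgrade the integrability of $\nabla U$ via a reverse H\"older estimate followed by Gehring's lemma, and finally to bootstrap through the equation. By uniform continuity of $U$ on the compact torus $\Base$, there exist $\rho_0 > 0$ and a modulus $\omega$ with $\omega(\rho)\to 0$ as $\rho\to 0$ such that $\sup_{B_\rho(k_0)}|U - U(k_0)| \le \omega(\rho)$ for every $k_0 \in \Base$ and every $\rho \le \rho_0$. I would rewrite \eqref{EuleroLagrange} in the form \eqref{approxharmonic2}, namely $\Delta U = \sum_j \partial_j U\, U^{-1}\,\partial_j U + f$ with $U^{-1} f \in L^\infty$ (using that the Berry connection $A_j,\partial_{k_j}A_j$ is real-analytic on the compact $\Base$), and test this identity against $\phi = \eta^2(U - U(k_0))$ with $\eta$ a standard cutoff equal to $1$ on $B_{\rho/2}(k_0)$ and supported in $B_\rho(k_0)$. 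Integrating by parts and using the pointwise bound $|U - U(k_0)| \le \omega(\rho)$, the quadratic term is bounded by $C\omega(\rho)\int \eta^2|\nabla U|^2$ and can be absorbed on the left for $\rho$ small, giving the Caccioppoli-type estimate
\[
\int_{B_{\rho/2}(k_0)} |\nabla U|^2 \le \frac{C}{\rho^2}\int_{B_\rho(k_0)} |U - U(k_0)|^2 + C\rho^d.
\]

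Coupled with the Sobolev-Poincar\'e inequality $\|U - U(k_0)\|_{L^2(B_\rho)} \le C\rho\,\|\nabla U\|_{L^q(B_\rho)}$ for $q = 2d/(d+2) < 2$, this yields a reverse H\"older inequality of the form
\[
\left(\fint_{B_{\rho/2}(k_0)} |\nabla U|^2\right)^{1/2} \le C\left(\fint_{B_\rho(k_0)} |\nabla U|^q\right)^{1/q} + C,
\]
to which Gehring's lemma applies, producing some $p > 2$ with $\nabla U \in L^p(\Base)$. The right-hand side of the equation then belongs to $L^{p/2}$, so Calder\'on-Zygmund regularity for $\Delta$ on the flat torus gives $U \in W^{2, p/2}(\Base)$. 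Iterating this scheme---using Sobolev embedding for $W^{2, p'}$ to enlarge the $L$-exponent of $\nabla U$ when $p'<d$, re-running Caccioppoli and reverse H\"older at the improved exponent, and invoking the Morrey smallness $\rho^{2-d}\int_{B_\rho}|\nabla U|^2 \le C\omega(\rho)^2 + C\rho^2 \to 0$ (a direct byproduct of the Caccioppoli estimate above) to keep the iteration going---one reaches $\nabla U \in L^4(\Base)$ in finitely many steps. The equation then gives $\Delta U \in L^2$, and a final application of Calder\'on-Zygmund yields $U \in W^{2,2}(\Base)$.

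The principal obstacle is the criticality of the quadratic nonlinearity at the natural $W^{1,2}$ level: without continuity, the term $|\nabla U|^2$ cannot be absorbed in a Caccioppoli inequality and no initial self-improvement of integrability is available. The role of the hypothesis $U \in C^0$ is precisely to furnish the smallness $\omega(\rho) \to 0$ that closes this first step. A secondary subtlety appears in dimension $d=3$, where the naive iteration $p \mapsto dp/(2d-p)$ associated with $W^{2,p/2} \hookrightarrow W^{1,(p/2)^*}$ has an unstable fixed point at $p = d = 3$; an initial $p$ only slightly above $2$ does not self-improve under pure Sobolev bootstrap. The uniform Morrey smallness derived from continuity (and, should it be needed, the H\"older continuity obtained \emph{a posteriori} from Proposition \ref{epsregularity}) is what pushes the iteration through to $\nabla U \in L^4$.
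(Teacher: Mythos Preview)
Your opening steps are sound: the Caccioppoli inequality obtained by testing with $\eta^2(U-\bar U_{B_\rho})$ (better than $U(k_0)$, since Sobolev--Poincar\'e is stated with the mean), the reverse H\"older inequality, and Gehring's lemma do yield $\nabla U\in L^p$ for some $p>2$. One small slip: $U^{-1}f$ is not in $L^\infty$, since $f$ contains terms like $A_j\,\partial_j U$; you only have $|f|\le C(1+|\nabla U|)$, hence $f\in L^2$. This does not affect the Caccioppoli step, where $\int\eta^2\langle U-\bar U,f\rangle$ is still absorbable via Young's inequality.

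The genuine gap is the iteration in $d=3$. You correctly observe that the map $p\mapsto dp/(2d-p)$ has a repelling fixed point at $p=d$, so starting from $p$ just above $2$ the pure Sobolev bootstrap $W^{2,p/2}\hookrightarrow W^{1,(p/2)^*}$ moves \emph{backwards}. Your proposed remedy --- ``re-running Caccioppoli and reverse H\"older at the improved exponent'' and invoking Morrey smallness or the H\"older continuity from Proposition~\ref{epsregularity} --- is not made precise, and it is not clear how either ingredient converts an $L^{2+\varepsilon}$ gradient bound into an $L^4$ one. Gehring's lemma gives a fixed exponent gain determined by the reverse H\"older constant, which here is \emph{not} small; and H\"older continuity yields a Morrey bound $\rho^{2-d}\int_{B_\rho}|\nabla U|^2\le C\rho^{2\beta}$, which is not an $L^p$ bound for $p>2$.

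The paper avoids this difficulty entirely. It rewrites \eqref{EuleroLagrange} as $\Delta U=F(k,U,\nabla U)$ with the natural structure conditions $|F|+|\nabla_s F|\le c_0(1+|p|^2)$, $|\nabla_k F|+|\nabla_p F|\le c_1(1+|p|)$, and then simply cites \cite[Lemmas~8.5.1 and~8.5.3]{J}: any \emph{continuous} weak solution of such a system is locally in $W^{2,2}\cap W^{1,4}$. The proof in Jost proceeds by a \emph{difference quotient} argument (testing the equation for $\Delta_h U$ against $\eta^2\Delta_h U$ and using the continuity of $U$ to absorb the critical term), which yields $U\in W^{2,2}$ directly; then $W^{2,2}\hookrightarrow W^{1,4}$ for $d\le 4$ by Sobolev. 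This bypasses the unstable bootstrap altogether and is the standard route for quasilinear systems with critical quadratic growth once continuity is known.
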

\begin{proof}
We rewrite the system \eqref{EuleroLagrange} in the form
\begin{equation}
\label{nonlinearsist}
\Delta U= \mathcal{F}(A(k),U,\nabla U)=F(k,U,\nabla U) \, ,
\end{equation}
where $\mathcal{F}:\left(M_m(\mathbb{C}) \right)^d \times
M_m(\mathbb{C}) \times \left( M_m(\mathbb{C})\right)^d\to
M_m(\mathbb{C})$ is a real-analytic (polynomial) map and the
matrices $A(k)=(A_1(k), \ldots, A_d(k))$, \ie the entries of the
Berry connection given in \eqref{berry}, are real-analytic on  the
torus $\mathbb{T}^*_d$. It is easy to see that, since $A$ is smooth
and $U$ takes values into $\mathcal{U}(m)$, by construction the
function $F(k,s,p)$ on the range of $U$ satisfies the structural
assumptions
\begin{equation}
\label{structG}
|F(k,s,p)|+|\nabla_s F(k,s,p)|\leq c_0(1+|p|^2) \, , \qquad |\nabla_k F(k,s,p)|+|\nabla_p F(k,s,p)| \leq c_1(1+|p|) \,
\end{equation}
on $\mathbb{T}^*_d \times \mathcal{U}(m) \times \left( M_m(\mathbb{C})\right)^d$.

As a consequence of \cite{J}, Lemma 8.5.1 and Lemma 8.5.3, any continuous weak solution $U$ of \eqref{nonlinearsist} is locally in $W^{2,2}\cap W^{1,4}$ and the conclusion follows taking a finite cover of the torus.
\end{proof}

Combining the previous result with the standard regularity theory
for linear elliptic equations and the fundamental analyticity
results for nonlinear elliptic systems (see \eg \cite{Morrey},
Chapter VI), we obtain full regularity. The proof is standard, so we
just sketch it for the reader's convenience.

\begin{proposition}
\label{analitycity} Let $d=2$ or $d=3$  and let $U\in
W^{1,2}(\mathbb{T}_d^*;\mathcal{U}(m))$ be a weak solution to
equation \eqref{EuleroLagrange}. If $\chi$ is real-analytic and
$U\in C^0(\mathbb{T}_d^*;\mathcal{U}(m))$, then $U$ is
real-analytic.
\end{proposition}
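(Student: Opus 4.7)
The plan is a standard bootstrap: first promote the regularity of $U$ step by step from $W^{1,2}\cap C^0$ to $C^{\infty}$ via linear elliptic theory applied to the semilinear system, and then invoke a Morrey-type analyticity theorem.

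I would begin by recording, as in Proposition \ref{strongsolution}, that under our hypotheses $U \in W^{2,2}(\mathbb{T}_d^*;\mathcal{U}(m))\cap W^{1,4}(\mathbb{T}_d^*;\mathcal{U}(m))$. Rewriting \eqref{EuleroLagrange} as $\Delta U = F(k,U,\nabla U)$, where $F$ is polynomial of degree two in $\nabla U$ with real-analytic dependence on $k$ through the Berry connection $A_j$ and $\partial_{k_j}A_j$, I observe that the right-hand side is controlled by $c_0(1+|\nabla U|^2)$ pointwise. In dimension $d=2$ the Sobolev embedding $W^{2,2}\hookrightarrow W^{1,p}$ holds for every $p<\infty$, so $F(k,U,\nabla U)\in L^p$ for every $p<\infty$. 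In dimension $d=3$ one bootstraps: $W^{2,2}\hookrightarrow W^{1,6}$ gives $|\nabla U|^2\in L^3$, hence $U\in W^{2,3}$; but $W^{2,3}\hookrightarrow W^{1,q}$ for every $q<\infty$, so $|\nabla U|^2\in L^q$ and $U\in W^{2,q}$ for every $q<\infty$. In both cases $U\in W^{2,p}$ for $p$ arbitrarily large, hence $U\in C^{1,\alpha}$ by Morrey's embedding.

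Once $U\in C^{1,\alpha}$ the nonlinearity $F(k,U,\nabla U)$ lies in $C^{0,\alpha}$, so Schauder theory applied to $\Delta U = F(k,U,\nabla U)$ gives $U\in C^{2,\alpha}$. Differentiating the equation and iterating the Schauder estimates (using that $A_j$, and hence $F$, are $C^{\infty}$ in $k$ and polynomial in $(U,\nabla U)$), one obtains $U\in C^{k,\alpha}$ for every $k$, and therefore $U\in C^{\infty}(\mathbb{T}_d^*;\mathcal{U}(m))$.

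For the final step I would appeal to the analyticity theorem for smooth solutions of nonlinear elliptic systems whose nonlinearity is real-analytic in all its arguments, in the form given in \cite{Morrey}, Chapter VI. The principal part of \eqref{EuleroLagrange} is the Laplacian, which is of course a uniformly elliptic analytic operator, and the nonlinear part $F(k,s,p)$ is jointly real-analytic in $(k,s,p)$ because it is polynomial in $(s,p)$ with coefficients built from the real-analytic connection forms $A_j(k)$, $\partial_{k_j}A_j(k)$. Applying the Morrey theorem to the $C^{\infty}$ solution $U$ constructed above yields $U\in C^{\omega}(\mathbb{T}_d^*;\mathcal{U}(m))$. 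The only genuine obstacle in the scheme is the initial improvement from $W^{1,2}\cap C^0$ to a Sobolev space where the quadratic gradient term can be absorbed—but this is precisely what Proposition \ref{strongsolution} has already delivered, and the remaining bootstrap is routine in dimensions $d\leq 3$.
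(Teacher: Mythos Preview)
Your proposal is correct and follows essentially the same route as the paper's own proof: invoke Proposition~\ref{strongsolution} to get $U\in W^{2,2}\cap W^{1,4}$, bootstrap via Sobolev embeddings (with the extra step $W^{2,2}\hookrightarrow W^{1,6}\Rightarrow U\in W^{2,3}$ in $d=3$) to reach $U\in W^{2,p}$ for all $p<\infty$ and hence $C^{1,\alpha}$, then iterate Schauder estimates to $C^{\infty}$, and finally apply the analyticity results of \cite{Morrey}, Chapter~VI. The argument and its organization match the paper almost verbatim.
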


\begin{proof}
According to Proposition \ref{strongsolution} we know that $U\in
W^{1,4}(\mathbb{T}_d^*;\mathcal{U}(m))$ and $U\in
W^{2,2}(\mathbb{T}_d^*;\mathcal{U}(m))$. If $d=2$ then Sobolev
embedding yields $\nabla U \in L^p$ for any $p<\infty$. Similarly,
in case $d=3$ we have $\nabla U \in L^6$, hence \eqref{structG}
implies $G(k,U,\nabla U)\in L^3$. Thus, linear elliptic regularity
for \eqref{nonlinearsist} gives $U\in W^{2,3}$ and in turn $\nabla U
\in L^p$ for any $p<\infty$ again by Sobolev embedding (note that
the same property holds for any $d\geq 4$, compare \cite[Lemma
8.5.4]{J}). Clearly, if $\nabla U \in L^p$ for any $p<\infty$ the
same is true for $G(k,U,\nabla U)$ because of \eqref{structG}, hence
linear elliptic regularity for \eqref{nonlinearsist} gives $U\in
W^{2,p}$ for any $p<\infty$, which in turn yields $U\in
C^{1,\alpha}$ for any $\alpha \in (0,1)$ by Sobolev-Morrey
embedding. Going back to \eqref{EuleroLagrange} a standard bootstrap
argument in the H\"{o}lder spaces $C^{l,\alpha}$, $l\geq 1$, yields
by induction $U\in C^{l,\alpha} \Rightarrow \Delta U \in
C^{l-1,\alpha} \Rightarrow U \in C^{l+1,\alpha}$, so that $U\in
C^\infty(\mathbb{T}^*_d;\mathcal{U}(m))$. Finally, since the
coefficients $A_j$ in \eqref{EuleroLagrange} are analytic, by the
results in \cite{Morrey}, Chapter VI,  any smooth solution is
real-analytic.
\end{proof}

\renewcommand{\theenumi}{\roman{enumi}}
The main result of the section is the following.
\begin{theorem}\label{Cor real analiticity} Let $1 \leq d \leq 2$ and $m \geq 1$,
 or $d=3$ and $1 \leq m \leq3$, or $d\geq 4$ and $m=1$. Let $\chi$
be a  real-analytic Bloch frame and $\tilde F_{MV}(\cdot)$ and
$\tilde F_{MV}(\cdot\,;\, \chi)$ the functionals defined by
(\ref{mvfimbande}) and (\ref{MVU}) respectively. Then:

\begin{enumerate}
    \item any minimizer $U \in W^{1,2}(\Base; \Un(m))$ of $\tilde
    F_{MV}(\cdot\,;\,
    \chi)$ is real-analytic;
    \item any minimizer $\ph = \set{\ph_1, \ldots, \ph_m} \subset \tilde
    \W$ of $\tilde F_{MV}(\cdot)$ is a real-analytic map from $\R^d$
    to $(\Hf)^m$.
\end{enumerate}
\end{theorem}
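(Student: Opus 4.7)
The plan is to reduce both statements to the regularity results already collected in the paper, via the gauge change $\varphi = \chi \cdot U$.

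For part (i), the minimizer $U$ is in particular a weak solution of the Euler--Lagrange equation \eqref{EuleroLagrange}, and by Proposition \ref{analitycity} it suffices to show that $U$ is continuous (except in the degenerate cases where direct arguments apply). I would proceed by splitting into the cases covered by the hypothesis:
\begin{itemize}
\item If $d=1$ for arbitrary $m$, or $m=1$ for arbitrary $d$ (including $d\geq 4$), Proposition \ref{regsingleband} applies directly to any weak solution of \eqref{EuleroLagrange} and yields real-analyticity with no appeal to minimality.
\item If $d=2$ and $m\geq 2$, Corollary \ref{2dcontinuity} gives $U\in C^0(\mathbb{T}_2^\ast;\mathcal{U}(m))$ for any weak solution, and Proposition \ref{analitycity} upgrades this to real-analyticity.
\item If $d=3$ and $2\leq m\leq 3$, the minimality of $U$ is used in Theorem \ref{3dcontinuity} to produce $U\in C^0(\mathbb{T}_3^\ast;\mathcal{U}(m))$, after which Proposition \ref{analitycity} again gives real-analyticity.
\end{itemize}
In all three families of cases continuity is available and, combined with the real-analyticity of the Berry connection $A_j$ coming from the real-analyticity of the reference frame $\chi$, Proposition \ref{analitycity} concludes the argument.

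For part (ii), the strategy is to transfer $\tilde F_{MV}(\cdot)$ back to $\tilde F_{MV}(\cdot\,;\,\chi)$ using the real-analytic reference frame. Given a minimizer $\varphi=\{\varphi_1,\dots,\varphi_m\}\subset\tilde{\mathcal W}$ of $\tilde F_{MV}$, define $U\in W^{1,2}(\mathbb{T}_d^\ast;\mathcal{U}(m))$ by $U_{b,a}(k)=\langle\chi_b(k),\varphi_a(k)\rangle_{\mathcal{H}_f}$, so that $\varphi=\chi\cdot U$ pointwise. By the identity \eqref{Inf equality} and the fact that the map $U\mapsto\chi\cdot U$ is a bijection between admissible gauges and admissible Bloch frames, $U$ is a minimizer of $\tilde F_{MV}(\cdot\,;\,\chi)$ on $W^{1,2}(\mathbb{T}_d^\ast;\mathcal{U}(m))$. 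By part (i), $U$ is real-analytic as a map $\mathbb{T}_d^\ast\to\mathcal{U}(m)$, hence extends real-analytically to $\mathbb{R}^d$ with the appropriate periodicity. Since $\chi_b:\mathbb{R}^d\to\mathcal{H}_f$ is real-analytic by hypothesis (Assumption 2), the composite $\varphi_a(k)=\sum_b\chi_b(k)\,U_{b,a}(k)$ is a real-analytic $\mathcal{H}_f$-valued function, which gives (ii).

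The only subtle point in this plan is really the verification that the cases of the hypothesis exactly match the results available: the case $d\geq 4$, $m=1$ is not covered by the continuity results of the previous section, but in that regime Proposition \ref{regsingleband} bypasses them entirely via the abelian reduction and analytic hypoellipticity of the Laplacian; and the case $d=3$ is the only one requiring minimality, which enters through the monotonicity formula, compactness of blow-ups and the Liouville-type result for minimizing harmonic maps into $\mathcal{U}(m)$ with $m\leq 3$. Once the case analysis is laid out, both (i) and (ii) follow with no new computation.
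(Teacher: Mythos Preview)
Your proposal is correct and follows essentially the same approach as the paper: the case analysis for part (i) (Proposition \ref{regsingleband} for $d=1$ or $m=1$, Corollary \ref{2dcontinuity} plus Proposition \ref{analitycity} for $d=2$, Theorem \ref{3dcontinuity} plus Proposition \ref{analitycity} for $d=3$) and the reduction of part (ii) to part (i) via $\varphi=\chi\cdot U$ and \eqref{Inf equality} are exactly what the paper does. Your write-up is in fact somewhat more explicit than the paper's about why the case $d\geq 4$, $m=1$ is handled and about where minimality is actually needed.
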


\begin{proof}
\textrm{(i)} Since any minimizer $U$ is a weak solution to
(\ref{EuleroLagrange}) and $\chi$ is real-analytic, the conclusion
follows from Proposition \ref{regsingleband} if $d=1$ or $m=1$, and
from Corollary \ref{2dcontinuity}, Theorem \ref{3dcontinuity} and
Proposition \ref{analitycity} in the other cases.

\noindent \textrm{(ii)} Let $w = \UZ^{-1}\ph$ and $\ph = \chi \cdot
U$. Then $F_{MV}(w) = \tilde{F}_{MV}(\ph)= \tilde{F}_{MV}(U; \chi)$,
$U$ is a minimizer of the latter functional in view of (\ref{Inf
equality}), and the conclusion follows from part \textrm{(i)} above.
\end{proof}


\section{Exponential localization of maximally localized Wannier functions}

The main result of the paper is the following, and provides an
affirmative answer to problems $\mathrm{(MV_1)}$ and
$\mathrm{(MV_2)}$.

\begin{theorem}
\label{maintheorem} Let $\sigma_*$ be a family of $m$ Bloch bands
for the operator (\ref{Hamiltonian}) satisfying the gap condition
(\ref{Gap condition}), and let  $\{ P_{*}(k)\}_{k \in \R^d}$ be the
corresponding family of spectral projectors. Assume $d \leq 2$ and
$m \geq 1$, or $d\geq 1$ and $m=1$, or $d=3$ and $1\leq m \leq3$.  Then there exist
composite Wannier functions $\set{w_1, \ldots, w_m} \subset \W$
which minimize the localization functional \eqref{mvwmbande} under
the constraint that the corresponding quasi-Bloch functions are an
orthonormal basis for $\Ran \, P_*(k)$ for each $k \in Y^*$. In
addition, for any system of maximally localized composite Wannier
function $w = \set{w_1, \ldots, w_m}$ there exists $\beta >0$ such
that $e^{\beta |x|} w_a$ is in $L^2(\R^d)$ for every $a \in \set{1,
\ldots, m}$, \ie the composite Wannier function $w_a$ is
exponentially localized.
\end{theorem}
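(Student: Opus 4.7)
The plan is to assemble the existence and regularity results already established and then upgrade the real-analyticity of a minimizer $U:\Base \to \Un(m)$ to holomorphy on a complex strip in $\C^d$, so that the Paley-Wiener type Corollary~\ref{Cor Exp local} delivers exponential localization.

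\textbf{Existence.} Under the stated constraints on $d$ and $m$, Proposition~\ref{Prop P properties} and Theorem~\ref{Theorem triviality} produce a real-analytic Bloch frame $\chi \subset C^{\omega}(\R^d,\Hf) \cap \Ht$. In particular $\chi \subset \tilde{\W}$, so Assumption~1 holds and Theorem~\ref{existence} yields a minimizer $U \in W^{1,2}(\Base;\Un(m))$ of $\tilde{F}_{\rm MV}(\,\cdot\,;\chi)$. The system defined by $w_a := \UZ^{-1}\bigl(\sum_b \chi_b\, U_{b,a}\bigr)$ lies in $\W$ and, by the identity \eqref{Inf equality}, is maximally localized.

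\textbf{Exponential localization.} Let $w = \{w_1,\ldots,w_m\}$ be any maximally localized system and write $\varphi = \UZ w = \chi \cdot U$ for a unique $U \in W^{1,2}(\Base;\Un(m))$, as in Remark~\ref{Rem rough references}. By \eqref{Inf equality}, $U$ minimizes $\tilde{F}_{\rm MV}(\,\cdot\,;\chi)$, and Theorem~\ref{Cor real analiticity}(i) gives $U \in C^{\omega}(\Base;\Un(m))$. By compactness of $\Base$, real-analyticity promotes to a holomorphic extension $U:\Omega_{\alpha'} \to M_m(\C)$ for some $\alpha' > 0$, which is $\Gamma^*$-periodic by analytic continuation of the periodicity on $\R^d$. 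On the other hand, Proposition~\ref{Prop P properties} and the proof of Theorem~\ref{Theorem triviality} provide a holomorphic, $\tau$-equivariant extension $\tilde{\chi}:\Omega_{\alpha} \to (\Hf)^m$ of $\chi$. Setting $\beta_0 := \min(\alpha,\alpha')$, the product $\Phi := \tilde{\chi} \cdot U$ is holomorphic on $\Omega_{\beta_0}$ with values in $(\Hf)^m$, and its $\tau$-equivariance, valid on $\R^d$, propagates to $\Omega_{\beta_0}$ by the identity principle; hence each $\Phi_a$ lies in $\Hi^{\C}_{\tau,\beta_0}$ and is analytic there. The uniform $L^2$ bound on horizontal slices follows at once from $\tau$-unitarity, which gives $\|\Phi_a(k+ih)\|_{\Hf}$ $\Gamma^*$-periodic in $k$, together with continuity of $\Phi_a$ on the compact set $\overline{Y^*} + i\{|h_j|\leq \beta_0/2\}$. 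Corollary~\ref{Cor Exp local} then yields $\int_{\R^d} e^{2\beta |x|}|w_a(x)|^2\,dx < +\infty$ for every $\beta < \beta_0$, concluding the argument.

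\textbf{Main obstacle.} The genuinely hard work has been carried out in the preceding sections (existence, continuity in dimensions $2$ and $3$, and the bootstrap to analyticity). The only remaining subtlety is the passage from pointwise real-analyticity on $\Base$ to a uniform holomorphic strip for $U$, which follows from compactness of the torus; once the strip is in hand, matching it with the one produced by Proposition~\ref{Prop P properties} for $\chi$ and transferring the $\tau$-equivariance by analytic continuation is routine, and the Paley-Wiener corollary does the rest.
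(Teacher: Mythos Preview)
Your proposal is correct and follows essentially the same route as the paper's own proof: existence via Theorem~\ref{existence} applied with a real-analytic reference frame $\chi$ supplied by Theorem~\ref{Theorem triviality}; real-analyticity of any minimizing gauge $U$ via Theorem~\ref{Cor real analiticity}; promotion of $U$ to a $\Gamma^*$-periodic holomorphic map on a strip $\Omega_{\alpha'}$ by compactness of $\Base$; combination with the holomorphic $\tau$-equivariant extension of $\chi$ on a common strip; and finally the Paley-Wiener argument (Proposition~\ref{Lemma Paley-Wiener}/Corollary~\ref{Cor Exp local}). The only cosmetic difference is that the paper invokes Proposition~\ref{Lemma Paley-Wiener} directly after noting the uniform $L^2$ bound on slices, whereas you cite Corollary~\ref{Cor Exp local}; both are equivalent here.
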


Conjecturally, we expect that the parameter $\beta$ appearing in the
latter claim does not depend on the minimizer $w$, and that the
claim holds true for any $\beta < \alpha$, where $\alpha$ is
appearing in (\ref{Omega}).

\begin{proof} In view of Theorem \ref{Theorem triviality}, there
exists a Bloch frame $\chi$ which is  real-analytic. Therefore,
problem $\mathrm{(MV_1)}$ is equivalent to showing
 that the r.h.s.\ of
(\ref{Inf equality}) is attained, which is proved in Theorem
\ref{existence}.

Let $w = \set{w_1, \ldots w_m} \subset \W$ be any minimizer of
$F_{MV}$. Then $\ph_a := \UZ w_a$  defines a minimizer $\set{\ph_1,
\ldots, \ph_m} \subset \tilde \W$ of $\tilde F_{MV}$ among the Bloch
frames. With respect to the frame $\chi$, one has $\ph = \chi
\,\cdot \, U$, where $U \in W^{1,2}(\Base; \Un(m))$ is given by
$U_{b,a}(k) = \inner{\chi_b(k)}{\ph_a(k)}$. Clearly, $U$ is a
minimizer of $\tilde F_{MV}(\cdot\,;\, \chi)$. By Theorem \ref{Cor
real analiticity}, $U \in W^{1,2}(\Base; \Un(m))$ is actually
real-analytic.

The function $U$ defines a real-analytic $\Gamma^*$-periodic
function $\tilde U: \R^d \to \Un(m)$. By unique continuation,
$\tilde U$ extends to a holomorphic function $\tilde U^{\C}$ from
$\Omega_{\beta_1}$ to $\mathrm{GL}(m, \C)$, which is
$\Gamma^*$-periodic in the real part of its argument. Analogously, arguing as in the
proof of Proposition \ref{Prop P properties}, $\chi$ admits a
holomorphic extension $\chi^{\C} \in \Hi_{\tau, \beta_2}^{\C}$ for
some $\beta_2 > 0$. Therefore $\ph = \chi \cdot U$ is a
real-analytic Bloch frame which admits a holomorphic extension
$\ph^{\C} = \chi^{\C} \cdot \tilde{U}^{\C}$, which is in $\Hi_{\tau,
\beta_0}^{\C}$ for $\beta_0 = \min\{\beta_1, \beta_2, \alpha \}$.
Moreover, for any $\beta < \beta_0$ there exists $C$ such that
$$
\int_{Y*} \norm{\ph^{\C}(k + i h)}^2 \,dk <C
$$
for every $h$ such that $|h_j| \leq \beta$ for $j \in \set{1,
\ldots,d}$. By Proposition \ref{Lemma Paley-Wiener} one has that
$w_a := \UZ^{-1}\, \ph_a$ satisfies
$$
\int e^{2 \beta |x|}\,|w_a(x)|^2 \, dx < + \infty.
$$
\end{proof}

\newpage



\appendix
\section{Harmonic maps into $\mathcal{U}(m)$}

We consider, for $\Omega^\prime = \mathbb{R}^3 \setminus \{0 \}$ and
$U \in W^{1,2}_{\rm loc} (\Omega^\prime; \, \mathcal{U}(m))$ with
$m\geq 2$, the energy functional
\begin{equation}
\label{Dirichletenergy} E(U;\Omega)= \int_{\Omega} \frac12
\sum_{j=1}^3 \tr \left( \frac{\partial U^*}{\partial
k_j}\frac{\partial U}{\partial k_j}\right) dk \, , \qquad \Omega
\subset\subset \Omega^\prime .
\end{equation}
We assume that $U$ is a local minimizer of \eqref{Dirichletenergy}
in $\Omega^\prime$, \ie that $E(U;\Omega)\leq E(W;\Omega)$ for any
$\Omega \subset \subset \Omega^\prime$ and for any $W \in
W^{1,2}_{\rm loc}(\Omega^\prime; \, \mathcal{U}(m))$ such that
$\supp(U-W) \subset \subset \Omega$. Clearly, if $\Psi \in
C^\infty_0(\Omega; \, \mathfrak{u}(m))$ and $\varepsilon\in
\mathbb{R}$, then $U_\varepsilon(k)=U(k) \exp{\varepsilon \Psi(k)}$
is an admissible variation of $U$, hence local minimality gives
\begin{equation}
\label{locminimality} \left. \frac{d}{d\varepsilon}
\right\vert_{\varepsilon=0} \, E(U_\varepsilon;\Omega) =0 \, ,
\qquad  \left. \frac{d^2}{d\varepsilon^2}
\right\vert_{\varepsilon=0} \, E(U_\varepsilon;\Omega) \geq 0 \, .
\end{equation}
Since the tangential variation $\Psi$ can be chosen arbitrarily, the
first condition in \eqref{locminimality} easily implies that $U$ is
a weakly harmonic map, \ie$U$ is a weak solution to
\begin{equation}
\label{harmonicmap} -\Delta U+ \sum_{j=1}^3 \frac{\partial
U}{\partial k_j} U^{-1} \frac{\partial U}{\partial k_j}=0 \, .
\end{equation}

We aim to prove that, when $m\geq2$, any local minimizer $U \in
W^{1,2}_{\rm loc}(\Omega^\prime; \,\mathcal{U}(m))$ which is
degree-zero homogeneous (a minimizing tangent map), \ie
$$
U(k)=\omega \left( \frac{k}{|k|}\right) \mbox{ for some } \omega \in
C^\infty (S^2; \, \mathcal{U}(m)),
$$ is constant.  The argument we are going to use combines a stability
inequality derived from \eqref{locminimality} (see inequality
\eqref{Stability ineq} below) and a nontrivial quantization property
for the energy of every harmonic map $\omega \in C^{\infty}(S^2; \,
\mathcal{U}(m))$  \cite[ Corollary 8]{V}. \newline 
Actually, in view of Lemma \ref{Lemma SU} below, when $m=2$ this
constancy property is known and it follows from \cite[Proposition
1]{SU2}, since $\mathcal{SU}(2) \equiv S^3(\sqrt{2})$.  However,
when the target is a sphere, the constants in the stability
inequalities are uniformly bounded as the dimension of the sphere
increases (see \cite[formula $(*)$]{SU2}), which will not be the
case in the problem we are dealing with. Here we prove, by a
different technique, the constancy property in the case $m \leq 3$.   In our opinion, if
\eqref{Stability ineq}  cannot be improved, then it seems difficult
to prove the Liouville property for any $m\geq 4$ using the
so-called Bochner method as in the sphere-valued case (see
\cite{SU2} and \cite{LW2}; see also \cite[Chapter 5]{X} and
references therein).


As far as our specific problem is concerned, we can assume that the
target is indeed the special unitary group $\mathcal{SU}(m)$ in view of the
following auxiliary result.

\begin{lemma}\label{Lemma SU} Let $d=3$ and $m \geq 2$. Let $U \in W^{1,2}_{\rm
loc}(\Omega^{\prime}; \, \mathcal{U}(m))$ be a degree-zero
homogeneous weakly harmonic map.   Then $\det \, U \equiv \alpha \in
\mathcal{U}(1)$  and $U \in W^{1,2}_{\rm loc}(\Omega^{\prime};\,
\mathcal{SU}(m))$ up to multiplication by a constant unitary matrix
$U_0$. As a consequence, $\tr (U^*
\partial_j U) \equiv 0$ for each $1\leq j\leq 3$.
\end{lemma}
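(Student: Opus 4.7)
My plan is to exploit the fact that the Lie group homomorphism $\det : \mathcal{U}(m) \to \mathcal{U}(1)$ converts the nonlinear harmonic map equation on $\mathcal{U}(m)$ into a linear Laplace equation on $\mathcal{U}(1)$; simple-connectedness of $\Omega'$ then reduces the statement to a Liouville-type fact for harmonic functions on $S^2$.

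As a preliminary step I would observe that, being degree-zero homogeneous and weakly harmonic on $\Omega' = \R^3 \setminus \{0\}$, the map $U$ is determined by its trace $\omega = U|_{S^2} \in W^{1,2}(S^2;\mathcal{U}(m))$, which is itself weakly harmonic. Since the source is two-dimensional, H\'elein's regularity theorem upgrades $\omega$ to $C^{\infty}(S^2;\mathcal{U}(m))$ and hence $U \in C^{\infty}(\Omega';\mathcal{U}(m))$, so all of the manipulations below are classical pointwise calculations.

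Next I would consider $\det U : \Omega' \to \mathcal{U}(1)$. Jacobi's formula gives $\partial_j \det U = (\det U)\,\tr(U^{-1}\partial_j U)$ pointwise, and since $U^{-1}\partial_j U \in \mathfrak{u}(m)$ the quantity $\tr(U^{-1}\partial_j U)$ is purely imaginary. Because $\Omega'$ is simply connected I would lift $\det U = e^{i g}$ with $g \in C^{\infty}(\Omega';\R)$, so that $\partial_j g = -i\,\tr(U^{-1}\partial_j U)$. A short second differentiation using $\partial_j U^{-1} = -U^{-1}(\partial_j U)U^{-1}$ and the cyclicity of the trace yields
\[ i\,\Delta g \;=\; -\sum_{j=1}^{3} \tr\bigl((U^{-1}\partial_j U)^2\bigr) + \tr(U^{-1}\Delta U). \]
The harmonic map equation $\Delta U = \sum_j (\partial_j U)U^{-1}(\partial_j U)$ enters here: multiplying on the left by $U^{-1}$ and taking the trace shows that the two contributions on the right-hand side cancel exactly, so $\Delta g \equiv 0$ on $\Omega'$.

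The degree-zero homogeneity then finishes the argument: writing $g(k) = h(k/|k|)$ for some $h \in C^{\infty}(S^2;\R)$, Euclidean harmonicity of $g$ is equivalent to $\Delta_{S^2} h = 0$, so the maximum principle on the compact manifold $S^2$ forces $h$, and hence $\det U$, to be constant equal to some $\alpha \in \mathcal{U}(1)$. Picking any $U_0 \in \mathcal{U}(m)$ with $\det U_0 = \alpha$ then gives $U_0^{-1} U \in W^{1,2}_{\mathrm{loc}}(\Omega';\mathcal{SU}(m))$, and the identity $\tr(U^* \partial_j U) = \partial_j \log \det U = 0$ follows immediately. I do not foresee any serious obstacle: the whole argument is essentially the algebraic observation that, under the homomorphism $\det$, the harmonic map equation on $\mathcal{U}(m)$ projects to the linear Laplace equation on the abelian factor; the only minor subtlety is the preliminary smoothness upgrade, which is safe in two source dimensions.
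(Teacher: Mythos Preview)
Your argument is correct and reaches the same conclusion, but the route differs from the paper's in two linked respects. The paper stays at the Sobolev level throughout: it lifts $\det U$ to a real phase $g\in W^{1,2}(B)$ via the Bethuel--Zheng lemma, shows $g$ is degree-zero homogeneous by a slicing argument along rays, and then deduces that $g$ is weakly harmonic by an \emph{energy decomposition} $E(e^{\frac{i}{m}g_\varepsilon}\Id\, W)=E(e^{\frac{i}{m}g_\varepsilon}\Id)+E(W)$ together with the first-variation criticality of $U$; harmonicity plus homogeneity then forces $g$ constant. You instead front-load a regularity upgrade (restrict to $S^2$, apply H\'elein), after which the smooth lift exists by simple connectedness of $\mathbb{R}^3\setminus\{0\}$ and a direct Jacobi-formula computation shows $\Delta g=0$ pointwise from the PDE itself.

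Each choice has its merits. Your approach is more transparent at the PDE level: the cancellation $\tr(U^{-1}\Delta U)=\sum_j\tr((U^{-1}\partial_jU)^2)$ exhibits very concretely that $\det$ maps the $\mathcal{U}(m)$ harmonic map equation onto the linear Laplace equation, and you avoid both the Sobolev lifting lemma and the variational detour. The paper's approach, on the other hand, is self-contained at the $W^{1,2}$ level and does not need to invoke H\'elein's theorem as a preliminary step (the paper uses that regularity only later in the appendix). One small point you might make explicit: degree-zero homogeneity of $g$ is not automatic from that of $e^{ig}$, but follows because $g(\lambda k)-g(k)\in 2\pi\mathbb{Z}$ is continuous in $\lambda>0$ and vanishes at $\lambda=1$.
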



\begin{proof}
Clearly, $\det U\in W^{1,2}_{\rm loc}(\Omega^{\prime};\,
\mathcal{U}(1))$ and it is degree-zero homogeneous. According to
\cite{BZ}, if $B$ is the unit ball, there exists $g\in W^{1,2}(B)$
such that $\det U= \expo{ig}$ a.e.\ in $B$. By slicing, $g$ is
$W^{1,2}_{\rm loc} $ on a.e.\ ray from the origin, so it is
continuous along a.e.\ ray. Since $\expo{ig}= \det U$ is constant
along the rays, we conclude that $g$ is also constant along the
rays, \ie $g$ is degree-zero homogeneous.

Let us set $\hat{U}(k)=\expo{\frac{i}{m}g(k)} \, \Id$ and
$W(k)=\hat{U}(k)^* U(k)$. By construction $\det \hat{U}\equiv \det
U$, so that $W=\hat{U}^* U \in W^{1,2}(B;\, \mathcal{SU}(m))$. Thus,
in order to prove the lemma it is clearly enough to show that $g$
(and in turn $\hat{U}$) is constant in $B$, because the conclusion
follows in the whole $\mathbb{R}^3$  since $U$ is
degree-zero homogeneous. Notice that if $\eta \in C^\infty_0(B)$,
$\varepsilon \in \mathbb{R}$ and $g_\varepsilon=g+\varepsilon \eta$
then
\begin{equation}
\label{energydec} E(\expo{\frac{i}{m}g_\varepsilon} \Id \, W;
B)=\int_B\frac{1}{2m} |\nabla g_\varepsilon|^2+ \int_B \frac12
\sum_{j=1}^3 \tr \left( \frac{\partial W^*}{\partial
k_j}\frac{\partial W}{\partial k_j}\right) dk
=E(\expo{\frac{i}{m}g_\varepsilon} \Id;B)+E(W;B) \, .
\end{equation}
Since $U_\varepsilon=\expo{\frac{i}{m}g_\varepsilon} \Id \, W$ is an
admissible variation for $U$, differentiating \eqref{energydec} we
readily see that the function $g$ is weakly harmonic in $B$, hence
$g$ is continuous (real-analytic). Since $g$ is also degree-zero
homogeneous we conclude that $g$ is constant in $B$ as claimed. 
As a consequence, $\hat{U}$ is constant, $\det U$ is also constant
(both in $B$ and in $\mathbb{R}^3$, both functions being degree-zero
homogeneous) and $U \in W^{1,2}_{\rm loc}(\Omega^{\prime}; \,
\mathcal{SU}(m))$ up to multiplying by a constant unitary matrix
$U_0=\hat{U}^*$. Finally, since $U^*\partial_jU \in
\mathfrak{su}(m)$ we also have $\tr \(U^*
\partial_j U\) \equiv 0$ for each $1\leq j\leq 3$.
\end{proof}


\subsection{The stability inequality}
Throughout this section we assume that $U$ is a smooth harmonic map
and we denote by $V$ the variational vector field, \ie $ V(k) =
\left. \frac{d}{d\varepsilon} \right\vert_{\varepsilon=0} \,
U_\epsi(k) = U_{\epsi}(k) \Psi(k)$ associated to the deformation
$U_\epsi (k)=U(k) \exp^{\epsi \Psi(k)}$. We regard
$\mathcal{SU}(m)\subset M_m(\mathbb{C})$ as a Riemannian manifold
with the metric induced by the embedding \ in $M_m(\C)$, the latter
being equipped with the Hilbert-Schmidt inner
product\!\footnote{\ Notice that this metric differs by a constant
from the metric on $\mathcal{SU}(m)$ induced by
the Killing form on $\mathfrak{su}(m)$. The difference is, for our purposes, immaterial.}
\ $\inner{A}{B} = \re \tr(A^*B)$. The second variation formula for
the energy \cite[Chapter 1]{LW} yields
\begin{equation}\label{Second variation}
\left. \frac{d^2}{d\varepsilon^2} \right\vert_{\varepsilon=0} \,
E(U_\varepsilon;\Omega) = \sum_j \int_{\Omega}
\normHS{\widetilde\nabla_{e_j} V}^2  - R(dU(e_j), V, dU(e_j), V)
\end{equation}
where $\{ e_j \}$ is an orthonormal basis of $\R^3 \cong
T\Omega^{\prime}$, $\widetilde \nabla$ is the pull-back via $U$ of
the Levi-Civita connection on $T\mathcal{SU}(m)$, $R$ is the curvature
$(4,0)$-tensor on $T\mathcal{SU}(m)$, and $\normHS{A}=
(\tr(A^*A))^{1/2}$ is the Hilbert-Schmidt norm in $M_m(\C)$.

In order to obtain a convenient stability inequality, we focus on a
variational vector field $V$ which is obtained by  orthogonal
projection onto $T\mathcal{SU}(m)$ of a constant vector field on
$M_m(\C)$ and we average over such constant vector fields the
corresponding inequality given by \eqref{Second variation}. The idea
is not new, it is explicitly used in \cite{SU2} when the target is a
sphere (averaging over the conformal vector fields) and more
generally in \cite{HW} for homogeneous manifolds and \eg in
\cite{We} for general targets. Here we follow a very concrete and
elementary approach when the target is $\mathcal{SU}(m) \subset
M_m(\mathbb{C})$ and we obtain a stability inequality with an
explicit constant (see inequality \eqref{Stability ineq} below).

\begin{ciao}
\label{howardwei} Following \cite{HW} the same inequality (with
exactly the same constant) could be deduced in a more abstract way,
regarding $\mathcal{SU}(m)$ as a homogeneous (group) manifold. More
precisely, one can regard $\mathcal{SU}(m)$ as minimal submanifolds
in the sphere $S^{m^2-1}(\sqrt{m})$ through the standard minimal
immersion in the first nontrivial eigenspace of its Laplace-Beltrami
operator, getting a stability inequality with an explicit constant
expressed in terms of the first nonzero eigenvalue of the Laplacian
(see e.g. \cite{X}, pages 137-138 and equation (5.23) or \cite{HW},
pages 328-329 and Proposition 5.2). The eigenvalue as well as the
dimension of the eigenspace are known in the literature, in terms of
the representation theory of the Lie algebra $\mathfrak{su}(m)$, so
the constant can be explicitly computed.
\end{ciao}

To implement this idea, we consider the orthogonal projection $P_U:\
M_m(\C) \rightarrow T_U\mathcal{SU}(m)$ defined by
$$
P_U(\phi) = \half \, U \left( U^* \phi - \phi^* U - \frac{1}{m}
\tr(U^* \phi - \phi^* U) \, \Id \right)
$$
Clearly, for $U = \Id$, the formula above reduces to the orthogonal
projection from $M_m(\C)$ onto $\mathfrak{su}(m)$. For a fixed $\phi
\in M_m(\C)$, we define the tangent vector field $\transpose{\tilde{\phi}}$ along $U$
by setting  $$\transpose{\tilde{\phi}}(k) =
P_{U(k)}(\phi).$$ For $\eta \in C_0^{\infty}(\Omega^{\prime}, \R)$,
we define $ V^{\phi, \eta}(k) := \eta(k)
\transpose{\tilde{\phi}}(k)$, as a section of the pull-back bundle $U^* T\mathcal{SU}(m)$, corresponding to the admissible variation $U^{\phi,
\eta}_\epsi(k)=U(k) \exp(\epsi \eta(k) U^*(k) \transpose{\tilde{\phi}}(k))$. The advantage of this choice is that the
second variation of the energy, when averaged with respect to $\phi$
varying in an orthonormal basis (ONB) of $TM_m(\C) \cong M_m(\C)$,
decouples as the sum of two simpler terms, as shown in the following
lemma.

\begin{lemma}\label{Lemma Energy Decomposition}
Let $V^{\phi, \eta}$ be defined as above, and $U_{\epsi}^{\phi,
\eta}$ be the corresponding variation. Then
\begin{eqnarray} \label{Energy var decomposition}
\nonumber \sum_{\phi \in \mathrm{ONB}} \left.
\frac{d^2}{d\varepsilon^2} \right\vert_{\varepsilon=0} \, E(U^{\phi,
\eta}_\epsi;\Omega) &=& (m^2-1) \int_{\Omega} |\nabla \eta|^2 \\
&+& \int_{\Omega} \eta^2 \sum_j \sum_{\phi \in \mathrm{ONB}} \left\{
\normHS{\widetilde \nabla_{e_j} \transpose{\tilde{\phi}}}^2 - R(dU(e_j),
\transpose{\tilde{\phi}}, dU(e_j), \transpose{\tilde{\phi}}) \right\}
\end{eqnarray}
where the sum runs over $\phi$ varying in an orthonormal basis of
$M_m(\C)$.
\end{lemma}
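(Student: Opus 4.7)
The plan is to apply the second variation formula \eqref{Second variation} with the specific variation field $V^{\phi,\eta} = \eta\,\transpose{\tilde\phi}$, expand using Leibniz for the pull-back connection $\widetilde\nabla$, and then exploit two algebraic identities that arise upon summing over an orthonormal basis (ONB) of the ambient space $M_m(\C)$ viewed as a $2m^2$-dimensional real Euclidean space with the inner product $\langle A,B\rangle = \re \tr(A^*B)$.

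First, by the Leibniz rule $\widetilde\nabla_{e_j}(\eta\transpose{\tilde\phi}) = (\partial_j\eta)\,\transpose{\tilde\phi} + \eta\,\widetilde\nabla_{e_j}\transpose{\tilde\phi}$, so that
\[
\normHS{\widetilde\nabla_{e_j} V^{\phi,\eta}}^2 = (\partial_j\eta)^2\,\normHS{\transpose{\tilde\phi}}^2 + 2\eta\,(\partial_j\eta)\,\bigl\langle \transpose{\tilde\phi},\widetilde\nabla_{e_j}\transpose{\tilde\phi}\bigr\rangle + \eta^2\,\normHS{\widetilde\nabla_{e_j}\transpose{\tilde\phi}}^2,
\]
while the curvature term pulls out an $\eta^2$ factor cleanly. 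Plugging into \eqref{Second variation} splits the integral into three pieces, and the whole task reduces to summing each piece over $\phi$ in an ONB of $M_m(\C)$.

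The first key identity is that, pointwise in $k$,
\[
\sum_{\phi \in \mathrm{ONB}} \normHS{\transpose{\tilde\phi}(k)}^2 = \sum_{\phi \in \mathrm{ONB}} \normHS{P_{U(k)}(\phi)}^2 = \tr(P_{U(k)}) = \dim_{\mathbb R} T_{U(k)}\mathcal{SU}(m) = m^2-1,
\]
since $P_{U(k)}$ is the orthogonal projection onto $T_{U(k)}\mathcal{SU}(m)$ and the trace of an orthogonal projection equals the real dimension of its range. This immediately yields the first term $(m^2-1)\int_\Omega |\nabla\eta|^2$ after summing over $j$.

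The second (cross) term vanishes: by metric compatibility of $\widetilde\nabla$,
\[
\sum_{\phi}\bigl\langle \transpose{\tilde\phi},\widetilde\nabla_{e_j}\transpose{\tilde\phi}\bigr\rangle = \tfrac{1}{2}\,\partial_j\!\sum_{\phi}\normHS{\transpose{\tilde\phi}}^2 = \tfrac{1}{2}\,\partial_j(m^2-1) = 0,
\]
so the middle cross term in the Leibniz expansion disappears after the ONB sum. The third term and the curvature term already carry the overall factor $\eta^2$, so no further manipulation is needed and they combine into the second integrand of \eqref{Energy var decomposition}. Assembling the three contributions produces exactly the stated formula. There is no genuine obstacle here; the only point requiring care is keeping track of real versus complex dimensions, namely noting that summing $\normHS{P_U(\phi)}^2$ over a real ONB of $M_m(\C)$ gives the real trace of $P_U$, which equals $\dim_{\mathbb R}\mathcal{SU}(m)=m^2-1$, and that metric compatibility of the pull-back Levi-Civita connection is what kills the cross term.
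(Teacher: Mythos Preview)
Your proof is correct and follows essentially the same approach as the paper: Leibniz expansion of $\widetilde\nabla_{e_j}(\eta\,\transpose{\tilde\phi})$, the identity $\sum_{\phi}\normHS{P_U\phi}^2=\Tr P_U=\dim\mathcal{SU}(m)=m^2-1$, and the vanishing of the cross term via $\sum_\phi\langle\transpose{\tilde\phi},\widetilde\nabla_{e_j}\transpose{\tilde\phi}\rangle=\tfrac12\partial_j(m^2-1)=0$.
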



\begin{proof} By the Leibniz property, for  $V \equiv V^{\phi,
\eta} = \eta \, \transpose{\tilde{\phi}}$ one gets
\begin{eqnarray*}
\normHS{\widetilde \nabla_{e_j} V }^2 &=&  |\partial_j \eta|^2 \,
\normHS{\transpose{\tilde{\phi}}}^2  +  2 \, \eta \, \partial_j \eta \,
\inner{\transpose{\tilde{\phi}}}{\widetilde \nabla_{e_j}\transpose{\tilde{\phi}}} +
\eta^2 \, \norm{\widetilde \nabla_{e_j} \transpose{\tilde{\phi}}}^2.
\end{eqnarray*}
We notice that
$$
\sum_{\phi \in
\mathrm{ONB}}\inner{\transpose{\tilde{\phi}}}{\transpose{\tilde{\phi}}} = \sum_{\phi
\in \mathrm{ONB}} \inner{\phi}{P_U{\phi}} = \Tr P_U =
\dim{\mathcal{SU}(m)}
$$
where $\Tr$ denotes the trace in the algebra $\mathrm{End}(T_U
M_m(\C))$\footnote{\ In contrast, all over the paper we denote by $\tr$
the trace in $M_m(\C)$, \ie the ordinary matrix trace.}. Moreover,
$$
\sum_{\phi \in \mathrm{ONB}}
\inner{\transpose{\tilde{\phi}}}{\widetilde
\nabla_{e_j}\transpose{\tilde{\phi}}} = \partial_j \( \half
\sum_{\phi} \inner{\transpose{\tilde
\phi}}{\transpose{\tilde{\phi}}}\) =
\partial_j \(\half \dim{\mathcal{SU}(m)}\) = 0,
$$
so one obtains
$$
\sum_{\phi \in \mathrm{ONB}} \sum_j \normHS{\widetilde \nabla_{e_j}
V }^2 = \dim{\mathcal{SU}(m)} \, |\nabla \eta|^2 + \eta^2 \,
\sum_{\phi \in \mathrm{ONB}} \sum_j \norm{\widetilde \nabla_{e_j}
\transpose{\tilde{\phi}}}^2.
$$
By substituting in (\ref{Second variation}) and recalling that
$\dim{\mathcal{SU}(m)} = m^2 -1$ one obtains the claim.
\end{proof}

We now exploit the specific structure of $\SU$ to make the last term
in (\ref{Energy var decomposition}) more explicit. The first lemma
does not depend on the particular structure of $V^{\phi, \eta}$,
so we state it for any variational vector field $V$. Hereafter, we
set $e_j = \frac{\de}{\de k_j}$ and, in view of the embedding $\SU
\subset M_m(\C)$, we identify $dU(e_j)$ and $\de_j U$.

\begin{lemma}\label{Lemma Geometric term}
Let $V=U \psi$ with $\psi(k) \in \su$ and $U(k) \in \SU$. Then
\begin{equation}\label{Eq Geometric term}
\normHS{\widetilde \nabla_{e_j} V }^2 - R(\de_jU, V, \de_jU, V) =
\normHS{\partial_j \psi}^2 + \tr\left( U^* \de_j U \, [\psi, \,
\de_j\psi]\right).
\end{equation}
\end{lemma}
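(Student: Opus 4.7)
The identity is pointwise and follows from a direct computation based on the embedding $\mathcal{SU}(m) \subset M_m(\mathbb{C})$. The two key ingredients are an explicit formula for the pullback connection (as the tangential part of the ordinary derivative) and the biinvariance of the metric on $\mathcal{SU}(m)$ (which controls the curvature term).

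First I set $\omega_j := U^*\partial_j U$ and note that, since $U$ takes values in $\mathcal{SU}(m)$, one has $\omega_j \in \mathfrak{su}(m)$ pointwise (cf.\ Lemma \ref{Lemma SU}). Writing $V = U\psi$ and differentiating in $M_m(\mathbb{C})$ gives
\[
\partial_j V = (\partial_j U)\psi + U\,\partial_j\psi = U\bigl(\omega_j\psi + \partial_j\psi\bigr).
\]
The pullback connection is the orthogonal projection of $\partial_j V$ onto $T_{U}\mathcal{SU}(m) = U\cdot\mathfrak{su}(m)$. The orthogonal projection onto $\mathfrak{su}(m) \subset M_m(\mathbb{C})$ sends $X \mapsto \tfrac{1}{2}(X-X^*) - \tfrac{1}{2m}\tr(X-X^*)\,\mathbb{I}$. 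Since $\partial_j\psi \in \mathfrak{su}(m)$ it is fixed by this projection; on the other hand $(\omega_j\psi)^* = \psi\omega_j$ and $\tr[\omega_j,\psi] = 0$, so the projection of $\omega_j\psi$ is $\tfrac{1}{2}[\omega_j,\psi]$. Consequently
\[
\widetilde\nabla_{e_j} V \;=\; U\Bigl(\tfrac{1}{2}[\omega_j,\psi] + \partial_j\psi\Bigr).
\]

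Using $U^*U = \mathbb{I}$ to discard the unitary left factor inside the Hilbert–Schmidt norm, expanding the square yields
\[
\normHS{\widetilde\nabla_{e_j} V}^2 \;=\; \normHS{\partial_j\psi}^2 \;+\; \tfrac{1}{4}\normHS{[\omega_j,\psi]}^2 \;+\; \bigl\langle [\omega_j,\psi],\,\partial_j\psi\bigr\rangle,
\]
which isolates three contributions: the intrinsic piece $\|\partial_j\psi\|^2$, a quadratic commutator term, and a cross term.

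For the curvature term I invoke the fact that the induced metric on $\mathcal{SU}(m)$ is biinvariant. Since $\partial_j U = U\omega_j$ and $V = U\psi$ are the left translates to $U$ of $\omega_j,\psi\in\mathfrak{su}(m)$, left invariance of $R$ yields
\[
R(\partial_j U,\,V,\,\partial_j U,\,V)\big|_U \;=\; R(\omega_j,\psi,\omega_j,\psi)\big|_{\mathbb{I}} \;=\; \tfrac{1}{4}\normHS{[\omega_j,\psi]}^2,
\]
which is the standard sectional curvature formula for a compact Lie group with biinvariant metric. (As an independent check one can derive this from the Gauss equation, computing the second fundamental form of $\mathcal{SU}(m) \subset M_m(\mathbb{C})$ as $II(X,Y) = \tfrac{1}{2}\{X,Y\}$ for $X,Y \in \mathfrak{su}(m)$ at the identity.) Subtracting, the $\tfrac{1}{4}\|[\omega_j,\psi]\|^2$ terms cancel exactly, leaving
\[
\normHS{\widetilde\nabla_{e_j}V}^2 - R(\partial_j U, V, \partial_j U, V) \;=\; \normHS{\partial_j\psi}^2 + \bigl\langle [\omega_j,\psi],\,\partial_j\psi\bigr\rangle.
\]

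The final step is to rewrite the cross term as $\tr\bigl(U^*\partial_j U\,[\psi,\partial_j\psi]\bigr)$. Using the antihermiticity of $\partial_j\psi$ (so that $\langle A,\partial_j\psi\rangle = -\tr(A\,\partial_j\psi)$ for any matrix $A$) together with the cyclic invariance of the trace, one computes
\[
\tr\bigl([\omega_j,\psi]\,\partial_j\psi\bigr) \;=\; \tr\bigl(\omega_j\psi\,\partial_j\psi\bigr) - \tr\bigl(\omega_j\,\partial_j\psi\,\psi\bigr) \;=\; \tr\bigl(\omega_j\,[\psi,\partial_j\psi]\bigr),
\]
and the desired form \eqref{Eq Geometric term} follows (modulo the sign of the curvature term, which is dictated by the convention fixed in \eqref{Second variation}). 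The only mildly delicate step is the identification of the orthogonal projection onto the tangent space; everything else is algebra in $M_m(\mathbb{C})$ and an invocation of the biinvariance.
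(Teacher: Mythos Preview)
Your proof is correct and follows essentially the same route as the paper's: compute $\widetilde\nabla_{e_j}V$ as the tangential projection of $\partial_j V$ to obtain $U\bigl(\partial_j\psi+\tfrac12[\omega_j,\psi]\bigr)$, expand the Hilbert--Schmidt norm, identify the curvature term via the Cartan formula $R(A,B,A,B)=\tfrac14\|[A,B]\|^2$ for a bi-invariant metric, and observe that the $\tfrac14\|[\omega_j,\psi]\|^2$ contributions cancel, leaving $\|\partial_j\psi\|^2$ plus the cross term. Your explicit identification of the projection of $\omega_j\psi$ as $\tfrac12[\omega_j,\psi]$ and the optional Gauss-equation check are minor embellishments, not a different strategy.
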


\begin{proof}
The covariant derivative in the pull-back bundle $U^*T\SU$ is equal
to the projection on $U^*T\SU$ of the ordinary derivative,
\ie
$$
\widetilde \nabla_{e_j} V(k)  = P_{U(k)} \( \frac{\de V}{\de k_j}(k)
\).
$$
By an explicit computation, and taking into account that $\psi^* = -
\psi$, one gets
$$
\widetilde \nabla_{e_j}V(k) = U(k) \(  \frac{\de \psi}{\de k_j}(k) +
\half \, [ U^*(k) {\de}_j U(k), \psi(k) ] \).
$$
Thus, one directly computes
\begin{equation}\label{Eq Cov deriv squared}
\normHS{\widetilde \nabla_{e_j} V }^2 = \normHS{\de_j\psi}^2 + \tr\(
U^* \de_j U \, \(\psi \, \de_j\psi^* - \de_j\psi \, \psi^*\)\) +
\frac{1}{4} \, \normHS{[U^* \, \de_jU, \, \psi]}^2.
\end{equation}

Since $\SU$ is a Lie group with bi-invariant metric
 $g(A,B)= \re \tr(A^*B)$ for $A, B \in \su$, the curvature tensor $R$ can
 be written in terms of Lie brackets by the \emph{Cartan formula}
\begin{equation}
\label{cartan} R(A,B,A,B)=\frac14 \tr \left( [[A,B],A]^*B \right) =
\frac14 \, \normHS{[A,B]}^2  \, .
\end{equation}
Thus, by left invariance, one has
$$
R(\de_jU, V, \de_jU, V) = R(U^* \, \de_jU, \psi, U^* \, \de_j U,
\psi) = \frac14 \, \normHS{ [ U^* \de_jU, \psi]}^2
$$
which cancels exactly the last term in (\ref{Eq Cov deriv squared}),
yielding the claim.
\end{proof}

\begin{lemma}\label{Lemma Explicit computation}
With the definitions above, one has
\begin{equation}\label{Eq explicit}
\sum_ j \sum_{\phi \in \mathrm{ONB}} \left\{ \normHS{\widetilde
\nabla_{e_j} \transpose{\tilde{\phi}}}^2 - R( \de_j U, \transpose{\tilde{\phi}},
\de_j U, \transpose{\tilde{\phi}}) \right\} = - \frac{1}{m} \sum_j
\normHS{\de_j U }^2.
\end{equation}

\end{lemma}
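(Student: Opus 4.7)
The plan is to apply Lemma~\ref{Lemma Geometric term} to each $V = \transpose{\tilde\phi}$, writing $\transpose{\tilde\phi} = U \psi^\phi$ with $\psi^\phi := U^* \transpose{\tilde\phi}$, and to exploit the factorization $\psi^\phi = \Pi(U^* \phi)$, where $\Pi: M_m(\C) \to \mathfrak{su}(m)$ is the orthogonal projection $\Pi(\eta) = \tfrac{1}{2}(\eta - \eta^*) - \tfrac{1}{2m}\tr(\eta - \eta^*)\Id$. At a fixed point $k_0$, I would then reparametrize $\phi = U(k_0)\eta$: since left multiplication by the unitary $U(k_0)$ is a real isometry of $M_m(\C)$, $\eta$ again ranges over an ONB of $M_m(\C)$, and at $k_0$ one has $\psi^\phi = \Pi(\eta)$ and $\de_j \psi^\phi = -\Pi(A_j \eta)$ (using $\de_j U^* \cdot U(k_0) = -A_j$), where $A_j := U^*\de_j U$. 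The crucial tracelessness $A_j \in \mathfrak{su}(m)$ is guaranteed by Lemma~\ref{Lemma SU}.

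Decomposing $\eta = \eta_A + \eta_H$ into its anti-Hermitian and Hermitian parts (orthogonal with respect to the real Hilbert--Schmidt product), $\Pi$ annihilates $\mathrm{Herm}(m)$, and direct computation yields $\Pi(A_j \eta_A) = \tfrac{1}{2}[A_j, \eta_A]$ and $\Pi(A_j \eta_H) = \tfrac{1}{2}\{A_j, \eta_H\} - \tfrac{1}{m}\tr(A_j \eta_H)\Id$. The task then reduces to two sums over a suitable ONB. For $\sum_\eta \normHS{\Pi(A_j \eta)}^2$, the contribution from an ONB of $\mathfrak{u}(m)$ equals $\tfrac{1}{4}\sum_{\eta_A}\normHS{[A_j,\eta_A]}^2 = \tfrac{m}{2}\normHS{A_j}^2$ by the Killing-form identity $K(X,X) = 2m\,\tr(X^2)$ on $\mathfrak{su}(m)$. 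The contribution from an ONB of $\mathrm{Herm}(m)$ built from the matrix units $E_{ab}$ can be evaluated directly using $\sum_\beta (\eta_H^\beta)^2 = m\,\Id$ and $\sum_\beta \tr(A_j\eta_H^\beta A_j \eta_H^\beta) = (\tr A_j)^2 = 0$, yielding $\tfrac{m^2-2}{2m}\normHS{A_j}^2$ (the tracelessness of $A_j$ is essential), so in total $\tfrac{m^2-1}{m}\normHS{A_j}^2$. For $\sum_\eta \tr(A_j[\Pi(\eta), \Pi(A_j \eta)])$ only the traceless anti-Hermitian part $\xi \in \mathfrak{su}(m)$ contributes, and expanding the double commutator reduces the sum to $\sum_\xi\tr((A_j\xi)^2) - \sum_\xi \tr(A_j^2 \xi^2)$, which I would evaluate using the Casimir-type identities
\[
\sum_\xi \xi^2 = -\tfrac{m^2-1}{m}\Id, \qquad \sum_\xi \xi\, X\, \xi = \tfrac{1}{m}\,X \text{ for } X \in \mathfrak{su}(m),
\]
producing $m\tr(A_j^2) = -m\normHS{A_j}^2$.

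Combining the two sums via the second-variation identity of Lemma~\ref{Lemma Geometric term} and using the unitary invariance $\normHS{A_j} = \normHS{\de_j U}$ gives, after cancellation, $\bigl(\tfrac{m^2-1}{m} - m\bigr)\normHS{\de_j U}^2 = -\tfrac{1}{m}\normHS{\de_j U}^2$ per direction $j$, which summed over $j$ yields \eqref{Eq explicit}. The main technical hurdle is the identity $\sum_\xi \xi X \xi = \tfrac{1}{m}\,X$ for $X \in \mathfrak{su}(m)$; it follows from the elementary decomposition $\sum_\xi \xi X \xi = \bigl(\sum_\xi \xi^2\bigr) X - \tfrac{1}{2}\sum_\xi [\xi, [\xi, X]]$ together with the known eigenvalues of the quadratic Casimir on the fundamental ($-\tfrac{m^2-1}{m}$) and adjoint ($-2m$) representations of $\mathfrak{su}(m)$, the latter being (twice) the dual Coxeter number in our normalization.
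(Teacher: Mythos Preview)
Your argument is correct and takes a genuinely different route from the paper. The paper works directly with $\hat\psi=\tfrac{1}{2}(U^*\phi-\phi^*U)$: it isolates the trace part $\tfrac{1}{m}|\tr(\de_j\hat\psi)|^2$ (summed over $\phi$ by Parseval to give the final $-\tfrac{1}{m}\normHS{\de_jU}^2$), then shows the remaining combination vanishes by expanding in the explicit matrix-unit basis $\{E_{\alpha\beta},\,iE_{\alpha\beta}\}$ and using pairwise cancellation of the terms containing only $\phi$ or only $\phi^*$; the sole survivor is $|\tr(U^*\de_jU)|^2=0$ via Lemma~\ref{Lemma SU}. Your approach instead gauge-rotates pointwise ($\phi=U(k_0)\eta$), splits $\eta$ into Hermitian and anti-Hermitian parts, and evaluates the two sums through the Casimir identities $\sum_\xi\xi^2=-\tfrac{m^2-1}{m}\Id$ and $\sum_\xi\xi X\xi=\tfrac{1}{m}X$ on $\mathfrak{su}(m)$. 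This is more structural and makes the Lie-theoretic content transparent; the paper's computation is more elementary but needs careful bookkeeping. Both methods invoke the tracelessness of $A_j$ (Lemma~\ref{Lemma SU}) at the decisive moment.

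One caveat on the final combination. As \emph{stated}, Lemma~\ref{Lemma Geometric term} has a $+\tr(U^*\de_jU[\psi,\de_j\psi])$, but its own derivation (from \eqref{Eq Cov deriv squared}, where $\psi\,\de_j\psi^*-\de_j\psi\,\psi^*=-[\psi,\de_j\psi]$) actually produces a \emph{minus} sign, and the paper's proof of Lemma~\ref{Lemma Explicit computation} tacitly uses the minus version in the ``Therefore'' line. Your combination $\tfrac{m^2-1}{m}-m$ is exactly what one gets with the corrected sign together with the extra minus from $\de_j\psi=-\Pi(A_j\eta)$. If one took the lemma literally as written, the two minuses would not both appear and one would obtain $\tfrac{m^2-1}{m}+m$; so it is worth stating explicitly which form of Lemma~\ref{Lemma Geometric term} you are invoking.
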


\begin{proof} By Lemma \ref{Lemma Geometric term} applied to $V(k) =
\transpose{\tilde{\phi}}(k)$, one immediately gets
$$
\normHS{\widetilde \nabla_{e_j} \transpose{\tilde{\phi}}}^2 - R( \de_j U,
\transpose{\tilde{\phi}}, \de_j U, \transpose{\tilde{\phi}}) = \normHS{\de_j \psi}^2
+ \tr\left( U^* \de_j U \, [\psi, \, \de_j\psi] \right)
$$
where $\psi(k) = U(k)^* \transpose{\tilde{\phi}}(k)$. By setting
$$
\hat \psi = \half (U^* \phi - \phi^* U), \qquad 
\mbox{\ie} \psi = \hat \psi - \frac{1}{m}  \tr(\hat \psi) \, \Id,
$$ one obtains $ \de_j \psi = \de_j \hat\psi - \frac{1}{m} \tr(\de_j \hat \psi) \, \Id$.
Since $\de_j\psi$ and $\Id$  are orthogonal in $M_m(\C)$,
$$
\normHS{\de_j \psi}^2  =  \normHS{\de_j \hat \psi}^2 - \frac{1}{m}
|\tr(\de_j \hat \psi)|^2.
$$

We first prove that $$\sum_{\phi \in \mathrm{ONB}} \frac{1}{m} \, |
\tr(\de_j \hat \psi)|^2 =  \frac{1}{m} \, \normHS{\de_j U}^2. $$
Indeed, by the Parseval lemma
\begin{eqnarray*}
\sum_{\phi \in \mathrm{ONB}} | \tr(\de_j \hat \psi)|^2 &=&
\sum_{\phi \in \mathrm{ONB}}  \frac{1}{4} \left| \tr\big(\de_j
U^* \phi - \phi^* \de_j U \big) \right|^2  \\
&=& \sum_{\phi \in \mathrm{ONB}} \frac{1}{4}  \left|  2 \re \tr\(
\phi^*
(-i \de_j U)\) \right|^2 \\
&=& \sum_{\phi \in \mathrm{ONB}}  |\inner{\phi}{-i \de_j U}|^2 =
\normHS{\de_j U}^2.
\end{eqnarray*}

The sum of the remaining terms vanishes, after the summation over an
ON basis. Indeed, since $[\psi, \,\de_j \psi] = [\hat \psi, \,\de_j
\hat \psi]$, one has
\begin{equation*}
\tr \left( U^* \de_j U \, [\psi, \, \de_j\psi] \right)
= \normHS{\de_j \hat \psi}^2 - \frac{1}{4} \tr \left\{ (U^* \de_jU
\phi^* U +  U^* \phi U^* \de_jU) (\de_jU^* \phi - \phi^* \de_j U)
\right\}.
\end{equation*}
Therefore,
\begin{eqnarray} \label{Quadratic in phi}
\nonumber  &&\sum_{\phi \in \mathrm{ONB}} \left \{ \normHS{\de_j
\hat \psi}^2 - \tr \left( U^* \de_j U \, [\psi, \, \de_j\psi]
\right)
\right \} \\
 &=& \sum_{\phi \in \mathrm{ONB}} \frac{1}{4} \tr \left( 2 \, U^*
\de_jU \phi^* U \de_jU^* \phi - U^* \de_jU \phi^* U \phi^* \de_jU +
U^* \phi U^* \de_jU \de_jU^* \phi \right).
\end{eqnarray}
We sum over $\phi$ in the canonical basis $\{ \Phi_{\alpha \beta},
\widetilde \Phi_{\alpha \beta} \}$,  for $\alpha, \beta \inset{1,
\ldots, m}$, where $$ \big[\Phi_{\alpha \beta} \big]_{ab}= \delta_{a
\alpha}\delta_{b \beta} \qquad \mbox{ and } \qquad 
\widetilde \Phi_{\alpha \beta} = i \Phi_{\alpha \beta}.
$$
Every term in (\ref{Quadratic in phi}) in which only $\phi$ (resp.
only $\phi^*$) appears, provides a null contribution to the sum over
$\phi$,  since the term containing $\Phi_{\alpha \beta} $ cancels
the one containing $\widetilde \Phi_{\alpha \beta}$. Therefore,
\begin{eqnarray*}
\sum_{\phi \in \mathrm{ONB}} \left \{ \normHS{\de_j \hat \psi}^2 -
\tr \left( U^* \de_j U \, [\psi, \, \de_j\psi] \right) \right \} &=&
\half \sum_{\alpha, \beta} 2 \tr\( U^* \de_jU \Phi^*_{\alpha,
\beta}  U \de_jU^* \Phi_{\alpha, \beta} \) \\
& = & | \tr(U^* \de_jU)|^2 = 0,
\end{eqnarray*}
where we used that $U$ takes values in $\SU$.
\end{proof}

By the previous lemmas, we obtain the following conclusion.

\begin{proposition}[Stability inequality]
 \label{Prop Stability inequality}
Let $U \in C^\infty(\Omega^{\prime}, \SU)$ be a local minimizer of
(\ref{Dirichletenergy}) in $\Omega^{\prime}$. Then, for every
$\Omega \subset \subset \Omega^{\prime}$ and every $\eta \in
C^{\infty}_{0}(\Omega, \R)$ one has
\begin{equation}\label{Stability ineq}
    \int_{\Omega} \, \eta^2 \, \| U^{-1} dU\|^2   \leq m (m^2
    -1) \int_{\Omega} |\nabla \eta|^2.
\end{equation}
\end{proposition}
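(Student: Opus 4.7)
The plan is to combine the second-variation identity of Lemma \ref{Lemma Energy Decomposition} with the explicit curvature computation of Lemma \ref{Lemma Explicit computation}, invoking local minimality only through the fact that the sum over an orthonormal basis of nonnegative second variations is still nonnegative.

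First I would check that, for every $\phi \in M_m(\C)$ and every $\eta \in C^{\infty}_{0}(\Omega,\R)$, the one-parameter family
\[
U^{\phi,\eta}_{\varepsilon}(k)=U(k)\,\exp\!\bigl(\varepsilon\,\eta(k)\,U^{*}(k)\,\transpose{\tilde{\phi}}(k)\bigr)
\]
is an admissible variation of $U$ compactly supported in $\Omega$. Admissibility is precisely the reason one first projects $\phi$ to $T_{U}\SU$ via $P_{U}$: the field $\eta\,U^{*}\transpose{\tilde{\phi}}$ takes values pointwise in $\su$, so $U^{\phi,\eta}_{\varepsilon}$ stays in $\SU$, and it agrees with $U$ outside $\supp\eta\subset\Omega$. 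Local minimality of $U$ then gives $\frac{d^{2}}{d\varepsilon^{2}}\bigl|_{0}E(U^{\phi,\eta}_{\varepsilon};\Omega)\geq 0$, and this inequality is preserved upon summing over $\phi$ ranging in an orthonormal basis of $M_{m}(\C)$.

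Next I would plug this summed inequality into the decomposition of Lemma \ref{Lemma Energy Decomposition}, which rewrites the left-hand side as $(m^{2}-1)\int_{\Omega}|\nabla\eta|^{2}+\int_{\Omega}\eta^{2}\,S$, with $S$ the density $\sum_{j}\sum_{\phi}\{\normHS{\widetilde{\nabla}_{e_{j}}\transpose{\tilde{\phi}}}^{2}-R(\de_{j}U,\transpose{\tilde{\phi}},\de_{j}U,\transpose{\tilde{\phi}})\}$. By Lemma \ref{Lemma Explicit computation}, $S=-\frac{1}{m}\sum_{j}\normHS{\de_{j}U}^{2}$. Since $U$ takes values in $\SU$, left multiplication by $U^{*}$ is an isometry of $(M_{m}(\C),\normHS{\cdot})$, so $\normHS{\de_{j}U}=\normHS{U^{-1}\de_{j}U}$ and hence $\sum_{j}\normHS{\de_{j}U}^{2}=\|U^{-1}dU\|^{2}$ pointwise. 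Substituting and rearranging yields exactly \eqref{Stability ineq}, with the explicit constant $m(m^{2}-1)$.

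I do not foresee any real obstacle: the geometric lifting has already been done in the preceding two lemmas (the averaging-over-$\phi$ trick of Lemma \ref{Lemma Energy Decomposition} and the $\SU$-specific curvature cancellations of Lemma \ref{Lemma Explicit computation}), so the present proposition is essentially their corollary. The only point that deserves attention is the admissibility of the perturbations $U^{\phi,\eta}_{\varepsilon}$, which was tacit in the previous subsection but must be made explicit here because we are invoking the second-order minimality condition rather than merely the Euler--Lagrange equation.
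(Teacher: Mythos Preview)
Your proposal is correct and follows essentially the same route as the paper: use minimality to get nonnegativity of each second variation, sum over an orthonormal basis of $M_m(\C)$, apply Lemma~\ref{Lemma Energy Decomposition} and Lemma~\ref{Lemma Explicit computation}, and rearrange using $\normHS{\de_j U}=\normHS{U^{-1}\de_j U}$. The only addition is your explicit check of admissibility of the variations $U^{\phi,\eta}_\varepsilon$, which the paper leaves implicit.
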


\begin{proof} We consider the variation $U_{\epsi}^{\phi,\eta}$
corresponding to the variational vector field $V(k)= \eta(k)
\transpose{\tilde{\phi}}(k)$; by minimality one gets
$$
\left. \frac{d^2}{d\varepsilon^2} \right\vert_{\varepsilon=0} \,
E(U^{\phi, \eta}_\varepsilon;\Omega) \geq 0.
$$

By summing over $\phi$ in an ONB of $M_m(\C)$, and taking into
account Lemmas \ref{Lemma Energy Decomposition}, \ref{Lemma
Geometric term} and \ref{Lemma Explicit computation} one obtains
$$
0 \leq \sum_{\phi}  \left. \frac{d^2}{d\varepsilon^2}
\right\vert_{\varepsilon=0} \, E(U^{\phi, \eta}_\varepsilon;\Omega)
= (m^2 -1)  \int_{\Omega} |\nabla \eta|^2 - \frac{1}{m}
\int_{\Omega} \, \eta^2 \, \sum_j \normHS{\de_j U}^2,
$$
which by left invariance proves the claim.
\end{proof}


\subsection{Quantization of energy and constancy of tangent maps.}
We aim to apply the stability inequality \eqref{Stability ineq} to
degree-zero homogeneous minimizing harmonic maps (the so-called
tangent maps which appear as blow-up limits of a given minimizer
when rescaled around a given point, as in Proposition
\ref{compactness}), in order to show that they are constant.

To prove this property, we restrict ourselves to the case
$\Omega^{\prime}= \mathbb{R}^3 \setminus \{0\}$ and we assume $U$ to
be a degree-zero homogeneous harmonic map (so that the determinant
will be constant in view of Lemma \ref{Lemma SU}), i.e.  we suppose
now  that  $U(k)=\omega \left({k}/{|k|}\right)$ for some map $\omega
: S^2 \to \mathcal{U}(m)$. If $U\in W^{1,2}_{\rm loc} (\mathbb{R}^3;
\, \mathcal{U}(m))$ is a local minimizer of \eqref{Dirichletenergy},
then $\omega$ is a \emph{smooth} harmonic map $\omega \in
C^\infty(S^2; \mathcal{U}(m))$ (actually $\omega \in C^\infty(S^2;
\, \mathcal{SU}(m))$ up to multiplication by a constant unitary
matrix, in view of Lemma \ref{Lemma SU}). Indeed, for $m\geq 2$,
since $U$ is weakly harmonic and degree-zero homogeneous, then
$\omega$ is also weakly harmonic, \ie it is a critical point of the
energy functional
\begin{equation}
\label{tanenergy} \mathcal{E}(\omega)= \frac12 \int_{S^2} \left\|
\omega^{-1} d \omega \right\|^2 d\Vol \,
\end{equation}
defined on $W^{1,2}(S^2;\mathcal{U}(m))$. Since any critical point
of \eqref{tanenergy} is $C^\infty$-smooth \cite[Chapter 3]{LW}, one
concludes that $\omega$ is $C^\infty$-smooth, as claimed.

Now we take degree-zero homogeneity of $U$ into account. By
localizing the inequality \eqref{Stability ineq} on $S^2$, \ie by
taking $\eta(k)=\rho(|k|)  \psi\left({k}/{|k|}\right)$ with $\rho
\in C^{\infty}_0((0,\infty))$  and $\psi \in C^\infty(S^2)$ and
optimizing in $\rho$  (see \cite[Lemma 1.3]{SU2}), one obtains

\begin{equation}
\label{stabilitysphere} \int_{S^2}  |\psi|^2  \frac{1}{2} \left\|
\omega^{-1} d \omega \right\|^2 d\Vol \leq \frac{1}{2} m (m^2
-1) \int_{S^2} \left( |\nabla \psi|^2+\frac{1}{4} |\psi|^2 \right)
d\Vol \, .
\end{equation}

In view of the definition \eqref{tanenergy}, we set $\psi\equiv 1$
in the previous inequality and we derive the following proposition.

\begin{proposition}
\label{boundensphere} Let $m\geq 2$ and  $\omega \in C^\infty(S^2;
\mathcal{U}(m))$ an harmonic map.  If  $U(k)=\omega \left(
\frac{k}{|k|}\right)$ is a local minimizer of
\eqref{Dirichletenergy}, then $\displaystyle{ \mathcal{E}(\omega)
\leq \frac{\pi}{2}} \, m (m^2 -1)$.
\end{proposition}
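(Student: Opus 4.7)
The plan is to apply the localized stability inequality \eqref{stabilitysphere} with the simplest admissible choice of test function, namely the constant $\psi \equiv 1$ on $S^2$. Since $S^2$ has no boundary, this constant function lies in $C^\infty(S^2)$ and is therefore an admissible competitor in \eqref{stabilitysphere}; in particular, there is no issue with the cutoff procedure that produced \eqref{stabilitysphere} from \eqref{Stability ineq}, because after the radial optimization one is left with a pointwise inequality on $S^2$ in which any smooth $\psi$ may be inserted.

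With $\psi \equiv 1$, the left-hand side of \eqref{stabilitysphere} becomes $\int_{S^2}\tfrac{1}{2}\|\omega^{-1}d\omega\|^2\,d\Vol$, which is exactly $\mathcal{E}(\omega)$ by the definition \eqref{tanenergy}. On the right-hand side, the gradient term $|\nabla\psi|^2$ vanishes identically, so only the zeroth-order term $\tfrac{1}{4}|\psi|^2=\tfrac{1}{4}$ survives. Therefore the right-hand side reduces to $\tfrac{1}{2}m(m^2-1)\cdot\tfrac{1}{4}\cdot\Vol(S^2)$.

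Using $\Vol(S^2)=4\pi$, this yields $\tfrac{1}{2}m(m^2-1)\cdot\pi=\tfrac{\pi}{2}m(m^2-1)$, so combining the two sides produces the claimed bound $\mathcal{E}(\omega)\leq \tfrac{\pi}{2}m(m^2-1)$. Since Lemma \ref{Lemma SU} guarantees that $\omega$ takes values in $\mathcal{SU}(m)$ up to multiplication by a constant unitary matrix (which does not affect $\omega^{-1}d\omega$), the stability inequality of Proposition \ref{Prop Stability inequality}, which was established for $\mathcal{SU}(m)$-valued local minimizers, applies without loss of generality.

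There is essentially no obstacle beyond bookkeeping: the whole content is already in the localized stability estimate \eqref{stabilitysphere}, and the proposition is the direct specialization to constant test functions. The only point worth double-checking is the constant in the Volume computation and the factor $\tfrac{1}{4}$ coming from the radial optimization recorded in the derivation of \eqref{stabilitysphere}, both of which are standard.
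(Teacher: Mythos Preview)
Your proof is correct and follows exactly the paper's approach: the paper also simply sets $\psi\equiv 1$ in \eqref{stabilitysphere} and reads off the bound using $\Vol(S^2)=4\pi$. Your additional remark invoking Lemma~\ref{Lemma SU} to justify applying the $\mathcal{SU}(m)$-stability inequality is a welcome clarification that the paper leaves implicit.
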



In order to proceed, we recall that a very precise description of
the space of all the harmonic maps $\omega \in
C^\infty(S^2;\,\mathcal{U}(m))$ was given in \cite{U}, proving the
so-called factorization into unitons. While the latter paper
exploited algebraic techniques, a slightly different factorization
result was obtained in \cite{V}, based on an energy induction
argument.

\begin{proposition}[\cite{V}, Corollary $7^\prime$]
\label{valliUhlenbeck} Let $\omega:S^2\to \mathcal{U}(m)$ be a
nonconstant harmonic map. Then there exist a natural number $l\geq
1$ and  a canonical factorization
\begin{equation}
\label{VUfactorization} \omega=\omega_0 (p_1-p_1^\perp) \cdots
(p_l-p_l^\perp) \, , \quad 8 l\pi \leq \mathcal{E}(\omega) \, ,
\end{equation}
where $\omega_0 \in \mathcal{U}(m)$ and each $p_j$ is the hermitian
projection onto a sub-bundle of $S^2 \times \mathbb{C}^m$
holomorphic w.r.to the complex structure induced by the operator
$\bar{\partial}+ \bar{\partial} p_1+ \ldots \bar{\partial} p_{j-1}$.
\end{proposition}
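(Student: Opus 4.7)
The plan is to proceed by an energy-induction argument: given a nonconstant smooth harmonic map $\omega \colon S^2 \to \mathcal{U}(m)$, I would construct a Hermitian projection $p$ onto a sub-bundle $V \subset S^2 \times \mathbb{C}^m$ which is holomorphic in an appropriate sense, such that $\tilde\omega := \omega\,(p - p^\perp)$ is still harmonic and $\mathcal{E}(\omega) - \mathcal{E}(\tilde\omega) \geq 8\pi$. Iterating this replacement finitely many times (at most $\lfloor\mathcal{E}(\omega)/(8\pi)\rfloor$) terminates in a constant map $\omega_0$, producing the factorization $\omega = \omega_0(p_1 - p_1^\perp)\cdots(p_l - p_l^\perp)$ with $8l\pi \leq \mathcal{E}(\omega)$.

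First, I would translate the harmonic map equation into a complex-analytic statement. Using the conformal coordinate $z$ on $S^2 \cong \mathbb{CP}^1$, decompose $\omega^{-1}d\omega = A_z\,dz + A_{\bar z}\,d\bar z$ with $A_{\bar z} = -A_z^\ast$. A short calculation shows that harmonicity of $\omega$ is equivalent to the identity $\bar\partial A_z + [A_{\bar z}, A_z] = 0$, i.e.\ that $A_z$ is holomorphic with respect to the Koszul--Malgrange holomorphic structure $\bar D := \bar\partial + [A_{\bar z},\,\cdot\,]$ on the trivial bundle $\underline{\mathbb{C}}^m \to S^2$. This is the conceptual pivot that turns the nonlinear PDE into a problem in holomorphic geometry on $S^2$.

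Next, following Uhlenbeck's flag construction, I would take $V$ to be the smallest $\bar D$-holomorphic sub-bundle of $\underline{\mathbb{C}}^m$ containing the image of $A_z$; an extension argument across the finitely many points where $A_z$ drops rank shows that $V$ is a genuine holomorphic sub-bundle. The Hermitian projection $p$ onto $V$ then satisfies the uniton conditions $(1-p)\bar\partial p = 0$ and $(1-p)A_z p = 0$, and a direct computation of the transformed Maurer--Cartan form shows that $\tilde\omega = \omega(p - p^\perp)$ is again harmonic, with its new $(0,1)$-component given by the shifted operator $\bar\partial + \bar\partial p + [A_{\bar z},\,\cdot\,]$. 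Iterating this procedure produces precisely the nested holomorphic structure appearing in the statement.

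Finally, I would quantize the energy jump by expanding $\int_{S^2}\|\tilde\omega^{-1}d\tilde\omega\|^2$ and isolating a topological contribution proportional to the degree of $V$. A Chern--Weil computation gives $\mathcal{E}(\omega) - \mathcal{E}(\tilde\omega) = -8\pi\langle c_1(V),[S^2]\rangle + (\text{a manifestly nonnegative bulk term})$; since $V$ is a proper nontrivial holomorphic sub-bundle of the trivial bundle on $S^2$, its degree is a strictly negative integer, yielding the $8\pi$ gap. The main obstacle I anticipate is the rigorous extension of $V$ across the zero locus of $A_z$ needed to realize it as a bona fide holomorphic sub-bundle, together with the termination argument: one must verify that if no uniton reduces the energy then $A_z \equiv 0$, hence $\omega$ is constant. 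Combining this with the finiteness of $\mathcal{E}(\omega)$ and the $8\pi$ quantization yields both the factorization and the energy estimate $8l\pi \leq \mathcal{E}(\omega)$.
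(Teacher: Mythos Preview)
The paper does not prove this proposition at all: it is quoted verbatim as Corollary~$7'$ of Valli \cite{V} and used as a black box, together with the companion quantization result (Proposition~\ref{quantchern}). So there is no ``paper's own proof'' to compare against.

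Your sketch is essentially the Uhlenbeck--Valli strategy and is on the right track. A few points worth tightening. First, the sub-bundle you want in Valli's \emph{canonical} factorization is not quite ``the smallest $\bar D$-holomorphic sub-bundle containing $\mathrm{Im}\,A_z$'' (that is closer to Uhlenbeck's original uniton); Valli's choice is made so that each step drops the energy by a \emph{positive} multiple of $8\pi$ and yields the specific nested holomorphic structures $\bar\partial + \bar\partial p_1 + \cdots + \bar\partial p_{j-1}$ appearing in the statement. Second, in your Chern--Weil step you should be precise about the sign convention and about why the ``bulk term'' is nonnegative; in fact the energy jump is exactly $-8\pi\,c_1(V)$ (no extra bulk), and the work is in showing $c_1(V)<0$ for the chosen $V$. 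Third, the extension across the zeros of $A_z$ and the termination argument (if no uniton lowers the energy then $A_z\equiv 0$) are indeed the technical crux, and you have identified them correctly. With these adjustments your outline would reproduce Valli's argument.
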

Here we regard each projection $p_j$ as a map in a complex
Grassmannian $G_{k,m}(\mathbb{C})$ , $k= \mathrm{Rank}\, p_j$, and
each factor $p_j-p_j^\perp$ in \eqref{VUfactorization} as a
corresponding map into $\mathcal{U}(m)$, through the isometric
embedding $G_{k,m}(\mathbb{C}) \hookrightarrow \mathcal{U}(m)$ (the
\emph{Cartan embedding}) defined by assigning to each subspace a
unitary operator corresponding the reflection w.r.to the subspace.

The main result in \cite{V} shows that each factor in
\eqref{VUfactorization} changes the energy by an integer multiple of
$8\pi$. The consequence which will be relevant to us is the
following quantization property for a critical point of
\eqref{tanenergy}.

\begin{proposition}[\cite{V}, Corollary 8]
\label{quantchern} The energy $\mathcal{E}(\omega)$ of any harmonic
map $\omega:S^2\to \mathcal{U}(m)$ is an integer multiple of $8\pi$.
In particular, if $\omega$ is noncostant, then $\mathcal{E}(\omega) \geq
8 \pi$.
\end{proposition}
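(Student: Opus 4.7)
The plan is to apply the factorization of Proposition \ref{valliUhlenbeck} and argue that each ``uniton'' factor $(p_j - p_j^\perp)$ contributes an integer multiple of $8\pi$ to the energy. Set $\omega^{(0)} := \omega_0$ and $\omega^{(k)} := \omega_0 (p_1 - p_1^\perp) \cdots (p_k - p_k^\perp)$ for $1 \leq k \leq l$, so that $\omega^{(l)} = \omega$ while $\mathcal{E}(\omega^{(0)}) = 0$ since $\omega^{(0)}$ is constant. The crucial step is the energy-jump identity
\begin{equation*}
\mathcal{E}(\omega^{(k)}) - \mathcal{E}(\omega^{(k-1)}) \in 8\pi \, \mathbb{Z}, \qquad 1 \leq k \leq l,
\end{equation*}
which upon summation and telescoping yields $\mathcal{E}(\omega) \in 8\pi \, \mathbb{Z}$. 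The \emph{in particular} clause then follows at once from the lower bound $\mathcal{E}(\omega) \geq 8\pi l \geq 8\pi$ already furnished by Proposition \ref{valliUhlenbeck} whenever $\omega$ is nonconstant.

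To prove the energy-jump identity I would first compute $(\omega^{(k)})^{-1} d\omega^{(k)}$ in terms of $(\omega^{(k-1)})^{-1} d\omega^{(k-1)}$ and $p_k$ by means of $(p_k - p_k^\perp) = \mathbb{I} - 2 p_k^\perp$, obtaining an explicit formula for the new Maurer--Cartan form. I would then split the antihermitian-valued $1$-form $A := (\omega^{(k-1)})^{-1} d\omega^{(k-1)}$ into its $(1,0)$ and $(0,1)$ components $A', A''$ with respect to the complex structure on $S^2$, and expand the squared Hilbert--Schmidt norm of the new Maurer--Cartan form term by term. Using harmonicity of $\omega^{(k-1)}$ to cancel or integrate away the cross terms (typically by recognizing them as exact $2$-forms on $S^2$), one should reach an identity of the schematic form
\begin{equation*}
\mathcal{E}(\omega^{(k)}) - \mathcal{E}(\omega^{(k-1)}) \;=\; 8 \int_{S^2} \Bigl( \| \bar D_k p_k \|^2 - \| D_k p_k \|^2 \Bigr) \, d\Vol,
\end{equation*}
where $D_k$ and $\bar D_k$ are the $(1,0)$ and $(0,1)$ parts of the covariant derivative induced by the twisted $\bar\partial$-operator $\bar\partial + A''$ that appears in Proposition \ref{valliUhlenbeck}.

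The concluding step is to interpret this integral topologically. Since $V_k := \mathrm{ran}\, p_k$ is holomorphic with respect to the connection $\bar\partial + A''$, the expression $\bar D_k p_k$ is essentially the second fundamental form of $V_k$ inside the trivial bundle $S^2 \times \mathbb{C}^m$, while $D_k p_k$ is that of $V_k^\perp$. A Chern--Weil computation for holomorphic vector bundles on $S^2 \cong \mathbb{CP}^1$ then identifies the right-hand side above with $8\pi \cdot c_1(V_k)$, which is an integer because $\mathrm{Pic}(\mathbb{CP}^1) = \mathbb{Z}$. The main obstacle is precisely this last identification: extracting cleanly the first Chern class requires tracking carefully the twisted holomorphic structure dictated by Proposition \ref{valliUhlenbeck} and verifying that the resulting integrand represents the correct curvature form. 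Once this integrality is established, summing the jumps and invoking $\mathcal{E}(\omega^{(0)}) = 0$ completes the proof.
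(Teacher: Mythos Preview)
The paper does not actually prove this proposition: it is quoted verbatim as Corollary~8 of \cite{V}, and the only justification given is the sentence preceding it, namely that ``each factor in \eqref{VUfactorization} changes the energy by an integer multiple of $8\pi$''. Your proposal is therefore not competing with a proof in the paper but rather supplying the argument that the paper delegates to Valli, and your outline does follow Valli's strategy: telescope the energy along the uniton factorization and identify each jump with $8\pi$ times a first Chern number.

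One point in your sketch deserves attention. You invoke ``harmonicity of $\omega^{(k-1)}$'' to dispose of cross terms, but Proposition~\ref{valliUhlenbeck} as stated here does not assert that the intermediate products $\omega^{(k-1)}$ are themselves harmonic; this is true, and it is part of the uniton machinery (adding a uniton satisfying the stated holomorphicity condition to a harmonic map produces another harmonic map), but it is an additional ingredient from \cite{U,V} that you are silently importing. Apart from this, your identification of the energy jump with a Chern--Weil integral for the twisted holomorphic subbundle is the correct mechanism, and the ``in particular'' clause follows exactly as you say from the bound $8\pi l \leq \mathcal{E}(\omega)$ in Proposition~\ref{valliUhlenbeck}.
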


A straightforward consequence of Propositions \ref{boundensphere}
and \ref{quantchern} is that $\mathcal{E}(\omega)=0$, and hence
$\omega$ is constant (and in turn $U$ is constant) whenever
$\frac{\pi}{2} \, m (m^2 -1) < 8 \pi$, \ie for $m = 2$. As already
mentioned, this way we recover the regularity property for
minimizing harmonic maps into $S^3$ proved in \cite[Proposition
1]{SU2}, since $\mathcal{SU}(2) \equiv S^3(\sqrt{2})$.

The case $m =3$ requires additional care but the conclusion still holds, hence we have
 the following result.

\begin{corollary}
\label{liouville} Let $2\leq m\leq 3$ and let $U(k)=\omega \left(
\frac{k}{|k|}\right)$, $U\in W^{1,2}_{\rm loc}
(\mathbb{R}^3;\mathcal{U}(m))$ be a local minimizer of
\eqref{Dirichletenergy}.  Then $U$ is constant.
\end{corollary}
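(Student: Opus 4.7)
The plan is to combine Proposition \ref{boundensphere} (the stability upper bound on $\mathcal{E}(\omega)$) with Proposition \ref{quantchern} (the Valli energy quantization), and for $m=3$ to close the remaining gap using the Uhlenbeck--Valli factorization of Proposition \ref{valliUhlenbeck}. By Lemma \ref{Lemma SU}, after multiplying $U$ on the left by a suitable constant unitary matrix I may assume that $\omega$ takes values in $\mathcal{SU}(m)$, so that the stability inequality \eqref{Stability ineq} is available. Since $U$ is a local minimizer of \eqref{Dirichletenergy} and is degree-zero homogeneous, Proposition \ref{boundensphere} yields $\mathcal{E}(\omega)\leq\tfrac{\pi}{2}\,m(m^2-1)$, while Proposition \ref{quantchern} forces $\mathcal{E}(\omega)\in\{0\}\cup 8\pi\,\N$.

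For $m=2$ the upper bound is $3\pi<8\pi$, so quantization forces $\mathcal{E}(\omega)=0$, whence $\omega$ and $U$ are constant. For $m=3$ the upper bound reads $\mathcal{E}(\omega)\leq 12\pi$, leaving only the alternatives $\mathcal{E}(\omega)\in\{0,8\pi\}$, and the whole task becomes to rule out the borderline value $\mathcal{E}(\omega)=8\pi$.

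My plan for this borderline case is to argue by contradiction: if $\mathcal{E}(\omega)=8\pi$ then the inequality $8l\pi\leq\mathcal{E}(\omega)$ in \eqref{VUfactorization} forces $l=1$, so $\omega=\omega_0(p-p^\perp)$ with $p:S^2\to G_{k,3}(\C)$ a full non-constant holomorphic map and $k\in\{1,2\}$. Up to the duality $p\leftrightarrow p^\perp$ one may take $k=1$, so $p$ is a non-constant holomorphic map into $\mathbb{CP}^2$ of minimal positive degree, i.e.\ a linear embedding of a projective line. The main obstacle is to exhibit an explicit competitor contradicting the local minimality of $U$ on some ball $B_R\subset\R^3\setminus\{0\}$. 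My strategy is to use a Luckhaus-type interpolation (Lemma \ref{luckhaus}) in a thin annulus $B_R\setminus B_{R(1-\lambda)}$ joining $U|_{\partial B_R}$ to a well chosen constant $\omega_\star\in\mathcal{SU}(3)$ on $\partial B_{R(1-\lambda)}$, extending by $\omega_\star$ inside $B_{R(1-\lambda)}$. Since $E(U;B_R)=R\,\mathcal{E}(\omega)=8\pi R$ by the radial scaling of the Dirichlet integral on degree-zero homogeneous maps, and since \eqref{luck1} bounds the competitor's energy from above by $C\,R\sqrt{\mathcal{E}(\omega)}\,\norm{\omega-\omega_\star}_{L^2(S^2)}$ after the balanced choice $\lambda\sim\norm{\omega-\omega_\star}_{L^2(S^2)}/\sqrt{\mathcal{E}(\omega)}$, the explicit form of the linearly embedded uniton $p-p^\perp$ should yield the strict inequality needed for a contradiction (for instance by choosing $\omega_\star$ to minimize $\norm{\omega-\omega_\star}_{L^2(S^2)}$ over $\mathcal{SU}(3)$ and exploiting the specific geometry of the image $\omega(S^2)$). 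Once $\mathcal{E}(\omega)=8\pi$ is excluded, $\mathcal{E}(\omega)=0$ is forced, so $\omega$ and $U$ are constant, completing the proof.
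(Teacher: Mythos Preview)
Your argument for $m=2$ and your reduction of the $m=3$ case to the borderline alternative $\mathcal{E}(\omega)=8\pi$, together with the factorization $\omega=\omega_0(p-p^\perp)$ with $p:\mathbb{CP}^1\to\mathbb{CP}^2$ holomorphic of degree one, are correct and coincide with the paper. You even state the key geometric fact: $p$ is a linear embedding of a projective line.

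The gap is what you do next. Your Luckhaus competitor construction is not a proof: you never verify the inequality $C\sqrt{\mathcal{E}(\omega)}\,\|\omega-\omega_\star\|_{L^2(S^2)}<\mathcal{E}(\omega)$, and there is no reason it should hold. The constant $C$ in Lemma~\ref{luckhaus} is an uncontrolled absolute constant, and $\|\omega-\omega_\star\|_{L^2(S^2)}$ for the explicit uniton is of order one (indeed $\omega(S^2)$ is a round $2$-sphere sitting inside $\mathcal{SU}(3)$, far from any single point). You also need \eqref{luck2} to guarantee that the interpolant stays in the tubular neighborhood of $\mathcal{U}(m)$ so that the nearest-point projection applies; this imposes a \emph{lower} bound on $\lambda$ that competes with your ``balanced'' choice. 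The phrase ``should yield the strict inequality needed'' is exactly where the argument is missing.

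The paper's route is much simpler and uses precisely the observation you already made. Since $p$ has degree one, in homogeneous coordinates $p=[p_1,p_2,p_3]$ with three linear forms in two variables; these are linearly dependent, so after a unitary change of basis on $\mathbb{C}^3$ one may take $p_3\equiv 0$. Hence $p$ takes values in $\mathbb{CP}^1\subset\mathbb{CP}^2$, and (up to a further constant unitary) $\omega$ takes values in the block-diagonal copy $\mathcal{U}(2)\subset\mathcal{U}(3)$. But then $U$ is a $\mathcal{U}(2)$-valued local minimizer of the Dirichlet energy, and the already-established case $m=2$ forces $U$ constant --- the desired contradiction. You had the geometric ingredient in hand; the missing step is this reduction to $m=2$, not a quantitative competitor estimate.
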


\begin{proof}
As already recalled at the beginning of this subsection, $\omega \in
C^\infty (S^2; \mathcal{U}(m))$. Since the case $m=2$ follows
readily from Proposition \ref{boundensphere}, we assume $m=3$, and,
arguing by contradiction, we may assume $U \not \equiv
\mathrm{const}$. By Proposition \ref{boundensphere} we obtain
$\mathcal{E}(\omega)\leq 12 \pi$, hence $\mathcal{E}(\omega)=8\pi$
because of Proposition \ref{quantchern}. Going back to Proposition
\ref{valliUhlenbeck} we see that, up to a constant unitary matrix,
we may assume  that the product in \eqref{VUfactorization} contains
only one factor. More precisely, since $S^2=\mathbb{C}P^1$,
$G_{1,3}(\mathbb{C})=\mathbb{C}P^2$ and, up to isometry,
$G_{1,3}(\mathbb{C})=G_{2,3}(\mathbb{C})$, then $\omega=p-p^\perp$,
where $p:\mathbb{C}P^1 \to \mathbb{C}P^2$  is an holomorphic map
(equivalently, an holomorphic line bundle over $\mathbb{C}P^1$), and
$\mathcal{E}(\omega)=8\pi$. \newline Now, as in \cite[page 131]{V},
let $\tilde{\omega}$ be the K\"ahler form on $\mathbb{C}P^2$
normalized to be the positive generator in
$H^2(\mathbb{C}P^2;\mathbb{Z})$. Then, in homogeneous coordinates
$[z_1,z_2] \in \mathbb{C}P^1$, $p$ is a polinomial map of degreee
one, since $8\pi |\deg \, p|=|\int_{\mathbb{C}P^1} p^*
\tilde{\omega}|=8\pi |c_1 ( p )|=\mathcal{E}( \omega)=8\pi$, where
$c_1 ( p )$ is the first Chern number of the holomorphic bundle $p$.
As $p$ is nonconstant, $p(z_1,z_2)=[p_1(z_1,z_2), \, p_2(z_1,z_2),
\, p_3(z_1,z_2)]\in \mathbb{C}P^2$ for three suitable (not all
proportional) degree-one polynomials. Since the degree is one, they
are linearly dependent, hence, up to a linear (resp. unitary) change
of coordinates on $\mathbb{C}P^2$ (resp. on $\mathbb{C}^3$), we may
assume $p_3 \equiv 0$.  To summarize, we can regard $p:\mathbb{C}P^1
\to \mathbb{C}P^1 \subset \mathbb{C}P^2$ (the range of $p$ being the
set of horizontal lines in $\mathbb{C}^3$) and, up to a further
reflection in the third coordinate of $\mathbb{C}^3$, $\omega: S^2
\to \mathcal{U}(2) \subset \mathcal{U}(3)$ through the diagonal
embedding (the unitary operator being now the identity on the third
coordinate). As we are back to the case $m=2$, \ie $U$ is a
$\mathcal{U}(2)$-valued local minimizer, we conclude $U\equiv
\mathrm{const}$, and the contradiction concludes the proof.
\end{proof}

Combining the previous corollary with the well known regularity theory for harmonic maps in three dimension (see \cite{SU1}, \cite{S} and \cite{LW}) we have the following result which seems to be of independent interest.

\begin{theorem}
\label{RegUnitons} Let $2\leq m \leq 3$ and $\Omega \subset
\mathbb{R}^3$ an open set. If $U \in
W^{1,2}_{\rm loc} (\Omega;\mathcal{U}(m))$ is a local minimizer of
\eqref{Dirichletenergy} then $U$ is real-analytic.
\end{theorem}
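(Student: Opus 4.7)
The plan is to combine Corollary \ref{liouville} with the classical partial regularity theory of minimizing harmonic maps, turning the Liouville property into a smallness property of the scaled energy and then bootstrapping from continuity to analyticity.

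First, I would note that any local minimizer $U$ of \eqref{Dirichletenergy} is in particular a weakly harmonic map into the compact manifold $\mathcal{U}(m)$, so by the partial regularity theorem of Schoen--Uhlenbeck \cite{SU1} the singular set $\Sigma \subset \Omega$ of $U$ is closed and satisfies $\dim_{\mathcal{H}} \Sigma \leq d-3 = 0$; in particular $U \in C^{\infty}(\Omega \setminus \Sigma; \mathcal{U}(m))$. The goal is therefore to show $\Sigma = \emptyset$.

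Fix $k_0 \in \Omega$. The standard monotonicity formula for harmonic maps (the clean analogue of Proposition \ref{monotonicity}, with no Berry-connection error terms since \eqref{harmonicmap} has no inhomogeneous part) shows that the function $R \mapsto R^{2-d}\int_{B_R(k_0)}|\nabla U|^2$ is monotone nondecreasing, hence has a finite limit as $R \to 0^+$. Considering the rescalings $U_R(k) := U(k_0 + Rk)$ and repeating the argument of Proposition \ref{compactness} (which is actually simpler here, as the functional is just the Dirichlet integral), one obtains that, up to subsequences, $U_R$ converges strongly in $W^{1,2}_{\rm loc}(\mathbb{R}^3;\mathcal{U}(m))$ to a tangent map $U_0$, namely to a degree-zero homogeneous local minimizer of \eqref{Dirichletenergy} on $\mathbb{R}^3 \setminus \{0\}$. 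Since $2 \leq m \leq 3$, Corollary \ref{liouville} forces $U_0 \equiv \mathrm{const}$, whence by strong convergence
\begin{equation*}
\lim_{R \to 0^+} \frac{1}{R}\int_{B_R(k_0)} |\nabla U|^2 \,dk = \int_{B_1} |\nabla U_0|^2\, dk = 0.
\end{equation*}

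Next, an $\varepsilon$-regularity theorem for minimizing harmonic maps (see \cite{SU1}, \cite[Chap.\ 2]{S} or \cite{LW}; this is the unperturbed analogue of Proposition \ref{epsregularity}) provides a universal $\varepsilon_0 > 0$ such that if $R^{2-d}\int_{B_R(k_0)}|\nabla U|^2 \leq \varepsilon_0$ for some sufficiently small $R$, then $U$ is H\"older continuous, hence smooth, in a neighborhood of $k_0$. Combined with the previous step this gives $k_0 \notin \Sigma$, and since $k_0$ was arbitrary we conclude $\Sigma = \emptyset$ and $U \in C^{\infty}(\Omega; \mathcal{U}(m))$.

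Finally, to pass from smoothness to real-analyticity I would observe that \eqref{harmonicmap} can be rewritten as $\Delta U = \mathcal{F}(U, \nabla U)$ where $\mathcal{F}$ is a polynomial (hence real-analytic) map. Applying the classical results of Morrey \cite{Morrey} on real-analyticity of smooth solutions to nonlinear elliptic systems with real-analytic coefficients (exactly as in the concluding part of Proposition \ref{analitycity}), one obtains $U \in C^{\omega}(\Omega;\mathcal{U}(m))$, proving the theorem. The main step is the passage from the Liouville property to the vanishing of the blow-up energy, but this is precisely where Corollary \ref{liouville} is decisive; the remaining ingredients (monotonicity, blow-up compactness, $\varepsilon$-regularity, and analytic bootstrap) are standard and have in fact been carried out in the paper in the more delicate perturbed setting of Section~4.
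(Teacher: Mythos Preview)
Your proposal is correct and is precisely the argument the paper has in mind: the paper does not give a detailed proof of Theorem~\ref{RegUnitons} but simply states that it follows by combining Corollary~\ref{liouville} with the well-known regularity theory for minimizing harmonic maps in dimension three (citing \cite{SU1}, \cite{S}, \cite{LW}). Your write-up unpacks exactly this, via monotonicity, blow-up compactness, constancy of tangent maps from Corollary~\ref{liouville}, $\varepsilon$-regularity, and Morrey's analyticity bootstrap; the only superfluous step is invoking the dimension bound $\dim_{\mathcal{H}}\Sigma\leq 0$ at the outset, since you then show directly that every point is regular.
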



\end{document}